\newcommand{\bea}{\begin{eqnarray}}
\newcommand{\eea}{\end{eqnarray}}
\begin{document}

\title{On the Entropy Region of Gaussian Random Variables*\footnote{*Preliminary versions of this manuscript have appeared in \cite{hash2008} and \cite{allertonShHa}.   }}


%

\author{\authorblockN{Sormeh Shadbakht and Babak Hassibi\thanks{
This work was supported in part by the National Science Foundation under grants CCF-0729203, CNS-0932428 and CCF-1018927, by the Office of Naval Research under the MURI grant N00014-08-1-0747, and by Caltech's Lee Center for Advanced Networking.
}}\\
\authorblockA{Department of Electrical Engineering\\
California Institute of Technology\\
Pasadena, CA 91125\\
sormeh, hassibi@caltech.edu}}

\maketitle

\def\real{{\mathchoice%
{\hbox{\rm\setbox1=\hbox{I}\copy1\kern-.45\wd1 R}}
{\hbox{\rm\setbox1=\hbox{I}\copy1\kern-.45\wd1 R}}
{\hbox{\scriptsize\rm\setbox1=\hbox{I}\copy1\kern-.45\wd1 R}}
{\hbox{\scriptsize\rm\setbox1=\hbox{I}\copy1\kern-.45\wd1 R}}}}
\def\Zint{{\mathchoice{\setbox1=\hbox{\sf Z}\copy1\kern-.75\wd1\box1}
{\setbox1=\hbox{\sf Z}\copy1\kern-.75\wd1\box1}
{\setbox1=\hbox{\scriptsize\sf Z}\copy1\kern-.75\wd1\box1}
{\setbox1=\hbox{\scriptsize\sf Z}\copy1\kern-.75\wd1\box1}}}
\newcommand{\complex}{ \hbox{\rm C\kern-0.45em\rule[.07em]{.02em}{.58em}%
\kern 0.43em}}

\def\IC{{\bf C}}
\def\IN{{\bf N}}
\def\IZ{{\bf Z}}
\def\ID{{\bf D}}
\def\IT{{\bf T}}

\def\A{{\cal A}}
\def\B{{\cal B}}
\def\C{{\cal C}}
\def\D{{\cal D}}
\def\E{{\cal E}}
\def\F{{\cal F}}
\def\G{{\cal G}}
\def\I{{\cal I}}
\def\J{{\cal J}}
\def\K{{\cal K}}
\def\L{{\cal L}}
\def\M{{\cal M}}
\def\N{{\cal N}}
\def\O{{\cal O}}
\def\P{{\cal P}}
\def\Q{{\cal Q}}
\def\R{{\cal R}}
\def\S{{\cal S}}
\def\T{{\cal T}}
\def\U{{\cal U}}
\def\V{{\cal V}}
\def\W{{\cal W}}
\def\X{{\cal X}}
\def\Y{{\cal Y}}
\def\Z{{\cal Z}}

\newcommand{\ovr}{\overline}
\newcommand{\no}{\noindent}
\def\ker{{\rm ker}\,}
\def\dim{{\rm dim}\,}
\newcommand{\st}{\stackrel{J}{\approx}}
\newcommand{\stt}{\stackrel{I}{\approx}}
\newcommand{\be}{\begin{equation}}                         
\newcommand{\ee}{\end{equation}}
\newcommand{\beqr}{\begin{eqnarray}}
\newcommand{\eeqr}{\end{eqnarray}}
\newcommand{\beqrx}{\begin{eqnarray*}}
\newcommand{\eeqrx}{\end{eqnarray*}}
\newcommand{\ba}{\left[ \begin{array}}
\newcommand{\ea}{\\ \end{array} \right]}
\newcommand{\bi}{\begin{itemize}}
\newcommand{\ei}{\end{itemize}}
\newcommand{\lb}{\label}
\newcommand{\rf}{\ref}
\newcommand{\ov}{\bar}
\newcommand{\td}{(t-1)}
\newcommand{\xx}{t-1}
\newcommand{\tl}{\tilde}
\newcommand{\bb}{\cite}
\newtheorem{lemma}{Lemma}
\newtheorem{theorem}{Theorem}
\newtheorem{corollary}{Corollary}
\newtheorem{problem}{Problem}
\newtheorem{proposition}{Proposition}
\newtheorem{definition}{Definition}
\newtheorem{fact}{Fact}
\newtheorem{solution}{Solution}
\newtheorem{algorithm}{Algorithm}
\newtheorem{example}{Example}
\newtheorem{result}{Result}
\newtheorem{conjecture}{Conjecture}
\newtheorem{notation}{Notation}
\newcommand{\ad}{{\rm \;\;\;\; and \;\;\;\;\;}}                         
\newcommand{\qd}{\hfill{\qed}}
\def\qed{{\ \vrule width 2.5mm height 2.5mm \smallskip}}

\begin{abstract}
Given $n$ (discrete or continuous) random variables $X_i$, the
$(2^n-1)$-dimensional vector obtained by evaluating the joint entropy of
all non-empty subsets of $\{X_1,\ldots,X_n\}$ is called an entropic
vector. Determining the region of entropic vectors is an important
open problem with many applications in information theory. Recently, it has been shown that the
entropy regions for discrete and continuous random variables, though
different, can be determined from one another. An important class of
continuous random variables are those that are vector-valued and
jointly Gaussian. It is known that Gaussian random variables violate
the Ingleton bound, which many random variables such as those obtained
from linear codes over finite fields do satisfy, and they also achieve
certain non-Shannon type inequalities. 
In this paper we give a full characterization of the {\em convex cone} of the entropy region of three jointly Gaussian
vector-valued random variables and prove that it is the same as the convex cone of three {\it scalar-valued} Gaussian random variables and further that it yields the entire entropy region of 3 {\it arbitrary} random variables. 
We further determine the actual entropy region of 3 vector-valued jointly Gaussian random variables through a conjecture. 
For $n\geq 4$ number of random variables, we point out a set of $2^n-1-\frac{n(n+1)}{2}$ minimal necessary and sufficient conditions that $2^n-1$ numbers must satisfy in order to correspond to the entropy vector of $n$ {\it scalar} jointly Gaussian random variables.
This improves on a result of Holtz and Sturmfels which gave a nonminimal set of conditions.
These constraints are related to Cayley's hyperdeterminant and hence with an eye towards characterizing the entropy region of jointly Gaussian random variables, we also present some new results in this area. We obtain a new (determinant) formula for the $2\times 2\times 2$ hyperdeterminant and we also give a new (transparent) proof of the fact
that the principal minors of an $n\times n$ symmetric matrix satisfy the ${2\times 2\times \ldots \times 2}$ (up to $n$ times) hyperdeterminant relations.
\end{abstract}

\section{Introduction} \label{Gaussian_sec:intro} 
Obtaining the capacity region of information networks has long been an
important open problem. It turns out that there is a fundamental
connection between the entropy region of a number of random variables
and the capacity region of networks \cite{hashITW07} \cite{Yan-capacityregion}. However
determining the entropy region has proved to be an extremely difficult
problem and there have been different approaches towards
characterizing it. While most of the effort has been towards obtaining
outer bounds for the entropy region by determining valid information
inequalities \cite{framework, ma07,non-shannon, six-new, group,
  quasi-uniform, makarychev, zhang-new}
some have focused on innerbounds \cite{hash2007, characterization,
  conditional-independence} which may prove to be more useful since they
yield achievable regions.

Let $X_1, \cdots ,X_n$ be $n$ jointly distributed discrete random
variables with arbitrary alphabet size $N$. The vector of all the $2^n-1$ joint
entropies of these random variables is referred to as their ``entropy
vector" and conversely any $2^n-1$ dimensional vector whose elements
can be regarded as the joint entropies of some $n$ random variables,
for some alphabet size $N$, is called ``entropic''. The \textit{entropy
  region} is defined as the region of all possible entropic
vectors and is denoted by $\Gamma^*_n$ \cite{framework}. Let
$\mathcal{N}= \{1,\cdots,n\}$ and $s, s' \subseteq
\mathcal{N}$. If we define $X_{s} = \{X_i : i \in s\}$ then
it is well known that the joint entropies $H(X_{s})$ (or $H_s$,
for simplicity) satisfy the following inequalities:

\begin{enumerate}
\item $H_\emptyset = 0$
\item For $s\subseteq s'$: $H_s\leq H_{s'}$
\item For any $s,s'$:
  $H_{s\cup s'}+H_{s\cap s'}\leq H_s+H_{s'}$.
\end{enumerate}
These are called the basic inequalities of Shannon information
measures and the last one is referred to as the ``submodularity
property''. They all follow from the nonnegativity of the conditional
mutual information \cite{non-shannon,polymat,uniqueness}. Any
inequality obtained from positive linear combinations of conditional
mutual information is called a ``Shannon-type'' inequality. The space of
all ${2^n-1}$ dimensional vectors which only satisfy the Shannon
inequalities is denoted by $\Gamma_n$. It has been shown that
$\Gamma^*_2 = \Gamma_2$ and $\bar{\Gamma}^*_3 = \Gamma_3$ where
$\bar{\Gamma}^*_3$ denotes the closure of $\Gamma^*_3$
\cite{non-shannon}. However, for $n\geq 4$, in 1998 the first
non-Shannon type information inequality was discovered
\cite{non-shannon} which demonstrated that $\Gamma^*_4$ is strictly
smaller than $\Gamma_4$. Since then many other non-Shannon type
inequalities have been discovered \cite{matus-construction, six-new,
  makarychev,zhang-new}. Nonetheless, the complete characterization of
$\Gamma^*_n$ for $n\geq 4$ remains open.

The effort to characterize the entropy region has focused
on discrete random variables, ostensibly because the study of discrete
random variables is simpler. However, continuous random
variables are as important, where now for any collection of random
variables $X_s$, with joint probability density function
$f_{X_s}(x_s)$, the differential entropy
is defined as
\begin{equation}
h_s = - \int {f_{X_s}(x_s) \log
  f_{X_s}(x_s) dx_s}.
\end{equation}

Let $\sum_{s} \gamma_{s} H_s \geq 0$ be a valid
discrete information inequality. This inequality is called
{\em balanced} if for all $i\in \mathcal{N }$ we have $\sum_{s: i\in
  s} \gamma_s = 0$. Using this notion Chan \cite{chan-balanced}
has shown a correspondence between discrete and continuous information
inequalities, which allows us to compute the entropy region for one
from the other.
\begin{theorem}[Discrete/continuous information inequalities]
\-
\begin{enumerate}
\item A linear continuous information inequality \mbox{$\sum_{s} \gamma_{s}
h_s \geq 0$} is valid if and only if its discrete counterpart
$\sum_{s} \gamma_{s} H_s \geq 0$ is balanced and
valid.
\item
A linear discrete information inequality \mbox{$\sum_{s} \gamma_{s} H_s \geq 0$} is valid if and only if it can be written as $\sum_{s}{\beta_{s} h_{s}} + \sum_{i=1}^{n}r_i(h_{i,i^c}-h_{i^c})$ for some $r_i\geq0$, where $\sum_{s}{\beta_{s} h_{s}}\geq0$ is a valid continuous information inequality ($i^c$ denotes the complement of $i$ in $\mathcal{N}$).
\end{enumerate}
\label{thm:contin-ineq}
\end{theorem}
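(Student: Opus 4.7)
The plan is to exploit the two structural differences between discrete entropy $H$ and differential entropy $h$: (i) $h$ is not invariant under scaling, since $h(aX_s) = h(X_s) + \sum_{i\in s}\log|a_i|$ when each $X_i$ is scaled by $a_i$, and (ii) the conditional entropies $h(X_i\mid X_{i^c}) = h_\mathcal{N} - h_{i^c}$ can take arbitrarily negative values, whereas their discrete analogs are nonnegative. These two facts isolate exactly the ``balanced'' part of any valid information inequality and the ``$r_i$ corrections'' appearing in the statement.

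For part (1), the forward direction starts from the scaling identity above: applying the supposed valid continuous inequality $\sum_s \gamma_s h_s \geq 0$ to the rescaled vector $(a_1X_1,\ldots,a_nX_n)$ yields $\sum_s \gamma_s h_s + \sum_{i} \bigl(\sum_{s\ni i}\gamma_s\bigr)\log|a_i| \geq 0$, which can be made negative for suitable $a_i$ unless $\sum_{s\ni i}\gamma_s = 0$ for every $i$, i.e.\ unless the inequality is balanced. Discrete validity of the balanced form then follows by smoothing: given any discrete $X_s$ with entropy vector $H_s$, replace it by the continuous $Y_s = X_s + \epsilon U_s$ with $U_i$ i.i.d.\ uniform on $[0,1]$, for which $h(Y_s) = H(X_s) + |s|\log\epsilon + o(1)$; applying the continuous inequality and using balancedness to annihilate the $\log\epsilon$ terms yields $\sum_s \gamma_s H_s \geq 0$. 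The reverse direction is the familiar quantization argument: for fine partition size $\Delta$, a quantized version $\hat X_s$ of a continuous $X_s$ satisfies $H(\hat X_s) = h(X_s) - |s|\log\Delta + o(1)$; plugging into the balanced discrete inequality the $\log\Delta$ contributions again vanish, and the limit as $\Delta \to 0$ produces the continuous inequality.

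For part (2), given any valid discrete inequality $\sum_s\gamma_s H_s \geq 0$, set $r_i = \sum_{s\ni i}\gamma_s$ and $\beta_s = \gamma_s$ for $s\neq \mathcal{N}$, with $\beta_\mathcal{N}$ adjusted so that $\sum_{s\ni i}\beta_s = 0$ for each $i$; this enforces balancedness of the $\beta$-part and gives the identity $\sum_s\gamma_s H_s = \sum_s\beta_s H_s + \sum_i r_i(H_\mathcal{N} - H_{i^c})$, whose continuous analog is the displayed decomposition. Nonnegativity of the $r_i$ is forced by testing the inequality on deterministic extensions $X_i = f_i(X_{i^c})$ (or by taking Bernoulli variables that make $H(X_i\mid X_{i^c})$ large while keeping other entropies bounded), which shows $r_i<0$ would violate validity. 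Finally $\sum_s\beta_s h_s \geq 0$ is a valid balanced continuous inequality by part (1) applied to $\sum_s \beta_s H_s\geq 0$, and the latter holds because one can choose $X_i = X_{i^c}$-deterministic to make the correction $r_i H(X_i\mid X_{i^c}) = 0$.

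The main obstacle is the smoothing/quantization step that converts between $H$ and $h$ while controlling the error terms uniformly over the full family of random vectors used to certify validity; special care is needed because the constants $|s|\log\epsilon$ and $|s|\log\Delta$ diverge, and it is only the balancedness condition $\sum_{s\ni i}\gamma_s = 0$ that permits passage to the limit. A secondary subtlety is verifying that the coefficient $\beta_\mathcal{N}$ constructed in part (2) produces a genuinely balanced inequality that remains valid in the continuous setting, which requires carefully separating the ``infinite'' contributions $h(X_i\mid X_{i^c})$ from the balanced remainder before invoking part (1).
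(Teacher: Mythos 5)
First, a point of reference: the paper does not prove this theorem at all --- it is imported verbatim from Chan's work on balanced information inequalities (\cite{chan-balanced} in the bibliography), so there is no in-paper proof to compare against. Judged on its own terms, your Part (1) follows the standard route and is essentially sound: the scaling identity $h(aX_s)=h(X_s)+\sum_{i\in s}\log|a_i|$ does force balancedness of any valid continuous inequality; the smoothing $Y_i=X_i+\epsilon U_i$ transfers continuous validity to discrete validity once the $|s|\log\epsilon$ terms cancel (they do, since balancedness gives $\sum_s|s|\gamma_s=0$, the paper's Lemma~1); and the quantization argument gives the converse, modulo the regularity issues you correctly flag (R\'enyi-type convergence $H([X]_\Delta)+|s|\log\Delta\to h(X_s)$ does not hold for every density, so one must argue on a dense class and approximate).

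Part (2) has two genuine problems. The first is a bookkeeping error that makes your claimed identity false as stated: setting $\beta_s=\gamma_s$ for all $s\neq\mathcal{N}$ and adjusting only $\beta_{\mathcal{N}}$ changes the left side by a multiple of $H_{\mathcal{N}}$ alone, which cannot equal $\sum_i r_i(H_{\mathcal{N}}-H_{i^c})$ unless all $r_i$ coincide; moreover a single scalar $\beta_{\mathcal{N}}$ cannot satisfy the $n$ constraints $\sum_{s\ni j}\beta_s=0$. The correct decomposition takes $\beta_{i^c}=\gamma_{i^c}+r_i$ for each $i$ and $\beta_{\mathcal{N}}=\gamma_{\mathcal{N}}-\sum_i r_i$, with $r_i=\sum_{s\ni i}\gamma_s$; one then checks $\sum_{s\ni j}\beta_s=r_j-\sum_ir_i+\sum_{i\neq j}r_i=0$. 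Relatedly, your test for $r_i\geq 0$ via ``deterministic extensions $X_i=f_i(X_{i^c})$'' is vacuous, since it makes $H(X_i\mid X_{i^c})=0$; the working test is $X_i$ uniform on $N$ symbols and independent, with $X_j$ constant for $j\neq i$, which gives $\sum_s\gamma_sH_s=r_i\log N$. The second and more serious gap is the claim that $\sum_s\beta_s H_s\geq 0$ is valid ``because one can choose $X_i$ deterministic given $X_{i^c}$.'' That only verifies the balanced part on the special subclass of distributions where each variable is a function of the others; for an arbitrary distribution, validity of $\gamma$ gives only $\sum_s\beta_sH_s\geq-\sum_ir_iH(X_i\mid X_{i^c})$, which is strictly weaker than what is needed. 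Showing that the balanced part of a valid inequality is itself valid for \emph{all} distributions is precisely the substantive content of Chan's theorem; it requires a nontrivial construction that, starting from arbitrary $X_{\mathcal{N}}$, produces auxiliary random variables on which the residual conditional entropies $H(X_i\mid X_{i^c})$ vanish (asymptotically) while every balanced linear functional is preserved up to a positive scale. Without such a construction your argument is circular, since it assumes the reduction it is meant to prove.
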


The above Theorem suggests that one can also study continuous random
variables to determine $\Gamma^*_n$. Among all continuous random
variables, the most natural ones to study first (for many of the
reasons further described below) are Gaussians. This will be the main
focus of this paper.

Let $X_1,\cdots,X_n\in{\mathds{R}}^T$ be $n$ jointly distributed
zero-mean\footnote{Since differential entropy is invariant to shifts
  there is no point in assuming nonzero means for the $X_i$.} vector-valued real Gaussian random variables of vector size $T$ with covariance matrix $R\in{\mathds{R}}^{nT\times
  nT}$. Clearly, $R$ is symmetric, positive semidefinite, and consists
of block matrices of size $T\times T$ (corresponding to each random
variable). We will allow $T$ to be arbitrary and will therefore consider
the {\em normalized} joint entropy of any subset $s\subseteq{\cal
  N}$ of these random variables
\begin{equation}
{\underline h}_s = \frac{1}{T}\cdot \frac{1}{2} \log \Big((2\pi
e) ^{T|s|} \det R_s\Big),
\end{equation}
where $|s|$ denotes the cardinality of the set $s$ and
$R_s$ is the $T|s|\times T|s|$ matrix obtained by
keeping those block rows and block columns of $R$ that are indexed by
$s$. Note that our normalization is by the dimensionality of the
$X_i$, i.e., by $T$, and that we have used $\underline{h}$ to denote
normalized entropy.

Normalization has the following important consequence.

\begin{theorem}[Convexity of the region for $\underline{h}$]
The closure of the region of normalized Gaussian entropy vectors is convex.
\end{theorem}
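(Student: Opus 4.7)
The plan is to exhibit, for any two normalized Gaussian entropy vectors $\underline{h}^{(1)}$ and $\underline{h}^{(2)}$ (with block sizes $T_1$ and $T_2$) and any rational $\lambda \in [0,1]$, an explicit construction of a normalized Gaussian entropy vector equal to $\lambda\, \underline{h}^{(1)} + (1-\lambda)\,\underline{h}^{(2)}$. Convexity of the closure then follows from density of the rationals in $[0,1]$.

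The construction is the following direct-sum / ``copies and stack'' device. Let $R^{(1)}$ and $R^{(2)}$ be the covariance matrices realizing the two entropy vectors. For positive integers $k,l$, take $k$ mutually independent copies of $X^{(1)}$ and $l$ mutually independent copies of $X^{(2)}$, all independent of each other. Form the vector-valued random variables
\[
\tilde{X}_i \;=\; \bigl( X_i^{(1,1)}, \ldots, X_i^{(1,k)}, X_i^{(2,1)}, \ldots, X_i^{(2,l)} \bigr) \in \mathbb{R}^T, \quad T = kT_1 + lT_2 .
\]
Because the two groups of copies are independent, the covariance $\tilde R_s$ is, after an obvious reordering, block-diagonal with blocks $I_k\otimes R^{(1)}_s$ and $I_l\otimes R^{(2)}_s$, so that
\[
\det \tilde R_s \;=\; \bigl(\det R^{(1)}_s\bigr)^{k}\bigl(\det R^{(2)}_s\bigr)^{l}.
\]
Substituting into the definition of normalized entropy and regrouping the $\log(2\pi e)$ terms, a short calculation gives
\[
\underline{\tilde h}_s \;=\; \frac{kT_1}{T}\,\underline{h}^{(1)}_s \;+\; \frac{lT_2}{T}\,\underline{h}^{(2)}_s .
\]
Choosing $k,l$ so that $kT_1/(kT_1+lT_2)$ equals any prescribed rational $\lambda \in [0,1]$ (e.g.\ $k=aT_2$, $l=(b-a)T_1$ for $\lambda = a/b$) yields every rational convex combination of $\underline{h}^{(1)}$ and $\underline{h}^{(2)}$ as a bona fide normalized Gaussian entropy vector; passing to the closure gives all real convex combinations.

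The only steps requiring care are (i) verifying that the normalization $1/T$ exactly absorbs the multiplicities $k$ and $l$ so that the $\log(2\pi e)$ contribution rebuilds the right constant in each summand (which is where variability of $T$ is essential and where the analogous claim would fail for unnormalized or fixed-$T$ entropies), and (ii) verifying that the rational weights $kT_1/(kT_1+lT_2)$ are dense in $[0,1]$. Both are straightforward; I do not anticipate any genuine obstacle, since the construction is just a direct sum and the arithmetic of the weights is elementary.
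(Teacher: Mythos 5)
Your proposal is correct and follows essentially the same route as the paper: stacking independent copies of the two Gaussian collections into a single vector-valued collection, computing the block-diagonal determinants, and observing that the normalization by the total block length $T=kT_1+lT_2$ turns the sum of unnormalized entropies into a convex combination with weight $kT_1/(kT_1+lT_2)$. The only difference is that you make explicit the density argument over rational weights before passing to the closure, which the paper leaves implicit.
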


\begin{proof}
Let $\underline{h}^x$ and $\underline{h}^y$ be
two normalized Gaussian entropy vectors. This means that the first
corresponds to some collection of Gaussian random variables
$X_1,\ldots,X_n\in{\mathds{R}}^{T_x}$ with the covariance matrix $R^x$,
for some $T_x$, and the second to some other collection
$Y_1,\ldots,Y_n\in{\mathds{R}}^{T_y}$ with the covariance matrix $R^y$,
for some $T_y$. Now generate $N_x$ copies of jointly Gaussian random variables $X_1,\ldots,X_n$ and $N_y$ copies of $Y_1,\ldots,Y_n$ and define the new set of random variables $Z_i =
\ba{cccccc} (X_i^1)^t & \ldots & (X_i^{N_x})^t & (Y_i^1)^t & \ldots &
(Y_i^{N_y})^t \ea^t$, where $(\cdot)^t$ denotes the transpose, by stacking $N_x$ and $N_y$ independent copies of
each, respectively, into a $N_xT_x+N_yT_y$ dimensional vector. Clearly
the $Z_i$ are
jointly-Gaussian. Due to the independence of the $X_i^k$ and
$Y_i^l$, $k=1,\ldots N_x$, $l=1,\ldots,N_y$, the non-normalized
entropy of the collection of random variables $Z_s$ is
\[ h^z_s = N_xT_x\underline{h}_s^x+
N_yT_y\underline{h}_s^y. \]
To obtain the normalized entropy we should divide by $N_xT_x+N_yT_y$
\[ \underline{h}^z_s =
\frac{N_xT_x}{N_xT_x+N_yT_y}\underline{h}_s^x +
\frac{N_yT_y}{N_xT_x+N_yT_y}\underline{h}_s^y, \]
which, since $N_x$ and $N_y$ are arbitrary, implies that every vector
that is a convex combination of $\underline{h}^x$ and
$\underline{h}^y$ is entropic and generated by a Gaussian.
\end{proof}

Note that $\underline{h}_s$ can also be written as follows:
\begin{equation}
\underline{h}_s = \frac{1}{2T}\log \det R_{s} +
\frac{|s|}{2} \log 2\pi e
\end{equation}
Therefore if we define
\begin{equation}
g_s = \frac{1}{T}\log \det R_{s},
\label{defg}
\end{equation}
it is obvious that $g_s$ can be obtained from
$\underline{h}_s$ and vice versa. All that is involved is a
scaling of the covariance matrix $R$. Denote the vector obtained from all entries $g_s,~s\subseteq\{1,\dots,n\}$ by $g$.
For balanced inequalities there is the additional property,
\begin{lemma}
If the inequlaity $\sum_{s} \gamma_{s}H_{s}\geq 0$ is balanced then $\sum_{s} |s| \gamma_{s} = 0$.
\end{lemma}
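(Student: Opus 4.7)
The plan is to prove the identity by a straightforward double-counting argument, writing $|s|$ as a sum of indicators and then swapping the order of summation so that the balanced condition can be applied term by term.

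First I would observe that for any subset $s \subseteq \mathcal{N}$,
\[
|s| = \sum_{i \in \mathcal{N}} \mathds{1}[i \in s],
\]
so that
\[
\sum_{s} |s|\,\gamma_s
= \sum_{s} \gamma_s \sum_{i \in \mathcal{N}} \mathds{1}[i \in s]
= \sum_{i \in \mathcal{N}} \sum_{s : i \in s} \gamma_s.
\]
Now I would invoke the hypothesis that $\sum_{s} \gamma_s H_s \geq 0$ is balanced, which by definition means $\sum_{s: i \in s} \gamma_s = 0$ for every $i \in \mathcal{N}$. Substituting this into the inner sum immediately yields
\[
\sum_{s} |s|\,\gamma_s = \sum_{i \in \mathcal{N}} 0 = 0,
\]
which is the desired conclusion.

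There is no real obstacle here — the proof is essentially Fubini's theorem for finite sums, combined with the definition of ``balanced.'' The only thing worth emphasizing is the interpretation: this lemma will be useful later because $|s|/2$ is precisely the coefficient of $\log(2\pi e)$ in $\underline{h}_s$, so a balanced inequality expressed in terms of differential entropies loses its dependence on the additive constant $\log(2\pi e)$, and can equivalently be expressed purely in terms of the $g_s = (1/T)\log\det R_s$ quantities defined in \eqref{defg}. This is presumably the reason the lemma is being recorded at this point in the paper.
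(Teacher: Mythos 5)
Your proof is correct and is essentially identical to the paper's: both write $|s|$ as a sum over the elements $i\in s$, exchange the order of summation, and apply the balancedness condition $\sum_{s:\,i\in s}\gamma_s=0$ to each inner sum. Nothing further is needed.
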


\begin{proof}
We can simply write,
\bea
\sum_{s} |s|\gamma_{s} = \sum_{s} \sum_{i\in s} \gamma_{s} = \sum_i \left(\sum_{s: i\in s} \gamma_{s} \right) = 0
\eea
\end{proof}

Therefore the set of linear balanced information inequalities that $g$ and $\underline{h}$ satisfy is the same. Moreover any other type of inequality that $\underline{h}$ satisfies can be converted to an inequality for $g$ and vice versa and therefore the space of $g$ and $\underline{h}$ can be obtained from each other.
For simplicity, we will therefore use
$g_s$ instead of $\underline{h}_s$ throughout the
paper and use the term entropy for both $g$ and $\underline{h}$
interchangeably.

In this paper we characterize the entropy region of 3 jointly Gaussian random variables and study the minimal set of necessary and sufficient conditions for a $2^n-1$ dimensional vector to represent an entropy vector of $n$ scalar jointly Gaussian random variables for $n\geq 4$.
As equation (\ref{defg}) suggests, the entropy of any subset of random variables from a
collection of Gaussian random variables is simply the ``log'' of the principal minor of the covariance matrix corresponding to this subset. Therefore studying the entropy of Gaussian random variables involves studying the relations among principal minors of symmetric positive semi-definite matrices, i.e., covariance matrices. It has recently been noted that one of these relations is the so-called Cayley ``hyperdeterminant" \cite{hyperdet}. Therefore along the study of entropy of Gaussian random variables we also examine the hyperdeterminant relation.

The remainder of this paper is organized as follows. In the next section
we review background and some motivating results on the entropies of Gaussian random
variables. Section \ref{Gaussian_sec:main3} states the main results on the characterization of the entropy region of 3 jointly Gaussian random variables.
In Section \ref{Gaussian_sec:hyperdet} we examine the hyperdeterminant relation in connection to the entropy region of Gaussian random variables. We give a determinant formula for calculating the special $2\times 2\times 2$ hyperdeterminant. Moreover we present a new and transparent proof of the result of \cite{hyperdet} on why the principal minors of a symmetric matrix satisfy the hyperdeterminant relations.
In Section \ref{Gaussian_sec:minimal} we study the minimal set of necessary and sufficient condition for a $2^n-1$ dimensional vector to be the entropy vector of $n$ scalar jointly Gaussian random variables. For $n=4$, there are 5 such equations and we explicitly state them.


\section{Some Known Results}
\label{Gaussian_sec:known}

From (\ref{defg}) it can be easily seen that any valid
information inequality for entropies can be immediately converted into
an inequality for the (block) principal minors of a symmetric, positive
semi-definite matrix. This connection has been previously used in the literature. In fact one can study determinant inequalities by studying the corresponding entropy inequalities, see e.g. \cite{cover-determinant}.  

Let $g$ be the {\it normalized} entropy vector
corresponding to some vector-valued collection of random variables
with an $nT\times nT$ covariance matrix $R$. Further, let $m$ denote
the vector of block principal minors of $R$. Then it is clear that $m
= e^{gT}$, where the exponential acts component-wise on the entries of
$g$. Then the submodularity of entropy translates to the
following inequality for the principal minors:
\begin{equation}
m_{s \cup s'}\cdot m_{s \cap s'} \leq m_s \cdot m_{s'}
\end{equation}
In the context of determinant inequalities for a Hermitian positive semidefinite matrix this is known as the ``Koteljanskii'' inequality and is a generalization of the ``Hadamard-Fischer'' inequalities \cite{Johnson00}. Dating back at least to Hadamard in 1893, studying the determinant inequalities is an old subject which is of interest in its own right and has many applications in matrix analysis and probability theory.

Some of the interesting problems in the area of principal minor relations include characterizing the set of bounded ratios of principal minors for a given class of matrices (e.g. the class of positive definite, or the class of matrices whose all of their principal minors are positive, i.e., the P matrices) \cite{Johnson93, johnson-bddminor-new}, studying the Gaussian conditional independence structure in the context of probabilistic representations \cite{lnenicka-gauss-conditional} and detecting P matrices, e.g., via computation of all the principal minors of a given matrix \cite{GT-minor-computation}.

Although determinant inequalities have been studied extensively on their own and also through the entropy inequalities, the reverse approach of determining Gaussian entropies via the exploration of the space of principal minors has been less considered \cite{lnenicka-gauss-conditional, lnenicka-gauss}. As it turns out, this approach is deeply related to the ``principal minor assignment'' problem where a matrix with a set of fixed principal minors is sought. Recently there has been progress towards this area for symmetric matrices \cite{hyperdet, minor-assign} and we will discuss this in more detail in Sections \ref{Gaussian_sec:hyperdet} and \ref{Gaussian_sec:minimal}.

Apart from the result of \cite{lnenicka-gauss} which shows the tightness of the Zhang-Yeung non-Shannon inequality \cite{characterization} for Gaussian random variables, one of the encouraging results for studying the Gaussian random variables is that they can violate the ``Ingleton bound''. This bound is one of the best known inner bounds for $\Gamma^*_4$ \cite{characterization}.

\begin{theorem}[Ingleton inequality] \label{thm:ingleton}
\cite{ingleton} Let $v_1,\cdots,v_n$ be $n$ vector subspaces and let
$\mathcal{N}=\{1,\cdots,n\}$. Further let $s \subseteq \mathcal{N}$ and
$r_s$ be the rank function defined as the dimension of the
subspace $\oplus_{i\in s} v_i$. Then for any subsets
$s_1,s_2,s_3,s_4 \subseteq \mathcal{N}$, we have
\vspace*{-.15in}
\begin{eqnarray}
& & r_{s_1}+r_{s_2}+r_{s_1\cup s_2\cup s_3}+
r_{s_1\cup s_2\cup s_4}+r_{s_3\cup s_4}
\nonumber \\
&& \hspace*{-.25in}
-r_{s_1\cup s_2}-r_{s_1\cup s_3}-r_{s_1\cup s_4}
-r_{s_2\cup s_3}-r_{s_2\cup s_4}\leq 0
\end{eqnarray}
\end{theorem}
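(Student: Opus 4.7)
The plan is to reduce the rank inequality to a statement about the four subspaces $A,B,C,D$ built from the subsets $s_1,\ldots,s_4$, and then exploit the one inclusion that distinguishes genuine vector subspaces from abstract matroids.

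First I would set $A = \sum_{i \in s_1} v_i$, $B = \sum_{i \in s_2} v_i$, $C = \sum_{i \in s_3} v_i$, and $D = \sum_{i \in s_4} v_i$, so every rank term becomes the dimension of a sum of these four subspaces and the target inequality reads
\[
\dim A + \dim B + \dim(A+B+C) + \dim(A+B+D) + \dim(C+D) \leq \dim(A+B) + \dim(A+C) + \dim(A+D) + \dim(B+C) + \dim(B+D).
\]
Next I would apply the modular identity $\dim(X+Y) = \dim X + \dim Y - \dim(X \cap Y)$ to every ``sum'' on both sides, expanding $\dim(A+B+C)$ first as $\dim(A+B) + \dim C - \dim((A+B) \cap C)$ and similarly for $D$. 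The linear-in-$\dim A,\dim B,\dim C,\dim D$ contributions cancel, and after some bookkeeping the claim becomes equivalent to
\[
\dim(A \cap C) + \dim(A \cap D) + \dim(B \cap C) + \dim(B \cap D) \leq \dim(A \cap B) + \dim(C \cap D) + \dim((A+B) \cap C) + \dim((A+B) \cap D).
\]

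The crucial step is the inclusion
\[
(A \cap C) + (B \cap C) \subseteq (A+B) \cap C,
\]
which holds for honest vector subspaces but can fail in general matroids; it is precisely the linear feature that forces Ingleton, and explains why Ingleton is not a Shannon-type inequality. One more use of the modular identity then yields
\[
\dim((A+B) \cap C) \geq \dim(A \cap C) + \dim(B \cap C) - \dim(A \cap B \cap C),
\]
and an analogous bound with $D$ in place of $C$. Substituting both of these into the reduced inequality cancels the four pairwise-intersection terms on the right-hand side and leaves the single residual claim
\[
\dim(A \cap B \cap C) + \dim(A \cap B \cap D) \leq \dim(A \cap B) + \dim(C \cap D).
\]

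Finally, this residual closes with one more application of the modular identity: $(A \cap B \cap C) + (A \cap B \cap D) \subseteq A \cap B$ gives $\dim(A \cap B) \geq \dim(A \cap B \cap C) + \dim(A \cap B \cap D) - \dim(A \cap B \cap C \cap D)$, while $A \cap B \cap C \cap D \subseteq C \cap D$ gives $\dim(C \cap D) \geq \dim(A \cap B \cap C \cap D)$; adding these two ends the argument. I expect the main obstacle to be purely combinatorial, namely keeping track of which intersection terms cancel in the modular rewriting and which triple and quadruple intersections must survive to be absorbed at the final step. The only genuinely non-matroidal ingredient in the entire proof is the inclusion $(A \cap C) + (B \cap C) \subseteq (A+B) \cap C$; once that is in hand the rest is repeated, and essentially mechanical, use of the modular identity.
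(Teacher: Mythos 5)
Your proof is correct, but note that the paper itself does not prove this theorem at all --- it is stated as a known result with a citation to Ingleton's original paper, so there is no in-paper argument to compare against. Your argument is the standard self-contained proof of Ingleton's inequality for representable rank functions, and the bookkeeping checks out: with $A,B,C,D$ the spans of the four groups, expanding every sum via $\dim(X+Y)=\dim X+\dim Y-\dim(X\cap Y)$ gives $3\dim A+3\dim B+2\dim C+2\dim D$ on both sides, and the surviving terms are exactly your reduced inequality
\[
\dim(A\cap C)+\dim(A\cap D)+\dim(B\cap C)+\dim(B\cap D)\leq \dim(A\cap B)+\dim(C\cap D)+\dim((A+B)\cap C)+\dim((A+B)\cap D).
\]
You correctly isolate the one genuinely linear-algebraic ingredient, the inclusion $(A\cap C)+(B\cap C)\subseteq (A+B)\cap C$ (which has no analogue for general polymatroids and is why Ingleton is not Shannon-type), and the final residual $\dim(A\cap B\cap C)+\dim(A\cap B\cap D)\leq\dim(A\cap B)+\dim(C\cap D)$ closes exactly as you say via $(A\cap B\cap C)+(A\cap B\cap D)\subseteq A\cap B$ and $A\cap B\cap C\cap D\subseteq C\cap D$. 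One small point worth making explicit if you write this up: the theorem's rank function is $r_s=\dim\bigl(\sum_{i\in s}v_i\bigr)$, so $r_{s_1\cup s_2}=\dim(A+B)$ holds even when $s_1$ and $s_2$ overlap, because sums of subspaces are idempotent; this is what licenses the translation of every term into the four-subspace language.
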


Ingleton inequality was first obtained
for the rank of vector spaces. However it turns out that certain types of
entropy functions, in particular all linear representable
(corresponding to linear codes over finite fields) and
pseudo-abelian group characterizable entropy functions also satisfy
this inequality and hence fall into this inner bound
\cite{chan-abelian, chan-group}. However if we consider 4
jointly Gaussian random variables, we interestingly find that they can
violate the Ingleton bound.
Consider the following covariance matrix:
\begin{equation}
\ba{cccc} 1 & \varepsilon & a & a \\
\varepsilon & 1 & a & a\\
a & a & 1 & 0\\
a & a & 0 & 1
\ea
\end{equation}
To violate the Ingleton inequality we need to have:
\begin{eqnarray}
\nonumber && g_1 + g_2 + g_{123} + g_{124} + g_{34}\\
&& \hspace{10pt} - g_{12} - g_{13} - g_{14} - g_{23} -g_{24} \geq 0
\end{eqnarray}
or equivalently in terms of the minors $m$:
\bea
\frac{m_1m_2m_{123}m_{124}m_{34}}{m_{12}m_{13}m_{14}m_{23}m_{24}} \geq 1
\eea
Substituting for values of $m$ from the covariance matrix and simplifying we obtain:
\begin{eqnarray} \label{eqn:ingletonvio}
\frac{1-\varepsilon}{1+\varepsilon} \geq \left(\frac{1-2a^2+a^4}{1-2a^2+\varepsilon}\right)^2
\end{eqnarray}
Moreover imposing positivity conditions for this matrix to correspond to a true covariance matrix gives \mbox{$0\leq a^2 \leq 0.5$}, \mbox{$4a^2-1 \leq \varepsilon \leq 1$}.
Solving inequality (\ref{eqn:ingletonvio}) subject to these constraints yields a region of permissible $\varepsilon$ and $a^2$ (Fig \ref{Gaussian_fig:Gvio}). In particular the point $\varepsilon = 0.25,~a=0.5$ lies in this
region. Interestingly enough, this example has also been discovered in the context of determinantal inequalities in
\cite{Johnson93}. 
\begin{figure}[t]
\centering
\scalebox{0.6}{\includegraphics{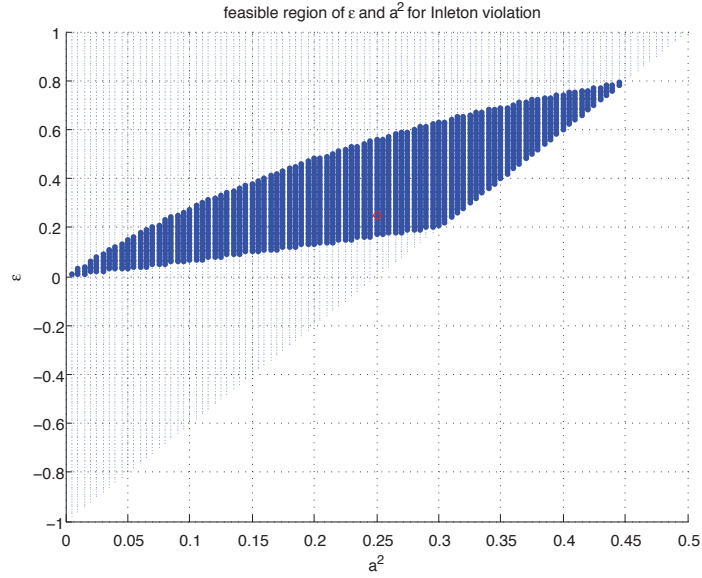}}
\caption{Feasible region of $\varepsilon$ and $a^2$ for the specific Ingleton violating example}
\label{Gaussian_fig:Gvio}
\end{figure}

Taking these results into account, we will hence study the Gaussian entropy region for 2 and 3 random variables and give the minimal number of necessary and sufficient conditions for a $2^n-1$ dimensional vector to correspond to the entropy of $n$ scalar jointly Gaussian random variables in the following sections.

\section{Entropy Region of 2 and 3 Gaussian Random Variables}
\label{Gaussian_sec:main3}

The mentioned results in the previous section (violation of the Ingleton bound and tightness of the
non-Shannon inequality) lead one to speculate whether the entropy
region for arbitrary continuous random variables is equal to 
the entropy region of (vector-valued) Gaussian ones. Although this is the case for $n=2$ random variables, it is not true for $n=3$. What is true for $n=3$ is that the entropy region of 3 arbitrary continuous random variables can be obtained from the convex cone of the region of 3 scalar-valued Gaussian random variables. 

\subsection{$n=2$}
Entropy region of 2 jointly Gaussian random variables is trivially equal to the whole entropy region of 2 arbitrary distributed continuous random variables.
\begin{theorem}
The entropy region of 2 jointly Gaussian random variables is described by the single inequality $g_{12}\leq g_1+g_2$ and is equal to the entropy region of 2 arbitrary distributed continuous random variables.
\end{theorem}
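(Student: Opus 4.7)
The plan is to verify both inclusions of the claimed equality and to identify $g_{12}\leq g_1+g_2$ as the only active constraint. For necessity, $g_{12}\leq g_1+g_2$ is simply the submodularity inequality $I(X_1;X_2)\geq 0$ for the normalized entropy, equivalently Fischer's determinant inequality $\det R\leq \det R_1\det R_2$ for any positive semidefinite $R$, and so holds for every Gaussian.

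For sufficiency I would exhibit an explicit scalar ($T=1$) realization. Given any real triple with $g_{12}\leq g_1+g_2$, set
\[
R \;=\; \begin{pmatrix} e^{g_1} & c \\ c & e^{g_2}\end{pmatrix}, \qquad c \;=\; \sqrt{e^{g_1+g_2}-e^{g_{12}}}.
\]
The hypothesis makes $c$ real, and $R$ is positive definite since both diagonal entries and $\det R=e^{g_{12}}$ are strictly positive. Its principal minors $e^{g_1}$, $e^{g_2}$, $e^{g_{12}}$ recover the prescribed $g$-vector via (\ref{defg}), so every admissible point is realized by two jointly Gaussian scalars.

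Finally, for the coincidence with the entropy region of two arbitrary continuous random variables, I would appeal to Theorem~\ref{thm:contin-ineq}(1): among the Shannon inequalities for $n=2$, only submodularity (coefficient pattern $(\gamma_1,\gamma_2,\gamma_{12})=(1,1,-1)$) is balanced, whereas non-negativity and monotonicity fail the condition $\sum_{s\ni i}\gamma_s=0$ for some $i$. Hence the only continuous Shannon-type constraint on two continuous random variables is again $g_{12}\leq g_1+g_2$, and the resulting region is precisely the half-space already realized by the Gaussian construction. No substantive obstacle arises; the only step requiring any care is confirming that submodularity is the unique balanced Shannon relation when $n=2$, which is a finite check on the generators of $\Gamma_2$.
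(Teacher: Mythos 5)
Your proposal is correct and follows essentially the same route as the paper: the paper likewise takes as given that the continuous entropy region for $n=2$ is cut out by the single balanced inequality $h_{12}\leq h_1+h_2$ and then notes that realizing any such vector by two jointly Gaussian scalars is trivial. You have merely made explicit the two details the paper leaves to the reader --- the $2\times 2$ covariance matrix with off-diagonal entry $\sqrt{e^{g_1+g_2}-e^{g_{12}}}$ and the finite check that submodularity is the unique balanced Shannon inequality for $n=2$ --- both of which are correct.
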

\begin{proof}
Since it is known that the continuous entropy region is described by the single balanced inequality $h_{12}\leq h_1 + h_2$, to prove the theorem it is sufficient to show that any entropy vector $[h_1,h_2,h_{12}]$ satisfying this inequality may be described by 2 jointly Gaussians and this is trivial to show. 
\end{proof}
 
\subsection{Main Results for $n=3$}
Although we consider vector-valued jointly Gaussian random variables, for $n=3$ we interestingly find that considering the convex hull of {\it scalar} jointly Gaussian random variables is sufficient for characterizing the Gaussian entropy region.

\begin{theorem} \label{thm:vector-vs-scalar}
The entropy region of 3 {\it vector-valued} Gaussian random variables can be obtained from the convex hull of {\it scalar} Gaussian random variables.
\end{theorem}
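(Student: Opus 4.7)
The plan is to prove the two inclusions that together would establish the equality between (the closure of) the vector Gaussian entropy region and the convex hull (closure) of the scalar Gaussian entropy region.

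One direction is immediate. Since any scalar jointly Gaussian triple is also a vector Gaussian triple (with $T=1$), the scalar Gaussian entropy vectors are also vector Gaussian entropy vectors, and by the convexity theorem proved earlier the closure of the vector Gaussian region is convex; hence it contains the convex hull of the scalar Gaussian entropy vectors. The substantive direction is the reverse: every vector-valued Gaussian normalized entropy vector must lie in the convex hull of scalar Gaussian entropy vectors.

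My strategy for the reverse inclusion is to prove the stronger intermediate statement that the convex hull of the scalar Gaussian entropy region for $n=3$ already fills the Shannon cone $\Gamma_3$. Once this is known, any vector Gaussian entropy vector, being a bona fide entropy vector, satisfies all Shannon inequalities and so lies in $\Gamma_3$, and is therefore captured. As a side benefit, combined with $\bar{\Gamma}^*_3 = \Gamma_3$, this also yields the companion claim that the scalar Gaussian convex hull coincides with the entropy region of $3$ arbitrary random variables.

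To prove the intermediate claim, I would parameterize scalar Gaussian $3\times 3$ covariances by three variances $\sigma_i^2$ and three correlations $\rho_{ij}$, write the seven entropies $g_s = \log\det R_s$ explicitly in these six parameters, and then verify that every extremal ray of $\Gamma_3$ is attainable as a closure-limit of scalar Gaussian $g$'s. These extremal rays correspond to combinatorial dependence patterns on $\{1,2,3\}$ (for example a single variable carrying all the entropy, a pair of variables being fully dependent, or all three coinciding) and are reached by sending variances to $0$ or $\infty$ and correlations to $\pm 1$ in appropriately coordinated ways. Once every extremal ray is achieved, convexity of the closure (inherited from the proved convexity theorem) together with $\Gamma_3$ being their conical hull completes the argument.

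The main obstacle is handling these limiting procedures rigorously. At the extremal rays the scalar covariance matrices become singular, so individual $g_s$ tend to $-\infty$, and one must coordinate the scalings across different subsets $s$ so that the normalized entropy vector converges in the direction of the target ray. Moreover, to fill the interior of $\Gamma_3$ (and not merely its boundary) one must combine finitely many approximating scalar Gaussians so that their convex combinations hit interior points; verifying that no extremal direction is missed, and that the combinations are compatible with the earlier convexity construction, is where the bulk of the computation lies.
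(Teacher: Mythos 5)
Your first inclusion is fine: scalar Gaussians are vector Gaussians with $T=1$, and the convexity theorem for normalized entropies makes the closure of the vector region convex, so it contains the convex hull of the scalar region. The substantive direction, however, rests on an intermediate claim that is false. You propose to show that the convex hull of scalar Gaussian entropy vectors already fills the entire cone of valid ($n=3$) continuous entropy vectors $\{g:\ g_{ij}\le g_i+g_j,\ g_{123}+g_k\le g_{ik}+g_{jk}\}$, and then to conclude by noting that any vector Gaussian entropy vector lies in that cone. But the paper's own analysis (Lemma \ref{lem:boundary} together with Lemma \ref{lem:f} and the computation of $\max m_{123}^{1/T}=p_1p_2p_3\,f(1/\theta)$) shows that the closure of the vector Gaussian region — and hence the convex hull of the scalar region, which sits inside it — is \emph{strictly} smaller than the full continuous entropy region: whenever $\tilde y>0$, the supremum of $f(1/\theta)$ over $0<\theta\le1$ is strictly below $\min_{i\ne j}x_ix_j$, so $g_{123}$ cannot reach the submodularity bound $\min_j(g_{ij}+g_{jk}-g_j)$. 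What is true is the statement about the convex \emph{cone} (Theorem \ref{thm:cone}), which is recovered only because one is allowed to rescale $g\mapsto g/\theta'$ and absorb the scaling into the cone; the convex \emph{hull} does not permit this rescaling. Conflating hull with cone is exactly where your argument breaks: your extremal-ray construction would at best place entire rays in the convex cone, not in the convex hull, and so it cannot capture an arbitrary vector Gaussian entropy point sitting at a fixed scale.

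The paper's actual proof goes in a completely different direction and never passes through the continuous entropy region. It takes the boundary characterization of Lemma \ref{lem:boundary}: the KKT conditions for maximizing $\sum_s\gamma_s\log\det\tilde R_{[s]}$ force the optimizing covariance (after normalizing the diagonal blocks to $\alpha_{ii}I$) to have off-diagonal blocks $\alpha_{ij}\Phi_{ij}$ with $\Phi_{ij}$ orthogonal and $\Phi_{13}=\mathrm{sign}(\alpha_{12}\alpha_{13}\alpha_{23})\,\Phi_{12}\Phi_{23}$, plus an independent diagonal part. Conjugating by the block-diagonal orthogonal matrix $\mathbb{Q}=\mathrm{diag}(I,\Phi_{12}^t,\Phi_{13}^t)$ (suitably padded) turns every block into a multiple of the identity without changing any principal minor, exhibiting each boundary point as a time-share of a single scalar Gaussian triple with a set of independent scalars, i.e., as a point of the convex hull of scalar Gaussian entropy vectors. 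If you want to repair your argument, you would need to replace your intermediate claim with this structural statement about the optimizers, or some equivalent characterization of the boundary of the vector Gaussian region itself.
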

 
The main result about the convex cone of the entropy region of 3 jointly Gaussian random variables is formalized in the next theorem:

\begin{theorem}[Convex Cone of the Entropy Region of 3 Scalar-Valued Gaussian Random Variables] 
~The convex cone generated by the entropy region of 3 scalar-valued Gaussian random variables gives the entropy region of 3 arbitrary continuously distributed random variables. 


\label{thm:cone}
\end{theorem}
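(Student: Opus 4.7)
\emph{Plan.} The first step is to pin down the target region. Combining Chan's correspondence (Theorem~\ref{thm:contin-ineq}) with the classical equality $\bar{\Gamma}^*_3=\Gamma_3$, every valid balanced linear inequality on the continuous entropies of three random variables is a positive combination of the six conditional mutual informations $I(X_i;X_j\mid X_K)\ge 0$ with $\{i,j\}\cup K=\{1,2,3\}$. Taking polar duals, the closure of the entropy region of three arbitrary continuous random variables coincides with the polyhedral cone $C\subset\mathbb{R}^7$ cut out by these six balanced Shannon inequalities. Let $\mathcal{K}$ denote the closed convex cone generated by the entropy vectors of three scalar Gaussians. The inclusion $\mathcal{K}\subseteq C$ is automatic because Gaussians obey submodularity; the work is in the reverse inclusion $C\subseteq\mathcal{K}$.

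The second step reduces $C\subseteq\mathcal{K}$ to checking a finite list of generators. The cone $C$ has a $3$-dimensional lineality space corresponding to three independent random variables with freely chosen individual entropies. Passing to the quotient, and using coordinates $(I_{12},I_{13},I_{23},J)$ where $J=I(X_1;X_2;X_3)$ is the triple interaction and $I(X_i;X_j\mid X_k)=I_{ij}-J$, the quotient cone becomes
\[
\{(a,b,c,d)\in\mathbb{R}^4:\ a,b,c\ge 0,\ a,b,c\ge d\},
\]
and an enumeration of triples of active facets shows its extreme rays are precisely
\[
r_1=(1,1,1,1),\ r_2=(1,0,0,0),\ r_3=(0,1,0,0),\ r_4=(0,0,1,0),\ r_5=(0,0,0,-1).
\]
By Minkowski--Weyl it suffices to realize the lineality space and each of these five rays inside $\mathcal{K}$.

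The third step is a direct construction for all generators except $r_5$. The lineality is covered by independent scalar Gaussians with variances $\sigma_i^2$ freely chosen above or below one, giving both signs on each lineality coordinate. The rays $r_2,r_3,r_4$ are realized exactly: for $r_2$, take a Gaussian with $\rho_{12}\neq 0$ and $\rho_{13}=\rho_{23}=0$, so $I_{13}=I_{23}=0$, $J=0$, and $I_{12}>0$. The ray $r_1$ is a limit of perfectly correlated Gaussians with $\rho_{12}=\rho_{13}=\rho_{23}=\rho$: a direct computation yields the partial correlation $\rho_{12\mid 3}=\rho/(1+\rho)$, which stays bounded away from $\pm 1$ as $\rho\to 1^-$, so every conditional MI remains bounded while every $I(X_i;X_j)\to\infty$; the normalized direction therefore converges to $(1,1,1,1)$.

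The main obstacle, and therefore the last step, is the ray $r_5$. Because for scalar Gaussians pairwise independence is equivalent to mutual independence, no single Gaussian lies on $r_5$, and no finite conic combination of interior Gaussian entropy vectors can reach it either. The fix is to approach the boundary of the PSD cone: with $\rho_{12}=0$ and $\rho_{13}=\rho_{23}=\beta$ one has
\[
I_{12}\equiv 0,\quad I_{13}=I_{23}=-\tfrac{1}{2}\log(1-\beta^2),\quad J=\tfrac{1}{2}\log\frac{1-2\beta^2}{(1-\beta^2)^2}.
\]
As $\beta^2\to\tfrac{1}{2}^-$, $\det R=1-2\beta^2\to 0^+$ forces $J\to-\infty$ while $I_{13},I_{23}$ stay bounded, so after normalization the direction converges to $(0,0,0,-1)=r_5$. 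Hence $r_5\in\mathcal{K}$ (since $\mathcal{K}$ is closed), and combined with the previous step this yields $C\subseteq\mathcal{K}$ and closes the proof.
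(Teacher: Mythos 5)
Your proof is correct, but it takes a genuinely different route from the paper's. The paper fixes an arbitrary target vector $p=e^g$ satisfying (\ref{eqn:cond-on-p}), matches $p_i$ and $p_{ij}$ exactly by choosing $\alpha_{ii},\alpha_{ij}$ in the time-shared boundary covariance structure of Lemma \ref{lem:boundary}, and then shows via the analysis of the function $f(\delta)$ in Lemma \ref{lem:f} that $\sup_\theta m_{123}^{1/T}$ reaches $\min_k p_{ik}p_{jk}/p_k$ --- possibly only for the rescaled vector $g'=g/\theta'$, which suffices because only the convex cone is at stake. You instead take the target to be the polyhedral cone cut out by the six balanced Shannon inequalities, quotient by its three-dimensional lineality space, enumerate the five extreme rays of the resulting $(I_{12},I_{13},I_{23},J)$-cone via Minkowski--Weyl, and realize each generator as a (limit of a normalized) scalar Gaussian ray: the equicorrelated family for $(1,1,1,1)$ and the $\rho_{12}=0$, $\rho_{13}=\rho_{23}=\beta\to 1/\sqrt{2}$ family for $(0,0,0,-1)$. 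Your computations check out ($\rho_{12\mid 3}=\rho/(1+\rho)$, $\det R=1-2\beta^2$, the ray list is complete), and the lift from the quotient back to $\mathbb{R}^7$ is legitimate because the lineality space is realized in both directions by independent Gaussians and the Gaussian cone is closed. What your route buys is economy: it uses only scalar Gaussians, closedness, and five explicit covariance families, bypassing Lemma \ref{lem:boundary}, Lemma \ref{lem:f} and the $\theta$-timesharing machinery entirely. What the paper's route buys is finer information: by tracking exactly which values of $p_{123}$ are attainable for fixed $p_i,p_{ij}$ (the function $f(1/\theta)$ and the threshold $\tilde y$), it separates the actual Gaussian entropy region from its convex cone, which is precisely what feeds Conjecture \ref{thm:3Gauss}; an extreme-ray argument cannot see that distinction. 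One cosmetic slip: describing the six generating inequalities as those with $\{i,j\}\cup K=\{1,2,3\}$ captures only the three conditional mutual informations; you clearly intend $K\subseteq\{1,2,3\}\setminus\{i,j\}$, so that the three unconditional ones are included, and you do use all six correctly afterwards.
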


This theorem states that one can indeed construct the entropy region of $n=3$ continuous random variables
from the entropy region of Gaussian random variables and therefore it encourages the study of
Gaussians for $n\geq 4$.
Moreover Theorem \ref{thm:cone} addresses the ``convex cone'' of the Gaussian entropy region and as it turns out for most practical purposes characterizing the ``convex cone'' is sufficient.
The problem of characterizing the entropy region of Gaussian random variables itself, rather than its convex cone, is more complicated. For 3 random variables we state the following conjecture:

\begin{conjecture}[Entropy Region of 3 Vector-Valued Jointly Gaussian Random Variables] 
Let vector $g$ defined as \mbox{$g = [ g_1, g_2, g_3, g_{12}, g_{23}, g_{31}, g_{123}]^t$} be an entropy vector generated by 3 vector-valued Gaussian random variables. Define $x_k = e^{g_{ij}-g_i-g_j}$
  and $\tilde{y} = \frac{\prod_kx_k}{\max_kx_k}+2\max_kx_k-\sum_kx_k$. The closure of the Gaussian entropy region generated by such $g$ vectors is characterized by,
\begin{enumerate}
\item For $\tilde{y}\leq 0$:
\be
g_{ij}\leq g_i+g_j~~,~~g_{123}\leq\min_j(g_{ij}+g_{jk}-g_j).
\label{3ent}
\ee
\item For $\tilde{y}>0$:
\begin{eqnarray}
\hspace*{-.35in} g_{ij}\leq g_i+g_j~~,~~g_{123}\leq \sum_kg_k+\log\left(\max\left[0,-2+\sum_k x_k+2\sqrt{\prod_k(1-x_k)}\right]\right)
\label{tighter123}
\end{eqnarray}
\end{enumerate}
\label{thm:3Gauss}

\end{conjecture}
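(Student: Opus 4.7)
The plan is to invoke Theorem~\ref{thm:vector-vs-scalar} to reduce the problem to describing the closed convex hull of the scalar jointly Gaussian entropy region for $n=3$. After normalizing variances, set $x_k = 1 - \rho_{ij}^2 = e^{g_{ij}-g_i-g_j} \in [0,1]$ and $y = e^{g_{123}-g_1-g_2-g_3}$. A direct expansion of the $3\times 3$ determinant identifies the scalar Gaussian locus as the two-branched surface
\begin{equation*}
y = y_{\pm}(x_1,x_2,x_3) := \sum_k x_k - 2 \pm 2\sqrt{\prod_k(1-x_k)},
\end{equation*}
with the sign equal to $\mathrm{sign}(\rho_{12}\rho_{13}\rho_{23})$ and $y \geq 0$ imposed by positive semidefiniteness. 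Since $y_+ \geq y_-$, the upper bound on $g_{123}$ is controlled entirely by the $+$-branch, and the conjecture amounts to describing the upper concave envelope of $\log y_+(e^{a_1},e^{a_2},e^{a_3})$ in the log-coordinates $a_k := g_{ij}-g_i-g_j$.

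Necessity splits into two regimes. Submodularity $g_{123} \leq g_{ij} + g_{jk} - g_j$ is universal (Shannon) and yields the first conjectured bound. For the second, I would first verify pointwise that every scalar Gaussian satisfies $d := g_{123} - \sum_k g_k \leq \log y_+(x)$, and then establish the key concavity lemma: the map $(a_1,a_2,a_3) \mapsto \log y_+(e^{a_1}, e^{a_2}, e^{a_3})$ is jointly concave on the open set $\{\tilde y > 0\}$, where $\tilde y$ is as defined in the statement. Concavity ensures that the inequality $d \leq \log y_+(x)$ is preserved under convex combinations of log-entropy vectors, and Theorem~\ref{thm:vector-vs-scalar} then propagates it from scalar to all vector-valued Gaussian entropy vectors.

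Achievability is a matching construction. On $\{\tilde y > 0\}$, each point of $y = y_+(x)$ is realized by a scalar Gaussian with all correlations of the same sign; by the concavity above, the closed convex hull of these scalar points fills the region $\{d \leq \log y_+(x)\}$. On $\{\tilde y \leq 0\}$, the conditional-independence locus $x_j = x_i + x_k - x_i x_k$ (with $j$ an index maximizing $\{x_k\}$) supports scalar Gaussians attaining $d = \log(x_i x_k)$, which meets the submodularity bound; convex combinations of such CI configurations preserve the same CI relation on the averaged entropies, and sweeping over them attains the full region beneath $\min_j(g_{ij}+g_{jk}-g_j)$. The hardest step is the concavity lemma and the identification of $\tilde y$ as the discriminant separating the two regimes; this reduces to a Hessian analysis of $\log y_+$ in log-coordinates, together with the verification that on $\{\tilde y \leq 0\}$ the concave envelope of $\log y_+$ collapses exactly to $\min_j(a_i+a_k)$. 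A secondary subtlety is that the CI locus is only a two-parameter family of scalar Gaussians, so one must argue that its convex combinations sweep the full three-parameter interior of the Case~1 region.
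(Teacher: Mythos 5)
Your overall strategy---reduce to scalar Gaussians via Theorem~\ref{thm:vector-vs-scalar}, write the scalar locus as $y=y_{\pm}(x)$, and then analyze the convex hull in log-coordinates---is close in spirit to the paper, but your proposal has the same essential gap that makes this statement a \emph{conjecture} rather than a theorem in the paper, plus two structural problems. The paper's route is: characterize the boundary of the vector-valued region via KKT conditions (Lemma~\ref{lem:boundary}), which forces the extremal covariance to be a timeshare of one ``orthogonally correlated'' block and one independent block with matched diagonals; this reduces the entire convex-hull computation, for fixed $g_i,g_{ij}$, to a one-dimensional optimization of $f(1/\theta)$ over the timeshare fraction $\theta\in(0,1]$ (Lemma~\ref{lem:f}, Corollary~\ref{cor:ftheta}). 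The step the paper cannot prove is exactly Conjecture~\ref{conj:func-conj}: that $\tilde y=y(1)>0$ forces $\max_{\delta\ge 1}f(\delta)=f(1)$. Your ``key concavity lemma'' (joint concavity of $G(a):=\log y_+(e^{a_1},e^{a_2},e^{a_3})$ on $\{\tilde y>0\}$) is not carried out, and it is at least as strong as that open inequality: since $G(0)=0$, concavity along the segment from $0$ to $\delta a$ gives $G(a)\ge \frac1\delta G(\delta a)$, i.e.\ $f(\delta)\le f(1)$ for $\delta\ge1$, which is precisely the unproved Conjecture~\ref{conj:func-conj}. So deferring to ``a Hessian analysis'' does not close the gap; it relocates it.

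Beyond that, your necessity argument for Case~2 is incomplete even granting the concavity claim: concavity of $G$ on the set $\{\tilde y>0\}$ only controls convex combinations whose constituent points all lie in that set (and requires that set to be convex in log-coordinates, which you have not checked). A convex combination landing in $\{\tilde y>0\}$ may have constituents in $\{\tilde y\le 0\}$, where $G$ is provably \emph{not} an upper bound for the hull (the hull there rises to the submodularity plane), or constituents with $y_+\le 0$ where $G=-\infty$. The paper avoids this entirely by characterizing the extreme points of the whole region at once via Lemma~\ref{lem:boundary}, rather than by patching together concavity on subregions. Finally, your Case~1 achievability via convex combinations of CI configurations is the right idea but, as you note yourself, the CI locus is lower-dimensional; the paper's construction instead matches all six of $q_i,q_{ij}$ exactly for \emph{every} $\theta$ by solving $\alpha_{ij}=\pm\sqrt{q_iq_j(1-(q_{ij}/q_iq_j)^{1/\theta})}$, so that only $q_{123}$ varies with $\theta$ and Corollary~\ref{cor:ftheta} delivers the supremum $\min_j(g_{ij}+g_{jk}-g_j)$ when $\tilde y\le 0$. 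You would need an analogous explicit parametrization to make your sweep argument rigorous.
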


In other words we conjecture that the entropy region for three Gaussian random variables is simply given by the
above inequalities. Thus, when $\tilde{y}\leq 0$, the Gaussian
entropy region coincides with the continuous entropy region; however,
when $\tilde{y}>0$ (and this can happen for some valid entropy vectors), we have
the tighter upper bound (\ref{tighter123}) on
$g_{123}$. In other words the actual Gaussian entropy region for $n=3$ vector-valued
random variables is {\em strictly} smaller than the entropy
region of 3 arbitrarily distributed continuous random variables.

We strongly believe the above conjecture to be true. The missing gap in our proof is a certain function inequality, which all our simulations suggest to be true (see Conjecture \ref{conj:func-conj}).

\subsection{Proof of Main Results for $n=3$}
\label{sec:proofs}

In what follows we give the proof of the results stated in the previous section for $n=3$. 
The basic idea
is to determine the structure of the Gaussian random variables that
generate the {\em boundary} of the entropy region for Gaussians, and
then to determine what the boundary entropies are. We need a few lemmas:

%

\begin{lemma}[Boundary of the Gaussian Entropy Region] The boundary of
  the Gaussian entropy region is generated by the concatenation of a
  set of vector valued Gaussian random variables with covariance
\be
\ba{ccc} \alpha_{11}I_{\hat T} & \alpha_{12}\Phi_{12} &
  \alpha_{13}\Phi_{13} \\
\alpha_{12}\Phi_{12}^{t} & \alpha_{22}I_{\hat T} & \alpha_{23}\Phi_{23} \\
\alpha_{13}\Phi_{13}^{t} & \alpha_{23}\Phi_{23}^{t} & \alpha_{33}I_{\hat T}\ea ,
\ee
where the $\Phi_{ij}$ are {\em orthogonal} matrices such that $\Phi_{13}^t\Phi_{12}\Phi_{23} = \mbox{sign}(\alpha_{12}\alpha_{13}\alpha_{23}) I$, and another set
  of independent vector-valued Gaussian random variables with covariance
\be
\ba{ccc} \alpha_{11}I_{T-{\hat T}} & 0 & 0 \\
0 & \alpha_{22}I_{T-{\hat T}} & 0 \\
0 & 0 & \alpha_{33}I_{T-{\hat T}} \ea .
\ee
\label{lem:boundary}
\end{lemma}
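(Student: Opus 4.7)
The plan is to characterize the boundary of the Gaussian entropy region by a constrained extremization: holding the six marginal entropies $g_{i}$ and $g_{ij}$ fixed, determine which covariance matrices $R$ attain the extremal values of the joint entropy $g_{123}=\frac{1}{T}\log\det R$. Orthogonal changes of basis within each variable $X_{i}$ leave every $g_{s}$ invariant, so we may diagonalize each $R_{ii}$. Applying the canonical correlation decomposition $R_{ii}^{-1/2}R_{ij}R_{jj}^{-1/2}=U_{ij}\Sigma_{ij}V_{ij}^{t}$ gives $g_{ij}=g_{i}+g_{j}+\frac{1}{T}\sum_{k}\log(1-\sigma_{ij,k}^{2})$, so fixing $g_{i}$ and $g_{ij}$ amounts to fixing the products $\prod_{k}\lambda_{i,k}$ and $\prod_{k}(1-\sigma_{ij,k}^{2})$. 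Because $\det R$ factors as $\prod_{i}\det R_{ii}$ times a quantity depending only on the whitened off-diagonal blocks, the individual eigenvalues of $R_{ii}$ can be taken equal without loss of generality, and we restrict to $R_{ii}=\alpha_{ii}I_{T}$ from the outset.

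With $R_{ii}=\alpha_{ii}I$ in place, the extremization is over the $\sigma_{ij,k}$ and the left and right singular vectors of the $R_{ij}$. A Lagrange multiplier computation, or equivalently a pairwise symmetrization argument on the $\sigma_{ij,k}$, shows that the extrema of $\log\det R$, subject to the three products $\prod_{k}(1-\sigma_{ij,k}^{2})$ being fixed, are attained only when within each block all canonical correlations coincide, i.e.\ $R_{ij}=\alpha_{ij}\Phi_{ij}$ for some orthogonal $\Phi_{ij}$. The remaining three-fold orthogonal gauge within the $X_{i}$ can then be used to normalize $\Phi_{12}=\Phi_{13}=I$, collapsing the orthogonal dependence to the gauge-invariant product $\Psi:=\Phi_{13}^{t}\Phi_{12}\Phi_{23}$. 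Expanding $\det R$ by the Schur complement of the top-left block yields an expression of the form $\det\bigl(aI+b(\Psi+\Psi^{t})\bigr)$; extremizing over orthogonal $\Psi$ then forces $\Psi=\pm I_{\hat T}$, with the sign determined by $\mathrm{sign}(\alpha_{12}\alpha_{13}\alpha_{23})$ so that the corresponding Schur complement, and hence $R$ itself, remains positive semidefinite. This is precisely the ``correlated'' block in the lemma.

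Such an extremal correlated block of dimension $\hat T$ typically realizes only part of each singleton entropy $g_{i}$, because its $\alpha_{ii}$ are pinned by the pairwise constraints. The remaining singleton budget is supplied by an independent vector-valued Gaussian of dimension $T-\hat T$ with block-diagonal covariance $\mathrm{diag}(\alpha_{11}I_{T-\hat T},\alpha_{22}I_{T-\hat T},\alpha_{33}I_{T-\hat T})$; independence preserves pairwise entropies and contributes only additively to singleton entropies, and sweeping $\hat T\in\{0,\ldots,T\}$ together with $\{\alpha_{ii},\alpha_{ij}\}$ traces out the full boundary. The main obstacle is the intermediate step that the extremal off-diagonal blocks must have all equal singular values: since $\log\det R$ is neither convex nor concave in the $\sigma_{ij,k}$ and is not a symmetric function of them across different blocks, the ``flat profile'' does not follow from classical Schur majorization and requires a direct pairwise monotonicity argument on $\det R$, exploiting the fact that once the gauge is fixed the three spectra enter $\det R$ only through a symmetric combination.
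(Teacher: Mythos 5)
Your overall scaffolding (fix the six lower-order entropies, extremize $g_{123}$, whiten so that $R_{ii}=\alpha_{ii}I$, reduce the orthogonal freedom to the gauge-invariant product $\Phi_{13}^t\Phi_{12}\Phi_{23}$ and pin it to $\pm I$ via a bound of the type in Lemma \ref{lem:bubb}) is broadly parallel to the paper's KKT-based derivation. But the step you yourself flag as "the main obstacle" is not merely unproved --- as stated it is false, and the failure is exactly where the content of the lemma lives. You claim the extremization forces \emph{all} canonical correlations of each block $R_{ij}$ to coincide, i.e.\ $R_{ij}=\alpha_{ij}\Phi_{ij}$ with $\Phi_{ij}$ a full $T\times T$ orthogonal matrix. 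The stationary configurations generically have a \emph{two-level} singular-value profile: $\hat T$ canonical correlations at a common nonzero value and $T-\hat T$ at zero. That is precisely the concatenation structure the lemma asserts, and it is what the paper extracts by observing that the stacked blocks $\ba{cc} R_{ik}^t & R_{jk}^t\ea^t$ in (\ref{RijCond-1})--(\ref{RijCond-3}) need not be full rank, introducing the unitary $\Theta$ to split off the null space, and then proving via the rank inequalities that the three deficiencies agree, $T_{12}=T_{13}=T_{23}=\hat T$ (equation (\ref{eqn:Teq})). A flat profile is only the special case $\hat T=T$, and the rest of the paper shows it is not generically optimal: the time-share fraction $\theta=\hat T/T$ becomes the free parameter over which $f(1/\theta)$ is optimized in the proof of Theorem \ref{thm:cone}, and Lemma \ref{lem:f} shows the optimum typically occurs at $\theta<1$. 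So no "pairwise monotonicity argument" can rescue the flat-profile claim; the correct extremal profile is degenerate.

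Your attempt to recover the independent $(T-\hat T)$-block after the fact also rests on a wrong accounting. You say it "supplies the remaining singleton budget," but the appended block carries the same diagonal entries $\alpha_{ii}$, so it leaves the singleton entropies unchanged; what it actually modifies is the \emph{pairwise} entropies, $m_{ij}=(\alpha_{ii}\alpha_{jj}-\alpha_{ij}^2)^{\hat T}(\alpha_{ii}\alpha_{jj})^{T-\hat T}$, and through them $g_{123}$. More importantly, the independent component cannot be bolted on by hand: it must emerge from the stationarity conditions themselves (as the common null space in (\ref{RijCond-1})--(\ref{RijCond-3})), together with the nontrivial fact --- entirely absent from your proposal --- that the three off-diagonal blocks are rank-deficient by the \emph{same} amount, so that a single block size $\hat T$ works simultaneously for all three pairs. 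Supplying the rank analysis that yields (\ref{eqn:Teq}) and the resulting structure (\ref{boundary-structure}) is what would close the gap.
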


\begin{proof}
To find the boundary region for 3 jointly Gaussian random variables, we can maximize linear functions of the entropy vector. We can therefore take an arbitrary set of constants $\gamma_s, s\subseteq \{1,2,3\}$ and  
solve the following maximization problem:
\be
\max_{h} \sum_{s\subseteq \{1,2,3\}} \gamma_{s} h_s \label{Gaussian_eqn:hoptimization}
\ee
or equivalently,
\begin{equation}\label{Gaussian_eqn:optimization}
\max_{\tilde{R}}\sum_{s\subseteq \{1,2,3\}}{\gamma_s \log\det{\tilde{R}_{[s]} }},
\end{equation}
where $\tilde{R}$ is the $3T\times3T$ block covariance matrix and $\tilde{R}_{[s]}$ denotes the submatrix of $\tilde{R}$ whose rows and columns are indexed by $s$.
We shall assume all the principal minors of $\tilde{R}$ are nonzero. The optimization problems (\ref{Gaussian_eqn:hoptimization}-\ref{Gaussian_eqn:optimization}) also come about when we fix any 6 of the entropies and try to maximize the last one. 
KKT conditions necessitate that the derivative of (\ref{Gaussian_eqn:optimization}) with respect to $\tilde{R}$ be zero, i.e. \mbox{$\frac{\partial}{\partial \tilde{R}} \left(\sum_{s\subseteq \{1,2,3\}}{\gamma_s \log\det{\tilde{R}_{[s]}}}\right) = 0$}. To compute the derivatives we note that \mbox{$\frac{\partial}{\partial X} \log \det X = X^{-t} $}. However since covariance matrix $\tilde{R}$ is symmetric, we can further write \mbox{$\frac{\partial}{\partial X} \log \det X = X^{-1} $}.
If we adopt the following notation,
\bea
\nonumber && \hspace{-20pt} \tilde{S} = \left(\begin{array}{cc} \tilde{S}_{11} & \tilde{S}_{12} \\ \tilde{S}_{21} & \tilde{S}_{22} \end{array}\right)  = \left(\begin{array}{cc} \tilde{R}_{11} & \tilde{R}_{12} \\ \tilde{R}_{21} & \tilde{R}_{22} \end{array}\right)^{-1},  \tilde{W} = \left(\begin{array}{cc} \tilde{W}_{11} & \tilde{W}_{13} \\ \tilde{W}_{31} & \tilde{W}_{33} \end{array}\right)  = \left(\begin{array}{cc} \tilde{R}_{11} & \tilde{R}_{13} \\ \tilde{R}_{31} & \tilde{R}_{33} \end{array}\right)^{-1}\\
&& \hspace{-20pt} \tilde{U} = \left(\begin{array}{cc} \tilde{U}_{22} & \tilde{U}_{23} \\ \tilde{U}_{32} & \tilde{U}_{33} \end{array}\right)  = \left(\begin{array}{cc} \tilde{R}_{22} & \tilde{R}_{23} \\ \tilde{R}_{32} & \tilde{R}_{33} \end{array}\right)^{-1}, ~ \tilde{V}_{ij} = (\tilde{R}^{-1})_{ij} \label{Gaussian_eqn:studef1}
\eea
Then we obtain, 
\footnote{Note that had we taken the derivative with respect to a symmetric matrix $X$ from the beginning, then we had, \mbox{$\frac{\partial}{\partial X} \log \det X = 2X^{-1} - \mbox{diag}(X^{-1}) $}, where ``\mbox{diag($\cdot$)}'' denotes the diagonal elements of its argument. However this derivation would also result in equation (\ref{Gaussian_eqn:optimMatrix}).}
\bea
\nonumber  && \gamma_{1} \hspace{0pt} \left( \begin{array}{ccc} \tilde{R}_{11}^{-1} & 0 & 0 \\ 0 & 0 & 0 \\ 0 & 0 & 0 \end{array}\right) + \gamma_{2} \left(\begin{array}{ccc} 0 & 0 & 0 \\ 0 & \tilde{R}_{22}^{-1} & 0 \\ 0 & 0 & 0 \end{array}\right) + \gamma_3 \left(\begin{array}{ccc} 0 & 0 & 0 \\ 0 & 0 & 0 \\ 0 & 0 & \tilde{R}_{33}^{-1} \end{array}\right) +
\gamma_{12} \left(\begin{array}{ccc} \tilde{S}_{11} & \tilde{S}_{12} & 0 \\ \tilde{S}_{21} & \tilde{S}_{22} & 0 \\ 0 & 0 & 0 \end{array}\right) \\
&& \hspace{20pt} +\gamma_{13} \left(\begin{array}{ccc} \tilde{W}_{11} & 0 & \tilde{W}_{13} \\ 0 & 0 & 0 \\ \tilde{W}_{31} & 0 & \tilde{W}_{33} \end{array} \right) 
 +  \gamma_{23} \hspace{0pt} \left(\begin{array}{ccc} 0 & 0 & 0 \\ 0 & \tilde{U}_{22} & \tilde{U}_{23}\\ 0 & \tilde{U}_{32} & \tilde{U}_{33} \end{array}\right)
+  \gamma_{123} \tilde{R}^{-1} = 0 \label{Gaussian_eqn:optimMatrix}
\eea
Now if we assume $\det \tilde{R}_{ii} = \alpha_{ii}^T, \alpha_{ii}>0$, then we can define the following ``unit-determinant'' matrix:
\bea
L = \left(\begin{array}{ccc}\frac{1}{\sqrt{\alpha_{11}}}\tilde{R}_{11}^{1/2} & 0 & 0 \\ 0 & \frac{1}{\sqrt{\alpha_{22}}}\tilde{R}_{22}^{1/2} & 0 \\ 0 & 0 & \frac{1}{\sqrt{\alpha_{33}}}\tilde{R}_{33}^{1/2} \end{array}\right)
\eea
Let $L_{[s]}, s\subseteq \{1,2,3\}$ be the submatrix of $L$ obtained from choosing rows and columns of $L$ that are indexed by $s$. 
Then if we multiply (\ref{Gaussian_eqn:optimMatrix}) from left and right by $L$ we obtain:
\bea
\nonumber  && \hspace{-20pt} \gamma_{1} \hspace{0pt} \left( \begin{array}{ccc} L_{[1]}\tilde{R}_{11}^{-1}L_{[1]} & 0 & 0 \\ 0 & 0 & 0 \\ 0 & 0 & 0 \end{array}\right) 
+ \gamma_{2} \left(\begin{array}{ccc} 0 & 0 & 0 \\ 0 & L_{[2]}\tilde{R}_{22}^{-1}L_{[2]} & 0 \\ 0 & 0 & 0 \end{array}\right) 
+ \gamma_3 \left(\begin{array}{ccc} 0 & 0 & 0 \\ 0 & 0 & 0 \\ 0 & 0 & L_{[3]}\tilde{R}_{33}^{-1}L_{[3]} \end{array}\right) \\
&& \hspace{-20pt} +\gamma_{12} \left(\begin{array}{cc} 
L_{[1,2]}\left(\begin{array}{cc} \tilde{S}_{11} & \tilde{S}_{12} \\ \tilde{S}_{21} & \tilde{S}_{22} \end{array}\right)L_{[1,2]} & 
\begin{array}{c} 0 \\ 0\end{array}\\ 
\begin{array}{cc} 0 & \hspace{40pt} 0\end{array} & 0 
 \end{array}\right) 
+\gamma_{13} \left(\begin{array}{ccc} \left( L_{[1,3]}\tilde{W}L_{[1,3]} \right)_{1,1} & 0 & \left( L_{[1,3]}\tilde{W}L_{[1,3]} \right)_{1,2} \\ 
0 & 0 & 0 \\ \left( L_{[1,3]}\tilde{W}L_{[1,3]} \right)_{2,1}  & 0 &  \left( L_{[1,3]}\tilde{W}L_{[1,3]} \right)_{2,2}
\end{array} \right) \\
&& \hspace{-20pt} +  \gamma_{23} \hspace{0pt} \left(\begin{array}{cc} 
0 &  \begin{array}{cc} 0 & \hspace{40pt} 0\end{array}  \\ 
\begin{array}{c} 0 \\ 0\end{array}  & L_{[2,3]}\left( \begin{array}{cc} \tilde{U}_{22} & \tilde{U}_{23}\\ \tilde{U}_{32} & \tilde{U}_{33} \end{array}\right)L_{[2,3]}
\end{array}\right)
+  \gamma_{123} L\tilde{R}^{-1}L = 0 
\eea
Due to the special structure of $L$, we have $(L_{[s]})^{-1}R_{[s]}(L_{[s]})^{-1} = (L^{-1}R L^{-1})_{[s]}$
therefore if we define
\bea
\label{eqn:similarityRLR}
R = L^{-1} \tilde{R} L^{-1}
\eea
and $S, W, U$ and $V$ also similar to (\ref{Gaussian_eqn:studef1}), i.e.,
\bea
\nonumber && \hspace{-20pt} S = \left(\begin{array}{cc} {S}_{11} & {S}_{12} \\ {S}_{21} & {S}_{22} \end{array}\right)  = \left(\begin{array}{cc} {R}_{11} & {R}_{12} \\ {R}_{21} & {R}_{22} \end{array}\right)^{-1},  W = \left(\begin{array}{cc} {W}_{11} & {W}_{13} \\ {W}_{31} & {W}_{33} \end{array}\right)  = \left(\begin{array}{cc} {R}_{11} & {R}_{13} \\ {R}_{31} & {R}_{33} \end{array}\right)^{-1}\\
&& \hspace{-20pt} U = \left(\begin{array}{cc} {U}_{22} & {U}_{23} \\ {U}_{32} & {U}_{33} \end{array}\right)  = \left(\begin{array}{cc} {R}_{22} & {R}_{23} \\ {R}_{32} & {R}_{33} \end{array}\right)^{-1}, ~ {V}_{ij} = ({R}^{-1})_{ij} \label{Gaussian_eqn:studef2}
\eea
it follows that (\ref{Gaussian_eqn:optimMatrix}) will be satisfied by $R, S, W, U, V$ instead of $\tilde{R}, \tilde{S}, \tilde{W}, \tilde{U}, \tilde{V}$. In other words we have the following equation:
\bea
\nonumber  && \hspace{-30pt} \left( \begin{array}{ccc} \gamma_1 R_{11}^{-1} & 0 & 0 \\ 0 & \gamma_2 R_{22}^{-1} & 0 \\ 0 & 0 & \gamma_3 R_{33}^{-1} \end{array}\right) + 
\gamma_{12} \left(\begin{array}{ccc} S_{11} & {S}_{12} & 0 \\ S_{21} & S_{22} & 0 \\ 0 & 0 & 0 \end{array}\right) 
 +\gamma_{13} \left(\begin{array}{ccc} W_{11} & 0 & W_{13} \\ 0 & 0 & 0 \\ W_{31} & 0 & W_{33} \end{array} \right) \\
&& \hspace{-20pt} +  \gamma_{23} \hspace{0pt} \left(\begin{array}{ccc} 0 & 0 & 0 \\ 0 & U_{22} & U_{23}\\ 0 & U_{32} & U_{33} \end{array}\right)
+  \gamma_{123} R^{-1} = 0 
\label{Gaussian_eqn:optimMatrix_forR}
\eea
Multiplying (\ref{Gaussian_eqn:optimMatrix_forR}) by $R$ from the right we obtain,
\bea
\nonumber && \hspace{-17pt} \left(\begin{array}{ccc} \gamma_1 I & \gamma_1 {R}_{11}^{-1}{R}_{12} & \gamma_1 {R}_{11}^{-1}{R}_{13} \\ \gamma_2{R}_{22}^{-1}{R}_{21} & \gamma_2I & \gamma_2{R}_{22}^{-1}{R}_{23} \\ \gamma_3{R}_{33}^{-1}{R}_{31} & \gamma_3{R}_{33}^{-1}{R}_{32} & \gamma_3I \end{array} \right)
+  \gamma_{12} \left(\begin{array}{cc} \begin{array}{cc} I & 0 \\ 0 & I \end{array} & \hspace{-10pt} \left( \hspace{-3pt} \begin{array}{cc} {S}_{11} & \hspace{-2pt} {S}_{12} \\ {S}_{21} & \hspace{-2pt} {S}_{22} \end{array}\hspace{-3pt} \right) \hspace{-4pt} \left( \hspace{-3pt} \begin{array}{c} {R}_{13}\\ {R}_{23} \end{array} \hspace{-3pt} \right) \\ \begin{array}{cc} 0 & 0 \end{array} & \hspace{-10pt} 0 \end{array}\right)\\
 && +  \gamma_{13}  \hspace{0pt} \left(\begin{array}{ccc} I & {W}_{11}{R}_{12}+{W}_{13}{R}_{32} & 0 \\ 0 & 0 & 0 \\ 0 & {W}_{31}{R}_{12} + {W}_{33}{R}_{32} & I \end{array}\right)
+  \gamma_{23} \left(\begin{array}{cc} 0 & \hspace{-10pt} \begin{array}{cc} 0 & 0 \end{array} \\  \left( \hspace{-3pt} \begin{array}{cc} {U}_{22} & \hspace{-2pt} {U}_{23} \\ {U}_{32} & \hspace{-2pt} {U}_{33} \end{array} \hspace{-3pt} \right) \hspace{-4pt} \left( \hspace{-3pt} \begin{array}{c} {R}_{21} \\ {R}_{31} \end{array} \hspace{-3pt} \right) & \hspace{-10pt} \begin{array}{cc} I & 0 \\ 0 & I \end{array}  \end{array}\right) + \gamma_{123}I = 0 
\label{Gaussian_eqn:optimMatrix2}
\eea
Note that equating the diagonal elements to zero results in the following constraints on $\gamma$ coefficients:
\bea
\label{balance1} \gamma_1+\gamma_{12}+\gamma_{13}+\gamma_{123} = 0\\
\label{balance2} \gamma_2+\gamma_{12}+\gamma_{23}+\gamma_{123} = 0\\
\label{balance3} \gamma_3+\gamma_{13}+\gamma_{23}+\gamma_{123} = 0
\eea
These imply that the equations representing the touching hyperplanes to the Gaussian region should be balanced (see Theorem \ref{thm:contin-ineq}).
Now considering blocks (2,1), (3,1) together, (1,2), (3,2) with each other and (1,3), (2,3) simultaneously in (\ref{Gaussian_eqn:optimMatrix2}) and noting that $R_{ii} = \alpha_{ii}I$,  we obtain 

\begin{eqnarray} \label{Rij-1}
\left(\ba{cc} \frac{\gamma_1}{\alpha_{11}}I  & 0 \\ 0 & \frac{\gamma_2}{\alpha_{22}}I \ea  +
 \gamma_{12} {\ba{cc} R_{11} & R_{12} \\ R_{21} & R_{22} \ea}^{-1} \right) \ba{c} R_{13}\\ R_{23} \ea = 0
\end{eqnarray}

\begin{eqnarray} \label{Rij-2}
\left(\ba{cc} \frac{\gamma_1}{\alpha_{11}}I  & 0 \\ 0 & \frac{\gamma_3}{\alpha_{33}}I \ea  +
 \gamma_{13} {\ba{cc} R_{11} & R_{13} \\ R_{31} & R_{33} \ea}^{-1} \right) \ba{c} R_{12}\\ R_{32} \ea = 0
\end{eqnarray}

\begin{eqnarray} \label{Rij-3}
\left(\ba{cc} \frac{\gamma_2}{\alpha_{22}}I  & 0 \\ 0 & \frac{\gamma_3}{\alpha_{33}}I \ea  +
 \gamma_{23} {\ba{cc} R_{22} & R_{23} \\ R_{32} & R_{33} \ea}^{-1} \right) \ba{c} R_{21}\\ R_{31} \ea = 0
\end{eqnarray}
Simplifying equations (\ref{Rij-1})--(\ref{Rij-3}) by multiplying each by the relevant ${\ba{cc} R_{ii} & R_{ij} \\ R_{ji} & R_{jj} \ea}$, we obtain:
\begin{eqnarray} \label{RijCond-1}
\hspace{-35pt} &&\ba{cc} (\gamma_1+\gamma_{12} )I & \frac{\gamma_2}{\alpha_{22}}R_{12} \\ \frac{\gamma_1}{\alpha_{11}}R_{21} & (\gamma_2+\gamma_{12} )I \ea  \ba{c} R_{13}\\ R_{23} \ea = 0
\end{eqnarray}

\begin{eqnarray} \label{RijCond-2}
\hspace{-35pt} &&\ba{cc} (\gamma_1+\gamma_{13} )I & \frac{\gamma_3}{\alpha_{33}}R_{13} \\ \frac{\gamma_1}{\alpha_{11}}R_{31} & (\gamma_3+\gamma_{13} )I \ea  \ba{c} R_{12}\\ R_{32} \ea = 0
\end{eqnarray}

\begin{eqnarray} \label{RijCond-3}
\hspace{-35pt} &&\ba{cc} (\gamma_2+\gamma_{23} )I & \frac{\gamma_3}{\alpha_{33}}R_{23} \\ \frac{\gamma_2}{\alpha_{22}}R_{32} & (\gamma_3+\gamma_{23} )I \ea  \ba{c} R_{21}\\ R_{31} \ea = 0
\end{eqnarray}
Now if the $2T\times T$ matrix $\ba{cc} R_{ik}^t & R_{jk}^t \ea^{t},~i,j,k \in \{1,2,3\}$ were full rank, the rank of the left $2T\times 2T$ matrix in either of the equations (\ref{RijCond-1})--(\ref{RijCond-3}) would be $T$ and therefore the Schur complement of its (1,1) block should be zero, i.e.:
\begin{equation}
(\gamma_j+\gamma_{ij})I-\frac{\gamma_i\gamma_j}{\alpha_{ii}\alpha_{jj}(\gamma_i+\gamma_{ij}) }R_{ji}R_{ij} = 0
\end{equation}
in other words:
\begin{equation} \label{unitary_cond}
R_{ji}R_{ij}= R_{ij}R_{ji} = \left( \frac{(\gamma_i+\gamma_{ij})(\gamma_j+\gamma_{ij}) }{\gamma_i\gamma_j } \alpha_{ii}\alpha_{jj} \right)  I
\end{equation}
Since $R$ is symmetric, $R_{ji} = R_{ij}^t$. This implies that off-diagonal blocks of $R$ are multiples of an orthogonal matrix, i.e., 
\be
R_{ij} = \alpha_{ij} \Phi_{ij}
\label{Rsym}
\ee
for some orthogonal matrix $\Phi_{ij}$ and $\alpha_{ij}$ such that
\be
\alpha_{ij}^2 = {\frac{(\gamma_i+\gamma_{ij})(\gamma_j+\gamma_{ij}) }{\gamma_i\gamma_j } \alpha_{ii}\alpha_{jj}}
\label{alphaij}
\ee
Stating (\ref{Rsym}) explicitly, we have:
\bea
\label{R12} R_{12} = R_{21}^2 = \alpha_{12} \Phi_{12}\\
\label{R13} R_{13} = R_{31}^2 = \alpha_{13} \Phi_{13}\\
\label{R23} R_{23} = R_{32}^2 = \alpha_{23} \Phi_{23}
\eea
Replacing for $R_{ij}$ from (\ref{R12})--(\ref{R23}) in equations (\ref{RijCond-1})--(\ref{RijCond-3}) we obtain 6 equations, which turn out to be all the same as the following equation when we consider the balancedness constraints of (\ref{balance1})--(\ref{balance3}) and definition of $\alpha_{ij}^2$ in (\ref{alphaij}):
\be
\Phi_{13} = -\frac{\gamma_2}{\gamma_1+\gamma_{12} } ~ \frac{\alpha_{12}\alpha_{23} }{\alpha_{2}\alpha_{13} }~ \Phi_{12}\Phi_{23}
\label{phi13phi12phi23}
\ee
Simplifying (\ref{phi13phi12phi23}) using the fact that $\alpha_{ij} = \pm \sqrt{\frac{(\gamma_i+\gamma_{ij})(\gamma_j+\gamma_{ij}) }{\gamma_i\gamma_j } \alpha_{ii}\alpha_{jj} } $ from (\ref{alphaij}), we obtain that:
\bea
\Phi_{13} = \begin{cases}
+\Phi_{12}\Phi_{23}~~~~\mbox{if}~~~\alpha_{12}\alpha_{13}\alpha_{23}>0 \\ -\Phi_{12}\Phi_{23}~~~~\mbox{if}~~~\alpha_{12}\alpha_{13}\alpha_{23}<0
\end{cases}
\eea

Now note that in the general case $\ba{cc} R_{ik}^t & R_{jk}^t \ea^{t} $ in equations (\ref{RijCond-1})--(\ref{RijCond-3}) need not be full rank. Therefore there is a $T\times T$ unitary matrix $\theta_{ij}$ such that:
\begin{equation} \label{fullrank_trans}
\ba{c} R_{ik}\\ R_{jk} \ea \theta_{ij} = \ba{cc} \overline{R}_{ik} & 0 \\ \overline{R}_{jk} & 0 \ea
\end{equation}
Writing explicitly:
\bea
\ba{c} R_{13}\\ R_{23} \ea \theta_{12} = \ba{cc} \overline{R}_{13} & 0 \\ \overline{R}_{23} & 0 \ea,~~~
\ba{c} R_{12}\\ R_{32} \ea \theta_{13} = \ba{cc} \overline{R}_{12} & 0 \\ \overline{R}_{32} & 0 \ea,~~~
\ba{c} R_{21}\\ R_{31} \ea \theta_{23} = \ba{cc} \overline{R}_{21} & 0 \\ \overline{R}_{31} & 0 \ea,
\eea
where $\ba{cc} \overline{R}_{ik}^t & \overline{R}_{jk}^t \ea^t$ is now full rank and we can assume its column rank (as well as its column size) to be $T_{ij}$ where $T_{ij}<T$.
This suggests doing a similarity transformation on $R$ with the following unitary matrix without affecting the block principal minors:
\begin{equation}
\Theta = \ba{ccc} \theta_{23} & 0 & 0\\0 & \theta_{13} & 0 \\ 0 & 0 & \theta_{12} \ea
\end{equation}
From which we obtain:
\begin{equation}
\Theta^{*}R\Theta = \ba{ccc} \alpha_{11} I & \theta^*_{23} R_{12} \theta_{13} & \theta^*_{23} R_{13} \theta_{12} \\
\theta^*_{13} R_{21} \theta_{23} & \alpha_{22} I & \theta^*_{13} R_{23} \theta_{12} \\
\theta^*_{12} R_{31} \theta_{23} & \theta^*_{12} R_{32} \theta_{31} & \alpha_{33} I \ea
\end{equation}
Considering $R_{21} \theta_{23}$ and $\theta^*_{31} R_{21}$ simultaneously and using (\ref{fullrank_trans}) we have,
\bea
&& R_{21} \theta_{23} = \left(\begin{array}{cc}\overline{R}_{21} & 0 \end{array}\right)\\
&& \theta_{13}^* R_{21} = (R_{12}\theta_{13})^* = \left(\begin{array}{c}\overline{R}_{12}^* \\ 0 \end{array}\right)
\eea
Therefore we can simply obtain the following structure for $\theta^*_{13} R_{21} \theta_{23}$:
\begin{equation}
\theta^*_{13} R_{21} \theta_{23} = \ba{cc} \hat{R}_{21} & 0 \\ 0 & 0 \ea
\end{equation}
where the dimension of $\hat{R}_{21}$ is $T_{13}\times T_{23}$. A similar argument for other elements yields the following structure for $\Theta^{*}R\Theta$:
\begin{equation} \label{Rstructure}
\ba{cccccc}\hspace{-5pt} \alpha_{11} I_{T_{23}} & \hspace{-10pt} 0 & \hspace{-10pt} \hat{R}_{12} & \hspace{-10pt} 0 & \hspace{-10pt} \hat{R}_{13} & \hspace{-10pt} 0 \\
\hspace{-10pt} 0 & \hspace{-10pt} \alpha_{11} I_{T-T_{23}} & \hspace{-10pt} 0 & \hspace{-10pt} 0 & \hspace{-10pt} 0 & \hspace{-10pt} 0 \\
\hspace{-10pt} \hat{R}_{21} & \hspace{-10pt} 0 & \hspace{-10pt} \alpha_{22} I_{T_{13}} & \hspace{-10pt} 0 & \hspace{-10pt} \hat{R}_{23} & \hspace{-10pt} 0 \\
\hspace{-10pt} 0 & \hspace{-10pt} 0 & \hspace{-10pt} 0 & \hspace{-10pt} \alpha_{22} I_{T-T_{13}} & \hspace{-10pt} 0 & \hspace{-10pt} 0 \\
\hspace{-10pt} \hat{R}_{31} & \hspace{-10pt} 0 & \hspace{-10pt} \hat{R}_{32} & \hspace{-10pt} 0 & \hspace{-10pt} \alpha_{33} I_{T_{12}} & \hspace{-10pt} 0 \\
\hspace{-10pt} 0 & \hspace{-10pt} 0 & \hspace{-10pt} 0  & \hspace{-10pt} 0 & \hspace{-10pt} 0 & \hspace{-10pt} \alpha_{33} I_{T-T_{12}} \hspace{-5pt} \ea
\end{equation}
Now define $R' = \Theta^* R \Theta$ and similar to (\ref{Gaussian_eqn:studef2}) let $S' = (R'_{[1,2]}) ^{-1}, W' = (R'_{[1,3]})^{-1}, U' = (R'_{[2,3]})^{-1}$ and $V' = R'^{-1}$ where for $s\subseteq \{1,2,3\}$, $R'_{[s]}$ denotes the submatrix of $R'$ obtained from choosing the block rows and block columns indexed by $s$.
If we multiply (\ref{Gaussian_eqn:optimMatrix_forR}) from the left by $\Theta^*$ and from the right by $\Theta$, then it turns out that (\ref{Gaussian_eqn:optimMatrix_forR}) is satisfied when $R, S, W, U, V$ are replaced $R', S', W', U', V'$.  Therefore we can write equations (\ref{RijCond-1})--(\ref{RijCond-3}) for $R'$. Doing so gives the following:
\begin{eqnarray} 
\hspace{-35pt} &&\ba{cc} (\gamma_i+\gamma_{ij} )I & \frac{\gamma_j}{\alpha_{jj}}R'_{ij} \\ \frac{\gamma_i}{\alpha_{ii}}R'_{ji} & (\gamma_j+\gamma_{ij} )I \ea  \ba{c} R'_{ik}\\ R'_{jk} \ea = 0
\end{eqnarray}
which if we replace for values of $R'_{ij}$ from (\ref{Rstructure}) and assume that $T_{ij}$ and $T_{ji}$ represent the same value, we obtain:
\bea
\left[ \begin{array}{cccc} (\gamma_i+\gamma_{ij})I_{T_{jk}} & 0 & \frac{\gamma_j}{\alpha_{jj}} \hat{R}_{ij} & 0\\
0 & I_{T-T_{jk}} & 0 & 0\\
\frac{\gamma_i}{\alpha_{ii}}\hat{R}_{ji} & 0 & (\gamma_j+\gamma_{ij})I_{T_{ik}} & 0\\
0 & 0 & 0 & (\gamma_j+\gamma_{ij})I_{T-T_{ik}}
\end{array}\right]
\left[\begin{array}{cc} \hat{R}_{ik} & 0\\ 0 & 0\\ \hat{R}_{jk} & 0\\ 0 & 0
\end{array}\right]
= 0
\eea
From which it follows that:
\begin{equation} \label{refinedRij}
\ba{cc} (\gamma_i+\gamma_{ij})I_{T_{jk}} & \frac{\gamma_j}{\alpha_{jj}}\hat{R}_{ij} \\ \frac{\gamma_i}{\alpha_{ii}}\hat{R}_{ji} & (\gamma_j+\gamma_{ij})I_{T_{ik}} \ea  \ba{c} \hat{R}_{ik}\\ \hat{R}_{jk} \ea = 0
\end{equation}
Stating (\ref{refinedRij}) explicitly, we have the following 3 equations:
\bea
\ba{cc} (\gamma_1+\gamma_{12})I_{T_{23}} & \frac{\gamma_2}{\alpha_{22}}\hat{R}_{12} \\ \frac{\gamma_1}{\alpha_{11}}\hat{R}_{21} & (\gamma_2+\gamma_{12})I_{T_{13}} \ea  \ba{c} \hat{R}_{13}\\ \hat{R}_{23} \ea = 0
\label{fr12}
\eea

\bea
\ba{cc} (\gamma_1+\gamma_{13})I_{T_{23}} & \frac{\gamma_3}{\alpha_{33}}\hat{R}_{13} \\ \frac{\gamma_1}{\alpha_{11}}\hat{R}_{31} & (\gamma_3+\gamma_{13})I_{T_{12}} \ea  \ba{c} \hat{R}_{12}\\ \hat{R}_{32} \ea = 0
\label{fr13}
\eea

\bea
\ba{cc} (\gamma_2+\gamma_{23})I_{T_{13}} & \frac{\gamma_3}{\alpha_{33}}\hat{R}_{23} \\ \frac{\gamma_2}{\alpha_{22}}\hat{R}_{32} & (\gamma_3+\gamma_{23})I_{T_{12}} \ea  \ba{c} \hat{R}_{21}\\ \hat{R}_{31} \ea = 0
\label{fr23}
\eea
Note that the dimension of $ \ba{cc} \hat{R}_{ik}^t & \hat{R}_{jk}^t \ea^t$ is $(T_{jk}+T_{ik})\times T_{ij}$. Therefore nullity of the left matrix in (\ref{refinedRij}) is at least $T_{ij}$. 
Hence if we let the rank of the left matrix in (\ref{refinedRij}) be $r$ we will have:
\begin{equation} \label{r1}
r \leq T_{jk}+T_{ik}-T_{ij}
\end{equation}
On the other hand it is also obvious that:
\begin{eqnarray} \label{r2}
r \geq T_{jk},~ T_{ik}
\end{eqnarray}
From (\ref{r1}) and (\ref{r2}) it follows that:
\begin{equation}
T_{ij} \leq \min({T_{jk},T_{ik}})
\end{equation}
Since a similar argument can be used for $T_{jk}$ and $T_{ik}$ we conclude that:
\begin{equation} \label{eqn:Teq}
T_{12} = T_{23} = T_{13}\triangleq \hat{T}
\end{equation}
Now note that (\ref{fr12})--(\ref{fr23}) is similar to (\ref{RijCond-1})--(\ref{RijCond-3}) with $\hat{R}_{ij}$ instead of $R_{ij}$. Therefore the same argument that led to (\ref{Rsym}) yields that $\hat{R}_{ij}$ is a multiple of an orthogonal matrix say $\Phi_{ij}$; in other words 
\bea
\label{rhat12} \hat{R}_{12} = \hat{R}_{21}^t= \alpha_{12} \Phi_{12}\\
\label{rhat13} \hat{R}_{13} = \hat{R}_{31}^t = \alpha_{13} \Phi_{13}\\
\label{rhat23} \hat{R}_{23} = \hat{R}_{32}^t = \alpha_{23} \Phi_{23}
\eea
where similar to (\ref{alphaij}), $\alpha_{ij}$ is given by
\be
\label{alphaijsign}
\alpha_{ij} = \pm \sqrt{\frac{(\gamma_i+\gamma_{ij})(\gamma_j+\gamma_{ij})}{\gamma_i\gamma_j} \alpha_{ii}\alpha_{jj}} 
\ee
and we have,
\bea
\label{phi-for-rhat}
\Phi_{13} = \begin{cases}
+\Phi_{12}\Phi_{23}~~~~\mbox{if}~~~\alpha_{12}\alpha_{13}\alpha_{23}>0 \\ -\Phi_{12}\Phi_{23}~~~~\mbox{if}~~~\alpha_{12}\alpha_{13}\alpha_{23}<0
\end{cases}
\eea
Note that sign of $\alpha_{ij}$ can be chosen arbitrarily in (\ref{alphaijsign}). 
Finally it follows that after a series of permutations on (\ref{Rstructure}) and substituting for values of $\hat{R}_{ij}$ from (\ref{rhat12})--(\ref{rhat23}), $R' = \Theta^* R \Theta$ can be written as follows:
\begin{equation} \label{boundary-structure}
\ba{cccccc} \alpha_{11} I_{\hat{T}} & \alpha_{12} \Phi_{12} & \alpha_{13} \Phi_{13} &\hspace{-10pt} 0 & \hspace{-10pt}0 & \hspace{-10pt}0\\
\alpha_{12} \Phi_{12}^t & \alpha_{22} I_{\hat{T}} & \alpha_{23} \Phi_{23} &\hspace{-10pt} 0 &\hspace{-10pt} 0 &\hspace{-10pt} 0\\
\alpha_{13} \Phi_{13}^t  & \alpha_{23} \Phi_{23}^t & \alpha_{33} I_{\hat{T}}&\hspace{-10pt} 0 &\hspace{-10pt} 0 &\hspace{-10pt} 0\\
0 & 0 & 0 & \hspace{-10pt} \alpha_{11} I_{T-\hat{T}} & \hspace{-10pt} 0 & \hspace{-10pt} 0 \\
0 & 0 & 0 &\hspace{-10pt} 0 &\hspace{-10pt} \alpha_{22} I_{T-\hat{T}} &\hspace{-10pt} 0 \\
0 & 0 & 0  &\hspace{-10pt} 0 &\hspace{-10pt} 0 &\hspace{-10pt} \alpha_{33} I_{T-\hat{T}}\\
\ea
\end{equation}
which if viewed as the timeshare of a set of Gaussian random variables with an orthogonal covariance matrix and another set of independent random variables, it has the same block principal minors as (\ref{Rstructure}). Moreover $R'$ has the same principal minors as $R$ and $\tilde{R}$ and therefore (\ref{boundary-structure}) is the optimizing solution to problem (\ref{Gaussian_eqn:optimization}). 
However note that (\ref{boundary-structure}) is an optimal solution only if it is a positive semi-definite matrix. Therefore $\alpha_{ij}$'s and $\Phi_{ij}$'s should be such that,\footnote{Note that $\alpha_{ij}$ and the set of $\gamma_i, \gamma_{ij}$ are dependent through (\ref{alphaijsign}). However it can be shown that for a given set of $\alpha_i, \alpha_{ij}$ the values of $\gamma_i, \gamma_{ij}$ can be determined such that (\ref{alphaijsign}) and (\ref{balance1})--(\ref{balance3}) will hold.}
\bea
\alpha_{ii}\alpha_{jj} - \alpha_{ij}^2 \geq 0
\eea
\bea
\det \left(\begin{array}{ccc}\alpha_{11}I_T & \alpha_{12}\Phi_{12} & \alpha_{13}\Phi_{13} \\ \alpha_{12}\Phi_{12}^t & \alpha_{22}I_T & \alpha_{23}\Phi_{23} \\ \alpha_{13}\Phi_{13}^t & \alpha_{23}\Phi_{23}^t & \alpha_{33}I_T \end{array}\right) \geq 0
\eea
\end{proof}

\begin{lemma}[Block Orthogonal, Block Diagonal Covariance] Consider the
  covariance matrix
\be \label{det3-hyper}
R = \ba{ccc} \alpha_{11}I_T & \alpha_{12}\Phi_{12} &
  \alpha_{13}\Phi_{13} \\
\alpha_{12}\Phi_{12}^{t} & \alpha_{22}I_T & \alpha_{23}\Phi_{23} \\
\alpha_{13}\Phi_{13}^{t} & \alpha_{23}\Phi_{23}^{t} & \alpha_{33}I_T\ea ,
\ee
where the $\Phi_{ij}$ are orthogonal, $\alpha_{ii}>0$, and $\alpha_{ii}\alpha_{jj}\geq \alpha_{ij}^2$ in the $2\times 2$ block
principal minors $m_{ij} =
(\alpha_{ii}\alpha_{jj}-\alpha_{ij}^2)^T$. 
Then
\bea
\nonumber  \Big( \alpha_{11} \alpha_{22} \alpha_{33} - \alpha_{11} \alpha_{23}^2
-\alpha_{22} \alpha_{13}^2 -\alpha_{33} \alpha_{12}^2 -2 |\alpha_{12} \alpha_{13} \alpha_{23}| \Big)^{{T}} \leq
 \det R& \\
\leq \Big( \alpha_{11} \alpha_{22} \alpha_{33} - \alpha_{11} \alpha_{23}^2
- \alpha_{22} \alpha_{13}^2 -\alpha_{33} \alpha_{12}^2 + & \hspace{-5pt} 2 |\alpha_{12} \alpha_{13} \alpha_{23}| \Big)^{{T}} \label{eqn:m123}
\eea
where $\Phi = \Phi_{13}^{t} \Phi_{12} \Phi_{23}$ and the upper bound is tight when $\Phi + \Phi^t = 2I$ and the lower bound is achieved when $\Phi + \Phi^t = -2I$. 
\label{lem:bubb}
\end{lemma}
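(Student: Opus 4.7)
\textbf{Proof plan for Lemma \ref{lem:bubb}.} The plan is to derive a closed form
\[
\det R \;=\; \det\!\bigl(C\,I_T + \alpha_{12}\alpha_{13}\alpha_{23}\,(\Phi+\Phi^t)\bigr),
\]
where $C = \alpha_{11}\alpha_{22}\alpha_{33} - \alpha_{11}\alpha_{23}^2 - \alpha_{22}\alpha_{13}^2 - \alpha_{33}\alpha_{12}^2$, from which both bounds follow immediately since the eigenvalues of $\Phi+\Phi^t$ lie in $[-2,2]$. The strategy is two successive orthogonal similarities that exploit the scalar-identity diagonal structure, followed by a one-variable computation.

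The first move will be to decouple the three orthogonal factors by the block-diagonal orthogonal similarity $W = \operatorname{diag}(I,\Phi_{12}^t,\Phi_{13}^t)$. Because the diagonal blocks of $R$ are scalar multiples of $I$, they are preserved, and the off-diagonal blocks become $\alpha_{12}I$, $\alpha_{13}I$, and $\alpha_{23}\Psi$ with $\Psi := \Phi_{12}\Phi_{23}\Phi_{13}^t = \Phi_{13}\,\Phi\,\Phi_{13}^t$; in particular $\Psi$ is orthogonal and shares the spectrum of $\Phi$. Next I will block-diagonalize $\Psi$ in its real canonical form $\Psi = OBO^t$, where $B$ is a direct sum of $\pm 1$ scalars and $2\times 2$ planar rotations $B_k$ for the complex-conjugate eigenvalue pairs $e^{\pm i\theta_k}$. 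Applying $V = \operatorname{diag}(O,O,O)$ preserves every block except the $(2,3)$ block, which becomes $\alpha_{23}B$, and a simultaneous row/column permutation decouples the matrix into a direct sum $\bigoplus_k M_k$ with
\[
M_k \;=\; \begin{pmatrix} \alpha_{11}I_{d_k} & \alpha_{12}I_{d_k} & \alpha_{13}I_{d_k} \\ \alpha_{12}I_{d_k} & \alpha_{22}I_{d_k} & \alpha_{23}B_k \\ \alpha_{13}I_{d_k} & \alpha_{23}B_k^t & \alpha_{33}I_{d_k} \end{pmatrix}.
\]

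For $d_k=1$ the determinant of this $3\times 3$ symmetric matrix is direct: $\det M_k = D(\theta_k) := C + 2\alpha_{12}\alpha_{13}\alpha_{23}\cos\theta_k$ with $\theta_k \in \{0,\pi\}$. For $d_k=2$ I will further diagonalize $B_k = PD_kP^*$ complex-unitarily with $D_k = \operatorname{diag}(e^{i\theta_k}, e^{-i\theta_k})$, using that $B_k^t = B_k^{-1}$ so $P^*B_k^tP = D_k^{-1}$; then $\operatorname{diag}(P,P,P)$ followed by a permutation splits $M_k$ into two Hermitian $3\times 3$ complex blocks whose determinants both reduce, by direct expansion, to the same real value $D(\theta_k)$. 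Thus $\det M_k = D(\theta_k)^{d_k}$, and
\[
\det R \;=\; \prod_{j=1}^{T} D(\theta_j) \;=\; \det\!\bigl(C\,I_T + \alpha_{12}\alpha_{13}\alpha_{23}(\Phi+\Phi^t)\bigr).
\]
The stated bounds are now immediate: each $D(\theta_j)$ lies in $[L,U]$ with $U-C = C-L = 2|\alpha_{12}\alpha_{13}\alpha_{23}|$, and positive semidefiniteness of $R$ forces each factor to be nonnegative (being the determinant of a Hermitian PSD summand), so the product of $T$ nonnegative numbers each bounded above by $U$ (respectively below by $L$) is bounded above by $U^T$ (and below by $L^T$, trivially when $L<0$ and $T$ odd). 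The upper bound is attained exactly when every $\cos\theta_j$ has the sign of $\alpha_{12}\alpha_{13}\alpha_{23}$ and has magnitude one, i.e.\ $\Phi+\Phi^t = \pm 2I$, and symmetrically for the lower bound, the sign being absorbed into $|\alpha_{12}\alpha_{13}\alpha_{23}|$.

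The step I expect to be most delicate is the $d_k=2$ case: one must verify that the two $3\times 3$ complex blocks emerging from the unitary diagonalization are Hermitian and have the \emph{same} real determinant $D(\theta_k)$ rather than complex-conjugate values. This works because the $(2,3)$ entries of the two blocks are the two conjugate eigenvalues $e^{\pm i\theta_k}$, and expanding each $3\times 3$ determinant combines the cross terms into $e^{i\theta_k}+e^{-i\theta_k} = 2\cos\theta_k$, killing the imaginary part. A secondary point is the sign convention in the tightness statement, cleanly handled by the absolute value in $U$ and $L$.
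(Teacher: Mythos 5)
Your proposal is correct and it lands on exactly the same pivotal identity as the paper, namely
$\det R=\det\bigl(C\,I_T+\alpha_{12}\alpha_{13}\alpha_{23}(\Phi+\Phi^t)\bigr)$ with
$C=\alpha_{11}\alpha_{22}\alpha_{33}-\alpha_{11}\alpha_{23}^2-\alpha_{22}\alpha_{13}^2-\alpha_{33}\alpha_{12}^2$,
but it gets there by a genuinely different route. The paper obtains the formula in three lines by two nested Schur complements: factor out the $(1,1)$ block $\alpha_{11}I_T$, then factor out the resulting $(\alpha_{11}\alpha_{22}-\alpha_{12}^2)I_T$ block; because every diagonal block is a scalar multiple of $I_T$ the cross term collapses directly into $\alpha_{11}\alpha_{12}\alpha_{13}\alpha_{23}(\Phi_{13}^t\Phi_{12}\Phi_{23}+\Phi_{23}^t\Phi_{12}^t\Phi_{13})$, with no spectral decomposition of any orthogonal matrix needed. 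Your route instead conjugates by $\mbox{diag}(I,\Phi_{12}^t,\Phi_{13}^t)$ to concentrate all the rotation into the single block $\alpha_{23}\Psi$ with $\Psi=\Phi_{13}\Phi\Phi_{13}^t$, passes to the real canonical form of $\Psi$, and decouples $R$ into $3\times3$ and $6\times6$ blocks with determinants $D(\theta_k)^{d_k}$, $D(\theta_k)=C+2\alpha_{12}\alpha_{13}\alpha_{23}\cos\theta_k$; this is longer, and the $d_k=2$ bookkeeping you flag as delicate is avoidable, but it does check out. What your factored form $\det R=\prod_j D(\theta_j)$ buys is that it makes explicit the point the paper glosses over when it says the bounds ``immediately follow'' from $-2I\le\Phi+\Phi^t\le 2I$: positive semidefiniteness of $R$ forces each factor to be nonnegative, and this is genuinely needed to conclude $\det R\le U^T$ from a product of factors lying in $[L,U]$. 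One caveat applies to both arguments equally: when $L=C-2|\alpha_{12}\alpha_{13}\alpha_{23}|<0$ and $T$ is even, $L^T=|L|^T$ is positive and can exceed $U^T\ge\det R$ (this occurs whenever $C<0$, which the hypotheses permit), so the stated lower bound can fail; your parenthetical only dispenses with odd $T$. This is a defect of the lemma as stated rather than of your proof, and it is harmless where the lemma is invoked, since there only the exact value of $\det R$ at $\Phi+\Phi^t=\pm 2I$ is used, which both you and the paper establish.
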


\begin{proof}
We can easily write the following,
\bea
\nonumber \det R & \hspace{-5pt}=& \hspace{-5pt} \frac{1}{\alpha_{11}^T} \det \left( \left(\begin{array}{cc} \alpha_{11}\alpha_{22} I_T & \alpha_{11}\alpha_{23}\Phi_{23} \\ \alpha_{11}\alpha_{23}\Phi_{23}^t & \alpha_{11}\alpha_{33}I_T \end{array}\right) - \left(\begin{array}{c} \alpha_{12}\Phi_{12}^t \\ \alpha_{13}\Phi_{13}^t  \end{array}\right) \left(\begin{array}{cc} \alpha_{12}\Phi_{12} & \alpha_{13}\Phi_{13} \end{array}\right)\right)\\
\nonumber & \hspace{-55pt} =& \hspace{-30pt} \frac{1}{\alpha_{11}^T} \det \left( (\alpha_{11}\alpha_{22}-\alpha_{12}^2)(\alpha_{11}\alpha_{33}-\alpha_{13}^2)I_T-
(\alpha_{11}\alpha_{23}\Phi_{23}^t-\alpha_{12}\alpha_{13}\Phi_{13}^t\Phi_{12})
(\alpha_{11}\alpha_{23}\Phi_{23}-\alpha_{12}\alpha_{13}\Phi_{12}^t\Phi_{13}) \right)\\
& \hspace{-55pt} = & \hspace{-30pt} \det \Big( (\alpha_{11} \alpha_{22} \alpha_{33} - \alpha_{11} \alpha_{23}^2 -\alpha_{22} \alpha_{13}^2 -\alpha_{33} \alpha_{12}^2)I_{T} + \alpha_{12} \alpha_{13} \alpha_{23} (\Phi_{13}^t \Phi_{12}\Phi_{23}+\Phi_{23}^t \Phi_{12}^t \Phi_{13}) \Big)
\end{eqnarray}
The result immediately follows from $-2 I \leq \Phi+\Phi^t \leq 2 I $.
\end{proof}

The $1\times 1$ and $2\times 2$ minors of the covariance matrix structure (\ref{boundary-structure}) obtained in Lemma \ref{lem:boundary} can now be written as,
\bea
&& m_i = \alpha_{ii}^T \label{eqn:mi}\\
&& m_{ij} = (\alpha_{ii}\alpha_{jj}-\alpha_{ij}^2)^{{\hat{T}}}(\alpha_{ii}\alpha_{jj})^{{T-\hat{T}}} \label{eqn:mij}
\eea
Moreover since based on (\ref{phi-for-rhat}), for matrix (\ref{boundary-structure}), $\Phi = \Phi^t_{13}\Phi_{12}\Phi_{23} = \mbox{sign}(\alpha_{12}\alpha_{13}\alpha_{23}) I$, based on Lemma \ref{lem:bubb}, $m_{123}$ of (\ref{boundary-structure}) is given by:
\bea
\label{m123-structure}
(\alpha_{11}\alpha_{22}\alpha_{33})^{T-\hat{T}} \Big( \alpha_{11} \alpha_{22} \alpha_{33} - \alpha_{11} \alpha_{23}^2
-\alpha_{22} \alpha_{13}^2 -\alpha_{33} \alpha_{12}^2 \pm 2 |\alpha_{12} \alpha_{13} \alpha_{23}| \Big)^{\hat{T}} 
\eea
However these values can also be obtained by a timeshare of 3 scalar random variables with covariance matrix,
\bea
\left(\begin{array}{ccc} \alpha_{11} & \alpha_{12} &  \alpha_{13}\\
\alpha_{12} & \alpha_{22} & \alpha_{23}\\ \alpha_{13} & \alpha_{23} & \alpha_{33} \end{array}\right)
\eea
and 3 other independent scalar random variables. This suggests that the region of 3 vector-valued Gaussian random variables may be obtained from the convex hull region of 3 scalar Gaussian random variables. In other words for $n=3$, considering vector-valued random variables will not give any entropy vector that is not obtainable from scalar valued ones. This is essentially the statement of Theorem \ref{thm:vector-vs-scalar} and we can now proceed to a more formal proof.

{\em Proof of Theorem \ref{thm:vector-vs-scalar}:} 
As in Lemma \ref{lem:boundary}, we write the following optimization problem,
\begin{equation}\label{Gaussian_eqn:optimization_makingdiagonal}
\max_{R}\sum_{s\subseteq \{1,2,3\}}{\gamma_s \log\det{\tilde{R}_s}},
\end{equation}
As was obtained in Lemma \ref{lem:boundary}, the optimal solution is of the following form:
\bea
\left(\begin{array}{cccccc} \alpha_{11}I_{\hat{T}} & 0 & \alpha_{12}\Phi_{12} & 0 & \alpha_{13}\Phi_{13} & 0\\
0 & \alpha_{11}I_{T-\hat{T}} & 0 & 0 & 0 & 0\\
\alpha_{12}\Phi_{12}^t & 0 & \alpha_{22}I_{\hat{T}} & 0 & \alpha_{23}\Phi_{23} & 0\\
0 &0 & 0 & \alpha_{22}I_{T-\hat{T}} & 0 & 0\\
\alpha_{13}\Phi_{13}^t & 0 & \alpha_{23}\Phi_{23}^t & 0 & \alpha_{33}I_{\hat{T}} & 0\\
0 & 0 & 0 & 0 & 0 & \alpha_{33}I_{\hat{T-\hat{T}}} 
\end{array}\right)
\eea
where $\Phi_{ij}$ are orthogonal matrices and $\Phi_{13} = \mbox{sign}(\alpha_{12}\alpha_{13}\alpha_{23}) \Phi_{12}\Phi_{23}$. 
Now let
\bea
\mathbb{Q} = \left(\begin{array}{cccccc} I_{\hat{T}} & 0 & 0 & 0 & 0 & 0 \\
0 & I_{T-\hat{T}} & 0 & 0 & 0 & 0\\
0 & 0 & {\Phi}_{12}^t & 0 & 0 & 0\\
0 & 0 & 0 & I_{T-\hat{T}} & 0 & 0\\
0 & 0 & 0 & 0 & {\Phi}_{13}^t & 0\\
0 & 0 & 0 & 0 & 0 & I_{T-\hat{T}}
\end{array}\right)
\eea
and define
\bea
{R}^{\mathbb{Q}} \triangleq \mathbb{Q}^t R \mathbb{Q} = 
\left(\begin{array}{cccccc} \alpha_{11}I_{\hat{T}} & 0 & \alpha_{12}I_{\hat{T}} & 0 & \alpha_{13}I_{\hat{T}} & 0\\
0 & \alpha_{11}I_{T-\hat{T}} & 0 & 0 & 0 & 0\\
\alpha_{12}I_{\hat{T}} & 0 & \alpha_{22}I_{\hat{T}} & 0 & \alpha_{23}\Phi_{12}\Phi_{23}\Phi_{13}^t & 0\\
0 &0 & 0 & \alpha_{22}I_{T-\hat{T}} & 0 & 0\\
\alpha_{13}I_{\hat{T}} & 0 & \alpha_{23}\Phi_{13}\Phi_{23}^t \Phi_{12}^t & 0 & \alpha_{33}I_{\hat{T}} & 0\\
0 & 0 & 0 & 0 & 0 & \alpha_{33}I_{\hat{T-\hat{T}}} 
\end{array}\right)
\eea
However since $\Phi_{12}\Phi_{23}\Phi_{13}^t = \pm I_{\hat{T}}$, it follows that all blocks of $R^{\mathbb{Q}}$ are diagonal and therefore $R^{\mathbb{Q}}$ can be viwed as a timeshare of scalar random variables.  
Moreover since $R^{\mathbb{Q}}$ has the same principal minors as $R$, therefore $R^{\mathbb{Q}}$ is also an optimal solution of (\ref{Gaussian_eqn:optimization_makingdiagonal}). 
\hfill \qed

In order to proceed to the proof of Theorem \ref{thm:cone} we need the following lemma,




\begin{lemma} Consider the function
\be \label{f-definition}
f(\delta) = \left(\max\left[0, -2+\sum_{l=1}^3x_l^{\delta}+
2\sqrt{\prod_{l=1}^3(1-x_l^{{\delta}})}\right]\right)^{\frac{1}{\delta}},
\ee
where $0<x_l\leq 1$, for $l=1,2,3$. $f$ is either a constant function equal to $\min_{i\neq j} x_ix_j$ or has a unique global maximum given by:
\be
\max_{\delta} f(\delta) = \frac{\prod_{l}x_l}{\max_{l}(x_l)} = \min_{i,j \in \{1,2,3\}, i\neq j} x_ix_j
\ee

Moreover if we let \mbox{$\tilde{y} = \frac{\prod_l x_l}{\max_l x_l}+2\max_l x_l-\sum_l x_l$}, then
\bea
\mbox{If $\tilde{y}\leq 0$}~~~\Longrightarrow~~~\max_{\delta\geq 1} f(\delta) = \frac{\prod_{l}x_l}{\max_{l}(x_l)}  
\eea
\label{lem:f}
\end{lemma}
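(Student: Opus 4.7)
The plan is to rewrite the bracketed expression in (\ref{f-definition}) in a form that exposes the upper bound directly. Using the identity $a+b-2\sqrt{ab}=(\sqrt{a}-\sqrt{b})^2$ with $a=(1-x_i^\delta)(1-x_j^\delta)$ and $b=1-x_k^\delta$ for any permutation $\{i,j,k\}=\{1,2,3\}$, a short algebraic calculation gives
\begin{equation*}
-2+\sum_{l=1}^3 x_l^\delta+2\sqrt{\prod_{l=1}^3(1-x_l^\delta)} \;=\; (x_ix_j)^\delta-\Big(\sqrt{(1-x_i^\delta)(1-x_j^\delta)}-\sqrt{1-x_k^\delta}\Big)^2.
\end{equation*}
Choosing $k=\arg\max_l x_l$ so that $x_ix_j=\prod_l x_l/\max_l x_l=\min_{p\neq q} x_p x_q$, the bracket in (\ref{f-definition}) is at most $(x_ix_j)^\delta$ for every $\delta>0$, and hence $f(\delta)\le \prod_l x_l/\max_l x_l$ is the claimed upper bound.

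Next, equality requires the squared term to vanish, equivalently $\phi(\delta)=0$ where
\begin{equation*}
\phi(\delta):=x_i^\delta+x_j^\delta-x_k^\delta-(x_ix_j)^\delta.
\end{equation*}
I will study the zeros of $\phi$ using the classical fact that a nontrivial real linear combination of $n$ distinct real exponentials has at most $n-1$ real zeros. Generically $\phi$ involves four distinct exponentials and so has at most three real zeros. Since $\phi(0)=0$, $\phi'(0)=-\log x_k\ge 0$, and $\phi(\delta)\sim -x_k^\delta\to 0^-$ as $\delta\to\infty$ (when $x_k<1$ is the strict maximum among $x_i,x_j,x_k,x_ix_j$), these constraints force exactly one positive zero $\delta^*$ with $\phi>0$ on $(0,\delta^*)$ and $\phi<0$ on $(\delta^*,\infty)$: two or more positive zeros would, combined with the zero at $\delta=0$ and the sign change required to return to $0^-$ at infinity, exceed the three-zero bound. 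Thus $\delta^*$ is the unique global maximizer of $f$ with $f(\delta^*)=\prod_l x_l/\max_l x_l$. Degenerate cases are then checked by direct substitution: $\phi\equiv 0$ precisely when at least two of the $x_l$'s equal $1$, in which case $f\equiv\min_{i\neq j} x_ix_j$ is the constant case; in the remaining boundary cases (a single $x_l=1$, or $x_i=x_j$) the supremum $\prod_l x_l/\max_l x_l$ is approached in a limit ($\delta\to 0^+$ or $\delta\to\infty$) rather than attained at a finite $\delta^*$, but is still uniquely achieved.

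For the second statement, a direct inspection yields $\tilde y=x_ix_j+x_k-x_i-x_j=-\phi(1)$, so $\tilde y\le 0\Longleftrightarrow \phi(1)\ge 0$. Combined with the sign structure of $\phi$ established above, this forces $\delta^*\ge 1$, so the global maximizer lies in $[1,\infty)$ and $\max_{\delta\ge 1}f(\delta)=\prod_l x_l/\max_l x_l$ as claimed. The main obstacle in this plan will be the zero-counting step for $\phi$: while the P\'olya-type bound on real zeros of exponential sums is classical, the analysis must be carried through the coincidences in the exponents (equal $x_l$'s, or some $x_l=1$) where the number of distinct exponentials drops, and the resulting limiting behaviors have to be matched to the ``unique global maximum'' phrasing of the lemma.
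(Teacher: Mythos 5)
Your first step---completing the square to write the bracket as $(x_ix_j)^\delta-\bigl(\sqrt{(1-x_i^\delta)(1-x_j^\delta)}-\sqrt{1-x_k^\delta}\bigr)^2$, so that $f(\delta)\le\min_{p\neq q}x_px_q$ with equality exactly at the zeros of your $\phi$---is identical to the paper's opening move (the paper's $y(\delta)$ is $-\phi(\delta)$, and $\tilde y=y(1)$, as you note). Where you genuinely depart is in proving uniqueness of the positive zero: the paper grinds through seven explicit cases, normalizing $y$ by $x_3^\delta$, differentiating the quotient twice and locating critical points by hand, whereas you invoke the P\'olya/Descartes bound that a nontrivial combination of $n$ distinct real exponentials has at most $n-1$ real zeros counted with multiplicity, and combine it with the boundary data $\phi(0)=0$, $\phi'(0)=-\log x_k>0$, $\phi\to0^-$. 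In the generic case (all $x_l$ distinct and $<1$) this is correct and considerably cleaner: two simple positive zeros are already ruled out by parity of sign changes, and the three-zero bound eliminates three simple zeros or a tangency. Your derivation of the final implication from the sign pattern of $\phi$ and $\tilde y=-\phi(1)$ also matches the paper's concluding observation.

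There is, however, one concrete error in your bookkeeping of the coincidences. You assert that whenever $x_i=x_j$ (the two non-maximal values equal) the supremum is only approached in a limit. That is false when $x_i=x_j<x_k<1$: there $\phi$ is a combination of the three distinct exponentials $x_i^\delta, x_k^\delta, x_i^{2\delta}$ with coefficients $2,-1,-1$, your own counting argument (at most two zeros, one at the origin, sign change forced at infinity) gives exactly one positive zero, and the maximum is attained at a finite $\delta^*$; this is precisely the paper's Case 2. The cases genuinely pushed to a limit are $x_j=x_k<1$, where $\phi=x_i^\delta(1-x_k^\delta)>0$ for all $\delta>0$ and the supremum is approached as $\delta\to\infty$, and a single $x_l=1$, where $\phi=-(1-x_i^\delta)(1-x_j^\delta)<0$ and the supremum is approached as $\delta\to0^+$; and $\phi\equiv0$ does force two of the $x_l$ to equal $1$, as you say. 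So the degenerate cases need to be redone along these lines (they are exactly the paper's Cases 2--7), but the skeleton of your argument is sound and, for the generic case, tighter than the paper's.
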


\begin{proof}
See Appendix.
\end{proof}

\begin{corollary}
\label{cor:ftheta}
Let $\theta = \frac{1}{\delta}$, in Lemma \ref{lem:f}. Then $f(\frac{1}{\theta})$ is either a constant function equal to $\min_{i\neq j} x_ix_j$ or has a unique global maximizer such that $\max_{\theta} f(\frac{1}{\theta}) = \min_{i\neq j} x_ix_j$. Furthermore,
\be
\mbox{If $\tilde{y}\leq 0$}~~~\Longrightarrow~~~\max_{0\leq \theta\leq 1} f(\frac{1}{\theta}) = \frac{\prod_{l}x_l}{\max_{l}(x_l)}  
\ee
\end{corollary}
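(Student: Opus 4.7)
The plan is to deduce the corollary directly from Lemma~\ref{lem:f} by the change of variable $\theta = 1/\delta$. This substitution is a strictly decreasing bijection from $(0,\infty)$ onto itself that carries the half-line $[1,\infty)$ onto $(0,1]$. Bijective reparameterization preserves constancy, the existence and uniqueness of a global maximum, and the value of any supremum; so the structural conclusions of Lemma~\ref{lem:f} transfer verbatim to $\theta \mapsto f(1/\theta)$ without any new analytic input.

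First I would handle the unconstrained statement: by Lemma~\ref{lem:f}, either $f \equiv \min_{i\neq j}x_ix_j$, in which case $\theta \mapsto f(1/\theta)$ is the same constant, or $f$ has a unique global maximizer $\delta^\star \in (0,\infty)$ with value $\prod_l x_l/\max_l x_l$, in which case $\theta^\star = 1/\delta^\star$ is the unique global maximizer of $f(1/\theta)$, and the maximum value is $\prod_l x_l / \max_l x_l = \min_{i\neq j}x_ix_j$ by the identity already established in Lemma~\ref{lem:f}.

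Next I would handle the constrained statement under $\tilde{y} \leq 0$. Lemma~\ref{lem:f} gives $\max_{\delta \geq 1} f(\delta) = \prod_l x_l / \max_l x_l$, and since $\delta \geq 1 \Leftrightarrow \theta \in (0,1]$ under the bijection, this immediately yields $\sup_{\theta \in (0,1]} f(1/\theta) = \prod_l x_l / \max_l x_l$. The only piece not covered by pure reparameterization is the endpoint $\theta = 0$, which corresponds to $\delta \to \infty$. A direct inspection of (\ref{f-definition}) shows that as $\delta \to \infty$ every $x_l^\delta$ with $x_l<1$ tends to $0$, so the bracketed quantity tends to $-2 + k + 2\sqrt{0}$ where $k$ counts how many $x_l$ equal $1$; the maximum with $0$ then vanishes unless every $x_l = 1$, in which case the conclusion is trivial. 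Thus extending the domain to $[0,1]$ does not raise the supremum.

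The main obstacle, to the extent there is one, is simply the endpoint $\theta = 0$, which is disposed of by the short limit computation above. Everything else is a routine bijective transcription of Lemma~\ref{lem:f}; no new inequality or calculus is required.
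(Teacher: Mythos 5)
Your overall route is the intended one: the paper offers no separate proof of Corollary \ref{cor:ftheta}, treating it as an immediate reparameterization of Lemma \ref{lem:f} via $\theta=1/\delta$, and your transcription of the constancy/uniqueness/max-value statements and of the correspondence $\delta\geq 1\Leftrightarrow\theta\in(0,1]$ is exactly right.

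The one genuine flaw is your treatment of the endpoint $\theta=0$. You claim the bracketed quantity in (\ref{f-definition}) tends to $-2+k+2\sqrt{0}$ and that $f$ therefore vanishes at $\delta\to\infty$ unless every $x_l=1$. This is wrong on two counts. First, when all $x_l<1$ the product $\prod_l(1-x_l^{\delta})$ tends to $1$, not $0$, so the bracket tends to $-2+0+2=0$. Second, and more importantly, $f(\delta)=\bigl(\max[0,e(\delta)]\bigr)^{1/\delta}$ is an indeterminate form $0^0$ as $\delta\to\infty$, and the appendix's cases 3 and 4 (e.g. $x_1<x_2=x_3<1$) show explicitly that $e(\delta)\approx (x_1x_3)^{\delta}$ and hence $f(\delta)\to x_1x_3=\min_{i\neq j}x_ix_j$: the limit at infinity can equal the global maximum rather than zero. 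In those cases $\theta=0$ is precisely where the maximum over $[0,1]$ is \emph{attained} (over $(0,1]$ one only gets a supremum), so dismissing the endpoint as irrelevant misstates what is happening. The conclusion you want still holds, but for a different reason: the lemma's proof establishes $f(\delta)\leq\min_{i\neq j}x_ix_j$ uniformly in $\delta$, so adjoining $\theta=0$ (with $f(1/0)$ interpreted as $\lim_{\delta\to\infty}f(\delta)$) can never push the value above $\prod_l x_l/\max_l x_l$; that uniform bound, not a vanishing limit, is the correct way to close the endpoint case.
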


%

%

Now we can proceed to the proof of Theorem \ref{thm:cone}.

{\em Proof of Theorem \ref{thm:cone}:} We show that the entropy region of 3 continuous random variables can be generated from the convex cone of the Gaussian entropy region. To this end we prove that any entropy vector of 3 continuous random variables lies in the convex cone of Gaussian entropies. Let $g$ be an arbitrary entropy vector corresponding to 3 continuous random variables. We know that the only inequalities that constrain the entries of $g$ are
\be
\label{eqn:gchar}
g_{ij}\leq g_i+g_j,~~~g_{123}+g_k\leq g_{ik}+g_{jk}
\ee
Let $p = e^g$ where the exponential acts componentwise. Then the equivalent set of constraints to (\ref{eqn:gchar}) are
\be
\label{eqn:cond-on-p}
p_i\geq 0,~~~0\leq p_{ij}\leq p_ip_j,~~~0\leq p_{123} \leq \frac{p_{ik}p_{jk}}{p_k}
\ee
We now show that any such $p$-vector can be obtained from Gaussian random variables. Consider the structure (\ref{boundary-structure}) obtained in Lemma \ref{lem:boundary} which suggested the time-share of a set of independent random variables with covariance matrix of block size $T-\hat{T}$ and another set of random variables with orthogonal covariance matrix of block size $\hat{T}$. We try to find $\alpha_{ii},~\alpha_{ij}$ in structure (\ref{boundary-structure}) that will yield the desired $p_i$ and $p_{ij}$. Therefore if $m$ is the vector of block principal minors of structure (\ref{boundary-structure}) we need to obtain $m^{\frac{1}{T}} = p$. Using (\ref{eqn:mi}) and (\ref{eqn:mij}) we can solve for $\alpha_{ii}$ and $\alpha_{ij}$ and obtain
\be
\label{eqn:alphap}
\alpha_{ii} = p_i >0,~~~\alpha_{ij} = \pm \sqrt{ p_ip_j (1 - (\frac{p_{ij}}{p_ip_j})^{\frac{T}{\hat{T}}})}
\ee

Now we need to show that $p_{123}$ falls within the set of achievable values of  
$m_{123}^{\frac{1}{T}}$. 
Note that calculating the determinant of the matrix in (\ref{boundary-structure}) via Lemma \ref{lem:bubb} or equation (\ref{m123-structure}) gives
\begin{eqnarray} \label{maxdet}
\max \mbox{det}R =  (\alpha_{11}\alpha_{22}\alpha_{33})^{T-\hat{T}} \Big( \alpha_{11} \alpha_{22} \alpha_{33} - \alpha_{11} \alpha_{23}^2
-\alpha_{22} \alpha_{13}^2 -\alpha_{33} \alpha_{12}^2 +2 |\alpha_{12} \alpha_{13} \alpha_{23}| \Big)^{\hat{T}}
\end{eqnarray}
where the $\max$ is achieved when $\alpha_{12}\alpha_{13}\alpha_{23}\geq 0$.
Letting $\theta = \frac{\hat{T}}{T}$ and replacing for values of $\alpha_{ii}$ and $\alpha_{ij}$ in terms of $p_i$ and $p_{ij}$ from (\ref{eqn:alphap}) yields
\begin{eqnarray} \label{m123}
\nonumber  \max m_{123}^{\frac{1}{T}} = p_1 p_2 p_3 \Bigg( && \hspace{-20pt} -2+\hspace{-3pt} \left(\frac{p_{12}}{p_1 p_2}\right)^{\frac{1}{\theta}}\hspace{-5pt} + \hspace{-3pt} \left(\frac{p_{13}}{p_1 p_3}\right)^{\frac{1}{\theta}}\hspace{-5pt} + \hspace{-3pt} \left(\frac{p_{23}}{p_2 p_3}\right)^{\frac{1}{\theta}} \\
 && \hspace{-20pt}  +2 \sqrt{ \left( 1 - \hspace{-3pt}\left(\frac{p_{12}}{p_1 p_2}\right)^{\frac{1}{\theta}} \right) \left( 1 -\hspace{-3pt} \left(\frac{p_{13}}{p_1 p_3}\right)^{\frac{1}{\theta}} \right) \left( 1 - \hspace{-3pt}\left(\frac{p_{23}}{p_2 p_3}\right)^{\frac{1}{\theta}} \right) }\Bigg)^{\theta}
\end{eqnarray}
Of course this corresponds to the determinant of a covariance matrix of the time-share of some Gaussian random variables only if the term inside the outer parenthesis in (\ref{m123}) is positive. Therefore assuming $x_1= \frac{p_{23}}{p_2p_3},~
x_2= \frac{p_{13}}{p_1p_3},~
x_3= \frac{p_{12}}{p_1p_2}
$
and using (\ref{f-definition}) in Lemma \ref{lem:f}:
\be \label{supf}
\max m_{123}^{\frac{1}{T}} = p_1p_2p_3 f(\frac{1}{\theta})
\ee
It remains to show that for any given $p_{123}$ that satisfies the latter condition of (\ref{eqn:cond-on-p}), we have 
$p_{123}\leq p_1p_2p_3 \sup_{\Phi,\theta} f(\frac{1}{\theta})$. Since by (\ref{eqn:cond-on-p}), $p_{123}$ can be as large as $\min \left(\frac{p_{ik}p_{jk}}{p_k}\right)$ we really need to show that $p_1p_2p_3 \sup f(\frac{1}{\theta})$ achieves $\min \left(\frac{p_{ik}p_{jk}}{p_k}\right)$ for some value of $\theta$.
Therefore we need to compute,
\be
\sup_{\theta} m_{123}^{\frac{1}{T}} = p_1p_2p_3 \sup_{0<\theta\leq 1} f(\frac{1}{\theta})
\ee
Note that since we have fixed $p_i$ and $p_{ij}$, and that $\theta$ represents the timesharing of 2 sets of random variables, $\theta=0$ is not generally allowed (otherwise we enforce the random variables to be independent which is not necessarily the case for given $p_i$ and $p_{ij}$). Therefore we have used $\sup$ instead of $\max$ in (\ref{supf}). 
To find $\sup f(\frac{1}{\theta})$ with respect to $\theta$ over \mbox{$0< \theta \leq 1$},
note that as stated in Lemma \ref{lem:f}, $f(\frac{1}{\theta})$ has a (unique) global maximum with a value of $\min_{i\neq j} x_i x_j = \min \frac{p_{jk}p_{ik}}{p_ip_jp_k^2}$ at some $\theta = \theta_0$. If for the assumed values of $x_1,x_2,x_3$ (obtained from the fixed values of $p_i$ and $p_{ij}$), $f(\frac{1}{\theta})$ achives its maximum for $0\leq \theta\leq 1$, i.e., $0\leq \theta_0\leq 1$ then 
\be
\sup_{\Phi,~0<\theta\leq 1} m_{123}^{\frac{1}{T}} = p_1p_2p_3 \min \frac{p_{jk}p_{ik}}{p_ip_jp_k^2} = \min \frac{p_{jk}p_{ik}}{p_k}
\ee
This immediately gives that the vector $p = e^g$ is achievable. 
Otherwise if $\theta_0>1$, then for some $\theta'>\theta_0$, define the vector $p' = p^{\frac{1}{\theta'}}$ (elementwise exponentiation). This means that $p'_i= p_i^{\frac{1}{\theta'}} , p'_{ij}= p_{ij}^{\frac{1}{\theta'}}$ and hence $x'_i = x_i^{\frac{1}{\theta'}}$. Now we try to achieve vector $p'$ by Gaussian structure of (\ref{boundary-structure}). For this purpose we follow similar steps as above for $p'$ and let $m'$ to be the vector of block principal minors of the new corresponding matrix. Then we have
\be
\label{eqn:mprimelimit}
\sup_{\Phi,~0<\theta\leq 1} {m'}_{123}^{\frac{1}{T}} = p'_1p'_2p'_3 \max_{0\leq \theta\leq 1} \left( f(\frac{1}{\theta\theta'}) \right)^{\frac{1}{\theta'}}
\ee
The global maximum of $f$ will now happen for $\theta = \frac{\theta_0}{\theta'}<1$ at which it will have the value $(\min \frac{p_{jk}p_{ik} }{p_k})$. Replacing this in (\ref{eqn:mprimelimit}) gives $\sup_{0<\theta\leq 1} {m'}_{123}^{\frac{1}{T}} = \min \frac{p'_{jk}p'_{ik} }{p'_k}$ which means that although $p$ was not achievable with Gaussians $p' = p^{\frac{1}{\theta}}$ is achievable. The result of the theorem is then established by noting that $p'$ corresponds to a valid entropy vector $g' = \frac{1}{\theta'} g$, i.e., a scaled version of $g$. Note that if maximum of $f$ happens at infinity, i.e., $\theta
_0\rightarrow \infty$, then we should consider a sequence of scaled vectors $p'_i = p^{\frac{1}{\theta'_i}}$ (or equivalently $g'_i = \frac{1}{\theta'_i}g$) where $\theta'_i$ is an unbounded increasing sequence in $i$. As $i \rightarrow \infty$, $g'_i$ will asymptotically fall in the Gaussian region (a small perturbation of $g'_i,~ i\rightarrow \infty$ will put $g'_i$ in the Gaussian region).  Hence $g$ will belong to the closure of the convex cone of the Gaussian region as well. 
\hfill \qed
 



\begin{conjecture}
In Lemma \ref{lem:f}, if for some $\tilde{\delta}>0$, we have $y(\tilde{\delta})>0$, then $\max_{\delta\geq \tilde{\delta}} f(\delta) = f(\tilde{\delta})$. In particular
\be
\label{eqn:fconj}
\mbox{If $\tilde{y} = y(1) > 0$}~~~\Longrightarrow~~~\max_{\delta\geq1} f(\delta) = f(1)  
\ee
Moreover if we let $\delta = \frac{1}{\theta}$, then (\ref{eqn:fconj}) translates to
\be
\mbox{If $\tilde{y} = y(1) > 0$}~~~\Longrightarrow~~~\max_{0\leq \theta\leq1} f(\frac{1}{\theta}) = f(1)  
\ee
\label{conj:func-conj}
\end{conjecture}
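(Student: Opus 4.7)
My plan proceeds in three stages: identify the algebraic condition under which $f$ attains its global maximum, locate this maximum via a sign-counting argument on $y(\delta)$, and finally establish monotonicity of $f$ beyond it.

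First, with WLOG $x_3 = \max_l x_l$, I would verify by direct expansion the key identity
\be
P(\delta) := -2+\sum_l x_l^\delta+2\sqrt{\prod_l(1-x_l^\delta)} \;=\; x_1^\delta x_2^\delta - \Big(\sqrt{(1-x_1^\delta)(1-x_2^\delta)}-\sqrt{1-x_3^\delta}\Big)^2,
\ee
from which $f(\delta) = P(\delta)^{1/\delta}\leq x_1 x_2$ with equality iff the squared term vanishes, iff $(1-x_1^\delta)(1-x_2^\delta)=1-x_3^\delta$. Expanding shows this last condition is precisely $y(\delta) = 0$, where $y(\delta)$ also equals $x_3^\delta + (x_1 x_2)^\delta - x_1^\delta - x_2^\delta$. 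Hence the maximum value $\frac{\prod_l x_l}{\max_l x_l} = x_1 x_2$ from Lemma \ref{lem:f} is attained exactly at zeros of $y$.

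Second, I would show that $y(\delta)$ has a unique zero $\delta^*$ in $(0,\infty)$. Viewing $y(\delta) = \sum_i c_i e^{\delta\alpha_i}$ as an exponential polynomial with exponents in increasing order $\log(x_1 x_2) \leq \log x_1 \leq \log x_2 \leq \log x_3$ and coefficient sign pattern $(+,-,-,+)$, the generalized Descartes rule of signs (P\'olya--Laguerre) bounds the number of positive real zeros by the number of sign changes, namely $2$. Combined with the boundary data $y(0)=0$ with $y'(0)=\log x_3 < 0$ (so $y<0$ for small $\delta>0$) and $y(\delta)\to 0^+$ as $\delta\to\infty$ (the dominant exponential $e^{\delta\log x_3}$ has positive coefficient), a parity argument forces exactly one zero $\delta^*$ in $(0,\infty)$, with $y<0$ on $(0,\delta^*)$ and $y>0$ on $(\delta^*,\infty)$. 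Thus $y(\tilde{\delta})>0$ implies $\tilde{\delta}>\delta^*$, and $\delta^*$ is the unique maximizer of $f$.

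It remains to show that $f$ is non-increasing on $[\delta^*,\infty)$, which then gives $\max_{\delta\geq\tilde{\delta}} f(\delta) = f(\tilde{\delta})$. A priori, $f$ could have additional local maxima beyond $\delta^*$, so this is the main obstacle. The cleanest route is to prove that $\log P(\delta)$ is concave in $\delta$ on the open set $\{P>0\}$: defining $h(\delta) := \delta P'(\delta)/P(\delta) - \log P(\delta)$, a direct computation gives $h'(\delta) = \delta\,(\log P)''(\delta)$, so concavity of $\log P$ makes $h$ monotonically non-increasing, and together with the first-order condition $h(\delta^*)=0$ this forces $h\leq 0$ on $[\delta^*,\infty)$, which is precisely $f'\leq 0$. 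Proving concavity of $\log P(\delta)$ is the delicate step, made nontrivial by the $\sqrt{\prod_l(1-x_l^\delta)}$ term. An alternative strategy is a second sum-of-exponentials sign-counting argument applied directly to the critical-point equation $\delta P'(\delta) = P(\delta)\log P(\delta)$, confirming that $\delta^*$ is the only interior critical point of $f$, after which continuity together with the asymptotic bound $f(\infty)\leq x_1 x_2$ rules out any non-monotone behavior beyond $\delta^*$.
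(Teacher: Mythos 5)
This statement is left as an open conjecture in the paper --- the authors explicitly identify it as the missing ingredient in their characterization of the $n=3$ Gaussian region and offer only numerical evidence (Fig.~\ref{Gaussian_fig:fydelta}) in its support --- so there is no paper proof to compare against. Your first two stages essentially reproduce what the paper already establishes in the Appendix proof of Lemma~\ref{lem:f}: the identity $e(\delta)=(x_ix_j)^\delta-\bigl(\sqrt{(1-x_i^\delta)(1-x_j^\delta)}-\sqrt{1-x_k^\delta}\bigr)^2$, the consequent bound $f\le x_1x_2$ with equality exactly at zeros of $y$, and the location of the unique nontrivial zero $\delta_0$ of $y$ (your P\'olya--Laguerre sign-counting argument is a cleaner route to the uniqueness of that zero than the paper's derivative chase, though it only covers the generic case $x_1<x_2<x_3<1$ and you would still need the paper's case analysis for the degenerate configurations where the zero sits at $0$ or at infinity).

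The genuine gap is your third stage, and it is precisely the gap that makes this a conjecture rather than a theorem. Knowing that $\delta_0$ is the \emph{unique global maximizer} of $f$ does not imply that $f$ is non-increasing on $[\delta_0,\infty)$: $f$ could dip and then rise to a secondary local maximum strictly between $f(\tilde\delta)$ and $f(\delta_0)$, in which case $\max_{\delta\ge\tilde\delta}f(\delta)>f(\tilde\delta)$ and the conjecture fails. You correctly reduce the problem to concavity of $\log P$ on $\{P>0\}$ (or to uniqueness of the interior critical point of $f$), but you then state that this is ``the delicate step'' and do not carry it out; neither candidate strategy is executed. The fallback sentence --- that continuity plus $f(\infty)\le x_1x_2$ ``rules out any non-monotone behavior beyond $\delta^*$'' --- is not a valid inference: an upper bound at infinity equal to the global maximum value says nothing about oscillation below that value. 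Note also that the critical-point equation $\delta P'(\delta)=P(\delta)\log P(\delta)$ involves $\log P$ and the square root $\sqrt{\prod_l(1-x_l^\delta)}$, so it is not a sum of exponentials and Descartes-type zero counting does not apply to it directly. Until the concavity (or single-critical-point) claim is actually proved, the proposal is a reduction of the conjecture to an equivalent unproved analytic inequality, not a proof.
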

Simulations of the function $f$ for different values of $\delta$ (Fig. \ref{Gaussian_fig:fydelta} in the Appendix) support the statement of Conjecture \ref{conj:func-conj}. Our Conjecture \ref{thm:3Gauss} relies on the above.

{\em Proof of Conjecture \ref {thm:3Gauss} Assuming Conjecture \ref{conj:func-conj}:}
To find the Gaussian entropy region again we employ Lemma \ref{lem:boundary} to obtain the boundary entropies of the region. 
Hence we consider the structure of (\ref{boundary-structure}) which is obtained from the time-share of a set of independent random variables with covariance matrix of block size $T-\hat{T}$ and another set of random variables with orthogonal covariance matrix of block size $\hat{T}$.
Let $m$ be the vector of block principal minors of the matrix in (\ref{boundary-structure}) and let $q = m^{\frac{1}{T}}$ where the exponential acts componentwise.
Moreover denote the corresponding entropy vector by $g = \log q$ ($\log$ acting componentwise).
Then we would like to characterize the set of $q$-vectors (equivalently $g$-vectors) that can arise from (\ref{boundary-structure}) (i.e., they lie in the convex hull of Gaussian structures). 
First let us investigate the constraints on $q_i$ and $q_{ij}$.
It is easy to see that $
\alpha_{ii} = q_i \geq 0$ and \mbox{$ \alpha_{ij} = \pm \sqrt{ q_iq_j (1 - (\frac{q_{ij}}{q_iq_j})^{\frac{T}{\hat{T}}})}$}. Therefore the imposed constraints are
\be \label{pipj}
q_i\geq 0,~~~~0\leq q_{ij}\leq q_iq_j
\ee
where $q_{ij}\geq 0$ is due to the positivity requirement of the matrix. Next we would like to obtain the limits of $q_{123}$. For this purpose assume that $q_i$ and $q_{ij}$ are given fixed numbers satisfying (\ref{pipj}).
Determinant of matrix (\ref{boundary-structure}) is obtained from (\ref{m123-structure}). 
If we let $\theta=\frac{\hat{T}}{T}$ and substitute for $\alpha_i$ and $\alpha_{ij}$ in (\ref{m123-structure}) we obtain:
\begin{eqnarray} \label{q123}
\nonumber  \max q_{123} = q_1 q_2 q_3 \Bigg( && \hspace{-20pt} -2+\hspace{-3pt} \left(\frac{q_{12}}{q_1 q_2}\right)^{\frac{1}{\theta}}\hspace{-5pt} + \hspace{-3pt} \left(\frac{q_{13}}{q_1 q_3}\right)^{\frac{1}{\theta}}\hspace{-5pt} + \hspace{-3pt} \left(\frac{q_{23}}{q_2 q_3}\right)^{\frac{1}{\theta}} \\
 && \hspace{-20pt}  +2 \sqrt{ \left( 1 - \hspace{-3pt}\left(\frac{q_{12}}{q_1 q_2}\right)^{\frac{1}{\theta}} \right) \left( 1 -\hspace{-3pt} \left(\frac{q_{13}}{q_1 q_3}\right)^{\frac{1}{\theta}} \right) \left( 1 - \hspace{-3pt}\left(\frac{q_{23}}{q_2 q_3}\right)^{\frac{1}{\theta}} \right) }\Bigg)^{\theta}
\end{eqnarray}
Note that as described in the proof of Theorem {\ref{thm:cone}}, we should insist on the positivity of the expression inside the outermost parenthesis in (\ref{q123}). Therefore defining 
$x_1= \frac{q_{23}}{q_2q_3},~
x_2= \frac{q_{13}}{q_1q_3},~
x_3= \frac{q_{12}}{q_1q_2}
$
and using (\ref{f-definition}) in Lemma \ref{lem:f}:
\be \label{supq}
\sup_{0<\theta\leq 1} q_{123} = q_1q_2q_3 \max_{0\leq \theta\leq 1} f(\frac{1}{\theta})
\ee
where since $q_i$ and $q_{ij}$ are fixed, we have excluded $\theta = 0$ and used $\sup$ instead of $\max$ for $q_{123}$ in (\ref{supq}) so that we do not enforce indepndence of the underlying random variables.
\footnote{Note that if the underlying random variables are independent, then $f(\theta)$ will be independent of $\theta$.} 
Using Corollary \ref{cor:ftheta} and assuming that Conjecture \ref{conj:func-conj} holds we obtain,
\bea
\max_{0\leq\theta\leq 1} f(\frac{1}{\theta}) = \begin{cases}  \frac{\prod_{l}x_l}{\max_{l}(x_l)} ~~~ \mbox{If $\tilde{y}\leq 0$}\\ f(1) \hspace{25pt} \mbox{If $\tilde{y}>0$}\end{cases}
\label{eqn:fmax}
\eea
where $\tilde{y} =  \frac{\prod_l x_l}{\max_l x_l}+2\max_l x_l-\sum_l x_l $. Replacing for $x_i$ in (\ref{eqn:fmax}) in terms of $q$-vector entries and using the result of (\ref{eqn:fmax}) in (\ref{supq}) with the final substitution of $q$-entries in terms of entropy elements of $g = \log q$, yields the cojecture result. Note that when $\tilde{y}<0$ the characterization of the region is known perfectly from Corollary \ref{cor:ftheta}. The only part of the entropy region that is conjectured about is when $\tilde{y}>0$ whose proof relies on the validity of Conjecture \ref{conj:func-conj}.  
\hfill \qed



\section{Cayley's Hyperdeterminant} \label{Gaussian_sec:hyperdet}
Recall from (\ref{defg}) that the entropy of a collection of Gaussian random variables
is simply the ``log-determinant'' of their covariance
matrix. Similarly, the entropy of any {\it subset} of Gaussian random variables
is simply the ``log'' of the principal minor of the covariance matrix corresponding to this
subset. Therefore one approach to characterizing the
entropy region of Gaussians, is to study the determinantal relations of
a symmetric positive semi-definite matrix.

For example, consider 3 Gaussian random
variables. While the entropy vector of 3 random variables is a 7 dimensional
object, there are only 6 free parameters in a symmetric positive
semi-definite matrix and therefore the minors should satisfy a constraint.
It has very recently been shown that this constraint is given by Cayley's
so-called $2\times 2\times 2$ ``hyperdeterminant" \cite{hyperdet}.
The hyperdeterminant is a generalization of the
determinant concept for matrices to tensors and it was first introduced by
Cayley in 1845 \cite{cayley}.

There are a couple of equivalent definitions for the hyperdeterminant
among which we choose the definition through the degeneracy of a multilinear
form. Consider the following multilinear form of the format $(k_1+1)
\times (k_2+1) \times \ldots \times (k_n+1)$ in variables
$X_1,\ldots,X_n$ where each variable $X_j$ is a vector of length
$(k_j+1)$ with elements in $\mathds{C}$:
\begin{eqnarray}\label{multilinear}
\nonumber  &&\hspace{-40pt} f(X_{1},X_{2},\ldots,X_{n}) = \\
&& \hspace{-10pt} \sum_{i_1=0}^{k_1}\sum_{i_2=0}^{k_2}\ldots\sum_{i_n=0}^{k_n} a_{i_1,i_2,\ldots,i_n} x_{1,i_1} x_{2,i_2},\ldots,x_{n,i_n}
\end{eqnarray}
The multilinear form $f$ is said to be degenerate if and only if there
is a non-trivial solution $(X_1,X_2,\ldots,X_n)$ to the following
system of partial derivative equations \cite{gelfand}:
\begin{equation}\label{degeneracy}
\frac{\partial f}{\partial x_{j,\hspace{1pt}i}} = 0 ~~~ \text{for all} ~~ j=1,\ldots,n ~~\text{and}~~ i=1,\ldots,k_j
\end{equation}
The unique (up to a scale) irreducible polynomial function of entries $a_{i_1,i_2,\ldots,i_n}$ with integral
coefficients that vanishes when $f$ is degenerate is called the hyperdeterminant.

\textit{Example ($2\times 2$ hyperdeterminant):}
Let $X = \left( \begin{array}{c} x_0 \\ x_1 \end{array}\right), Y = \left( \begin{array}{c} y_0 \\ y_1 \end{array}\right),$ and 
$A = \left( \begin{array}{cc} a_{00} & a_{01}\\ a_{10} & a_{11} \end{array}\right)$. 
Consider the multilinear form $f(X,Y) = \sum_{i,j = 0}^1 a_{i,j}x_iy_j = X^t A Y$.
The multilinear form $f$ is degenerate if there is a non-tirivial solution for $X,Y$ such that
\bea
&& \frac{\partial f}{\partial X} = AY = 0\\
&& \frac{\partial f}{\partial Y} = A^tX = 0
\eea
A nontirival solution exists if and only if $\det A = 0$. Therefore the hyperdeterminant is simply the determinant in this case.


The hyperdeterminant of a $2\times 2 \times 2$ multilinear form $\sum_{i_1 = 0}^1 \sum_{i_2=0}^1 \sum_{i_3=0}^1 a_{i_1i_2i_3} x_{i_1}x_{i_2}x_{i_3}$ was
first computed by Cayley \cite{cayley} and is as follows:
\begin{eqnarray}\label{hyperdet_minor}
\nonumber &&\hspace{-40pt} -a_{000}^2 a_{111}^2 - a_{100}^2a_{011}^2-a_{010}^2a_{101}^2-a_{001}^2a_{110}^2  \\
\nonumber && \hspace{-30pt} -4a_{000} a_{110} a_{101}a_{011} - 4a_{100}a_{010}a_{001}a_{111}\\
\nonumber && \hspace{-20pt} +2 a_{000} a_{100}a_{011}a_{111} + 2 a_{000} a_{010} a_{101} a_{111}\\
\nonumber && \hspace{-10pt} + 2a_{000} a_{001} a_{110} a_{111}+2 a_{100}a_{010}a_{101}a_{011}\\
&&\hspace{0pt} +2a_{100}a_{001}a_{110}a_{011}+2a_{010}a_{001}a_{110}a_{101} = 0
\end{eqnarray}

In \cite{hyperdet} it is further shown that the principal
minors of an $n\times n$ symmetric matrix satisfy the
$2\times 2\times \ldots \times 2$ ($n$ times)
hyperdeterminant. It is thus clear that determining the entropy region
of Gaussian random variables is intimately related to Cayley's
hyperdeterminant.

It is with this viewpoint in mind that we study the
hyperdeterminant in this section.  
In the next 2 subsections, first we present a 
a new determinant formula for the $2\times 2\times 2$
hyperdeterminant which may be of
interest since computing the hyperdeterminant of higher formats is
extremely difficult and our formula may suggest a way of attacking
more complicated hyperdeterminants. Then we give a novel proof of one of the
main results of \cite{hyperdet} that the principal
minors of any $n\times n$ symmetric matrix satisfy the
$2\times 2\times \ldots \times 2$ ($n$ times)
hyperdeterminant. Our proof hinges on identifying a determinant
formula for the multilinear form from which the hyperdeterminant
arises. 

\subsection{A Formula for the $2\times2 \times 2$ Hyperdeterminant}
Obtaining an explicit formula for the hyperdeterminant is not an easy
task. The first nontrivial hyperdeterminant which is the $2\times 2\times 2$, was obtained by Cayley in 1845 \cite{cayley}. However surprisingly calculating the next hyperdeterminant which is the $2\times 2\times 2\times 2$ proves to be very difficult. Until recently the only method for computing the $2\times 2\times 2\times 2$ was the  nested formula of Schl\"{a}fli which he had obtained in 1852 \cite{schalfli}\cite{gelfand}
and although after 150 years Luque and Thibon \cite{LT-hyperdet-invariants} expressed it in terms of the fundamental tensor invariants, the monomial expansion of this hyperdeterminant remained as a challenge. It was finally solved recently in \cite{hyper4cube} where they show that the $2\times 2\times 2\times 2$ hyperdeterminant consists of 2,894,276 terms. It is interesting to mention that Cayley had a 340 term expression for the $2\times 2\times 2\times 2$ hyperdeterminant which satisfies many invariance properties of the hyperdeterminant and only fails to satisfy a few extra conditions \cite{TW-hyper-integrable}. Therefore, as mentioned previously, computing hyperdeterminants of different formats is generally nontrivial. In fact even Schl\"{a}fli's method only works for some special hyperdeterminant formats.
Moreover according to \cite{gelfand} it is not easy to prove directly that (\ref{hyperdet_minor}) vanishes if and only if (\ref{degeneracy}) has a non-trivial solution.
Here we propose a new formula for (and a method to obtain) the $2\times
2\times 2$ hyperdeterminant which shows this is an if and only if connection directly. Moreover this method might be extendable to hyperdeterminants of larger format.

\begin{theorem} \textit{(Determinant formula for $2\times 2\times 2$ hyperdeterminant)} Define
\[ B_0 = \left[\hspace{-2pt} \begin{array}{cc} a_{000} & \hspace{-5pt} a_{100} \\
a_{001} & \hspace{-5pt} a_{101} \end{array} \hspace{-2pt}\right] ~,~
B_1 = \left[\hspace{-2pt} \begin{array}{cc} a_{010} & \hspace{-5pt} a_{110} \\
a_{011} & \hspace{-5pt} a_{111} \end{array} \hspace{-2pt}\right] ~,~
J = \left[\hspace{-2pt} \begin{array}{cc} 0 & \hspace{-5pt} -1 \\ 1 & \hspace{-5pt} 0 \end{array} \hspace{-2pt}\right] \]
Then the $2\times 2\times 2$ hyperdeterminant is given by
\be \label{new_hyper1}
\mbox{det}(B_0JB_1^t-B_1JB_0^t) = 0
\ee \label{hyperdet-formula-det}
\end{theorem}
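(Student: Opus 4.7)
My plan is to derive the proposed determinant formula by reducing the degeneracy conditions (\ref{degeneracy}) to a rank condition on a single $2\times 2$ symmetric matrix. Slicing the trilinear form in the $y$-direction, write $f(X,Y,Z) = Z^t M(y) X$ with $M(y) = y_0 B_0 + y_1 B_1$. The partial-derivative equations then read $M(y)X = 0$, $M(y)^t Z = 0$, and $Z^t B_0 X = Z^t B_1 X = 0$. For nonzero $X,Z$ the first two already force $\det M(y) = 0$, which is a binary quadratic
\[
\det M(y) = y_0^2\det B_0 + y_0 y_1 K + y_1^2 \det B_1,\qquad K = a_{000}a_{111} + a_{010}a_{101} - a_{100}a_{011} - a_{110}a_{001}.
\]

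The central algebraic tool is the $2\times 2$ identity $M J M^t = (\det M)\,J$ (immediate from $\adj(M) = J M^t J^{-1}$). Applied to $M(y)$ and compared monomial-by-monomial in $y$, it yields
\[
B_0 J B_0^t = (\det B_0)J,\quad B_1 J B_1^t = (\det B_1)J,\quad B_0 J B_1^t + B_1 J B_0^t = K J.
\]
Denoting $C = B_0 J B_1^t - B_1 J B_0^t$ (which is symmetric since $J^t = -J$), we have $B_0 J B_1^t = (C+KJ)/2$, so $\det(C+KJ) = 4\det B_0 \det B_1$, and analogously $\det(C-KJ) = 4\det B_0 \det B_1$. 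Combining via the elementary $2\times 2$ identity $\det(A+B)+\det(A-B) = 2\det A + 2\det B$ (with $A=C$, $B=KJ$, and $\det J = 1$) gives
\[
\det C = 4\det B_0 \det B_1 - K^2,
\]
i.e.\ minus the discriminant of the binary quadratic $\det M(y)$.

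It remains to tie $\det C = 0$ to degeneracy of $f$. In the generic case when $M(y^*)$ has rank one, factor $M(y^*) = uv^t$; the one-dimensional null spaces of $M(y^*)$ and $M(y^*)^t$ are spanned by $X = Jv$ and $Z = Ju$. Using Jacobi's formula $\partial_{y_s}\det M = \tr(\adj(M)B_s)$ together with the explicit rank-one adjugate $\adj(uv^t) = -(Jv)(u^tJ)$, a short computation gives
\[
Z^t B_s X \;=\; -u^t J B_s J v \;=\; \partial_{y_s}\det M(y)\big|_{y^*},\qquad s = 0,1.
\]
Hence the residual conditions $Z^tB_sX = 0$ collapse to $\nabla_y \det M|_{y^*} = 0$, which combined with $\det M(y^*) = 0$ says $y^*$ is a nontrivial singular point of the conic $\{\det M = 0\}$ in $\mathbb{P}^1$. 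Such a point exists iff the quadratic $\det M(y)$ is a perfect square, i.e.\ iff its discriminant vanishes, i.e.\ iff $\det C = 0$. This proves the claimed equivalence and simultaneously identifies $\det C$ with the hyperdeterminant up to sign.

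The main obstacle is the edge case $M(y^*) = 0$ (rank zero), which forces $B_0$ and $B_1$ to be linearly dependent and must be handled separately; here both the hyperdeterminant and $\det C$ vanish trivially from the three coefficient identities derived above. A useful cross-check and backup is the purely algebraic route: expand $\det C$ term-by-term and verify that the resulting twelve quartic monomials reproduce Cayley's explicit formula (\ref{hyperdet_minor}) with matching signs, yielding a direct computational proof that $\det C$ is the $2\times 2\times 2$ hyperdeterminant.
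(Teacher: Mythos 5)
Your proof is correct, but it takes a genuinely different route from the paper's. The paper substitutes $W=(x_0y_0,x_1y_0,x_0y_1,x_1y_1)^t$ to write $f=Z^t(B_0\;B_1)W$, characterizes the null space of $\bigl(\partial W/\partial(X,Y)\bigr)^t$, packages the degeneracy conditions into a bordered $6\times 6$ linear system in $(W,Z)$, and extracts $\det(B_0JB_1^t-B_1JB_0^t)$ as a Schur complement; the converse then requires verifying the compatibility condition $w_0/w_2=w_1/w_3$ via $p^tJq=0$ so that $W$ actually factors as a product of $X$ and $Y$. You instead slice along $Y$, reduce degeneracy to the statement that the binary quadratic $\det M(y)$, $M(y)=y_0B_0+y_1B_1$, has a point where its gradient vanishes (using the rank-one factorization, Jacobi's formula, and the adjugate identity $\adj(uv^t)=-(Jv)(u^tJ)$ to show $Z^tB_sX=\partial_{y_s}\det M$), and then convert the resulting discriminant condition into $\det C=0$ via the identity $MJM^t=(\det M)J$ and polarization of $\det$ on $2\times 2$ matrices. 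This is essentially Schl\"afli's method made explicit, and it buys two things the paper's argument does not: a conceptual identification $\det C=4\det B_0\det B_1-K^2=-\mathrm{Disc}_y\det M(y)$ that explains \emph{why} the formula works, and a clearer path toward larger formats via iterated discriminants (which the paper itself flags as the motivation). The paper's route, in exchange, is more elementary --- it never needs the rank-one factorization or Jacobi's formula and produces the determinant formula directly from the structure of the linear system --- at the cost of the extra Segre-compatibility step in the converse. Two small points to tighten: your "singular point of the conic in $\mathbb{P}^1$" is really just a double root of a binary quadratic (including the identically-zero case, where any $y^*$ works and a rank-one or rank-zero slice still yields a witness), and, as in the paper, the final identification of $\det C$ with Cayley's polynomial (rather than merely a polynomial cutting out the degeneracy locus) still rests on the explicit term-by-term expansion you relegate to a cross-check.
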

\begin{proof}
Let $f$ be a multilinear form of the format $2\times 2 \times 2$,
\begin{equation}
f(X,Y,Z) = \sum_{i,j,k=0}^{1} a_{ijk} x_i y_j z_k
\end{equation}
Then by the change of variables, $w_0 = x_0 y_0~,~ w_1 = x_1y_0~,~ w_2 = x_0y_1~,~ w_3 = x_1y_1$,
the function $f$ can be written as,
\begin{eqnarray} \label{matrix_form}
\nonumber  f(X,Y,Z) =&&\hspace{-15pt}(\hspace{-2pt} \begin{array}{cc} z_0 &\hspace{-5pt} z_1 \end{array}\hspace{-2pt}) \left( \hspace{-2pt}\begin{array}{cccc} a_{000} &\hspace{-5pt} a_{100} &\hspace{-5pt} a_{010} &\hspace{-5pt} a_{110} \\ a_{001} &\hspace{-5pt} a_{101} &\hspace{-5pt} a_{011} &\hspace{-5pt} a_{111} \end{array}\hspace{-2pt} \right) \left( \hspace{-2pt}\begin{array}{c} w_0 \\ w_1 \\ w_2 \\ w_3 \end{array}\hspace{-2pt} \right) \\
 \triangleq&& \hspace{-15pt} Z^t \left(\begin{array}{cc}B_0 & B_1 \end{array}\right)  W
\end{eqnarray}

To proceed, recall from (\ref{degeneracy}) that the hyperdeterminant of the multilinear form of the format $2\times 2 \times 2$,
vanishes if and only if there is a non-trivial solution $(X,Y,Z)$ to the system of partial derivative equations:
\begin{equation} \label{degenracy_conditions}
\frac{\partial f}{\partial x_i} = 0 ~~~~~\frac{\partial{f}}{\partial{y_j}} = 0 ~~~~~\frac{\partial{f}}{\partial{z_k}} = 0 ~~~~~i,j,k = 0,1
\end{equation}
(a) First we show that if there is a non-trivial solution to the equations (\ref{degenracy_conditions}), then (\ref{new_hyper1}) vanishes. By the chain rule $\frac{\partial f}{\partial x_i} = \sum_{k}{\frac{\partial w_k}{\partial x_i} \frac{\partial f}{\partial w_k}}$, we can write $\frac{\partial f}{\partial{(X,Y)}} = \left(\frac{\partial W}{\partial {(X,Y)}}\right)^{t} \frac{\partial f}{\partial W}$. Also from (\ref{matrix_form}), $\frac{\partial f}{\partial Z} =  (\begin{array}{cc} B_0 & B_1 \end{array}) W$.
Therefore the degeneracy conditions equivalent with (\ref{degenracy_conditions}) become:
\begin{eqnarray}
\label{null_cond}\left(\frac{\partial W}{\partial {(X,Y)}}\right)^{t} \frac{\partial f}{\partial W} &=& 0\\
\label{eqn2}  (\begin{array}{cc} B_0 & B_1 \end{array}) W &=& 0
\end{eqnarray}
Condition (\ref{null_cond}) implies that the vector $\frac{\partial f}{\partial W}$ should belong to the null space of the matrix $\left(\frac{\partial W}{\partial {(X,Y)}}\right)^{t} $.
The following Lemma gives the structure of this null space.
\begin{lemma}\label{lem:hyper_formula_null}
The null space of the matrix $\left(\frac{\partial W}{\partial {(X,Y)}}\right)^{T} $ is characterized by vectors of the form,
$(w_3~-w_2~-w_1~w_0)^t$.
\end{lemma}

\begin{proof}
Let $V$ be a $4\times 1$ vector. Noting that for $j=\{1,2\}$, $\left(\frac{\partial W}{\partial (X,Y)}\right)_{ij} = \frac{\partial w_i}{\partial x_{j-1}}$ and for $j=\{3,4\}$, $\left(\frac{\partial W}{\partial (X,Y)}\right)_{ij} = \frac{\partial w_i}{\partial y_{j-3}}$, we have:
\begin{eqnarray} \nonumber
& \left(\frac{\partial W}{\partial {(X,Y)}}\right)^{t} V =
\left(\begin{array}{cccc} y_0 & 0 & y_1 & 0 \\ 0 & y_0 & 0 & y_1 \\ x_0 & x_1 & 0 & 0 \\ 0 & 0 & x_0 & x_1 \end{array}\right) \left(\begin{array}{c} v_1 \\ v_2 \\ v_3 \\ v_4 \end{array}\right) = 0
\\ &
\end{eqnarray}
Solving for $V$ in the above, yields the equations:
\bea
\frac{v_1}{v_3} = \frac{v_2}{v_4} = -\frac{y_1}{y_0}\\
\frac{v_1}{v_2} = \frac{v_3}{v_4} = -\frac{x_1}{x_0}
\eea
Letting $v_4 = x_0 y_0$ characterizes the vectors in the null space up to a scale:
\begin{eqnarray} \label{null_space}
\nonumber \hspace{-20pt} V^t &=& (\begin{array}{cccc} x_1y_1 & -x_0y_1 & -x_1y_0 & x_0y_0  \end{array})^t \\
\hspace{-20pt} &=& \left( \begin{array}{cccc}w_3 & -w_2 & -w_1 & w_0  \end{array}\right)^t
\end{eqnarray}
We further note that provided $\left(\begin{array}{c} x_0 \\ x_1 \end{array}\right)\neq 0$ and $\left(\begin{array}{c} y_0 \\ y_1 \end{array}\right)\neq 0$, the matrix $\left(\frac{\partial W}{\partial {(X,Y)}}\right)^{t}$ has rank 3 and that $V$ is therefore the only null space vector (up to a scaling).
\end{proof}

Going back to the proof of Theorem \ref{hyperdet-formula-det}, using Lemma \ref{lem:hyper_formula_null} we conclude that we should have, $\frac{\partial f}{\partial Z} =   (\begin{array}{cc} B_0 & B_1 \end{array}) W = 0$
and for an arbitrary non-zero scalar $\alpha$, $\frac{\partial f}{\partial W } =    (\begin{array}{cc} B_0 & B_1 \end{array})^{t}Z = \alpha \left( \begin{array}{cccc}w_3 & -w_2 & -w_1 & w_0  \end{array}\right)^t$. Putting these two equations into matrix form we can further write the following:
\begin{equation}
\left( \begin{array}{ccc} 0 & 0 & {B_0}^{t}\\0 & 0 & {B_1}^t \\ B_0 & B_1 & 0 \end{array}\right) \left( \begin{array}{c} W \\ Z \end{array}\right) = \alpha \left(\begin{array}{c} w_3 \\-w_2 \\ -w_1\\ w_0\\ 0 \\0  \end{array} \right)
\end{equation}
or in other form:
\begin{equation}\label{W_null}
\left( \begin{array}{ccc} \left(\begin{array}{cc} 0 & 0 \\ 0 &  0  \end{array}\right) & \alpha \left(\begin{array}{cc} 0 & -1 \\ 1 & 0  \end{array}\right) & {B_0}^t\\ \alpha \left(\begin{array}{cc} 0 & 1 \\ -1 & 0  \end{array}\right) & \left(\begin{array}{cc} 0 & 0 \\ 0 & 0  \end{array}\right) & {B_1}^{t} \\ B_0 & B_1 & 0 \end{array} \right) \left(\begin{array}{c} W \\ Z \end{array} \right) = 0
\end{equation}
A non-trivial solution for $X,Y,Z$ and hence for $W,Z$ requires the matrix to be low rank. Evaluating the determinant we have:
\bea
\hspace{-15pt}\det \left(\begin{array}{ccc} 0 & \alpha J & B_0^t\\ -\alpha J & 0 & B_1^t \\ B_0 & B_1 & 0\end{array}\right) & \hspace{-8pt}=& \hspace{-8pt} \det \left(\begin{array}{cc} 0 & \alpha J \\ -\alpha J & 0\end{array}\right) \det \left(- \left(\begin{array}{cc}B_0 & B_1\end{array}\right) \left(\begin{array}{cc} 0 & \alpha J\\ -\alpha J & 0 \end{array} \right)^{-1} \left(\begin{array}{c} B_0^t\\ B_1^t\end{array} \right)\right)\\
&\hspace{-8pt}=&\hspace{-8pt}  \alpha^2 \det  \left( \left(\begin{array}{cc}B_0 & B_1\end{array}\right) \left(\begin{array}{cc} 0 & J\\ - J & 0 \end{array} \right)^{-1} \left(\begin{array}{c} B_0^t\\ B_1^t\end{array} \right)\right) = 0
\eea

Using the fact that $J^{-1} = -J$ we can write the following,
\begin{eqnarray}\label{new_hyper2}
\det\left(  (\begin{array}{cc} B_0 & B_1 \end{array}) \left(\begin{array}{cc} 0 &  J \\ -J & 0 \end{array}\right) \left(\begin{array}{c} {B_0}^t \\ {B_1}^t\end{array}\right) \right)
= \det (B_0 J {B_1}^t - B_1 J {B_0}^t ) = 0
\end{eqnarray}

Note that the explicit calculation of (\ref{new_hyper2}) gives,
\bea
\det \left(\begin{array}{cc} 2(a_{100}a_{010}-a_{000}a_{110}) & a_{100}a_{011}+a_{101}a_{010}-a_{000}a_{111}-a_{001}a_{110} \\ a_{100}a_{011}+a_{101}a_{010}-a_{000}a_{111}-a_{001}a_{110} & 2(a_{101}a_{011}-a_{001}a_{111}) \end{array}\right) = 0
\eea
which when expanded gives the $2\times 2 \times2$ hyperdeterminant formula stated in equation (\ref{hyperdet_minor}) as expected.\\
(b) Conversely suppose that (\ref{new_hyper2}) vanishes and therefore there is a non-trivial solution for $W$ and $Z$ in (\ref{W_null}). To prove that there is also a non-trivial solution to (\ref{degenracy_conditions}), we need to show that such $X,~Y$ and $Z$ exist so that (\ref{null_cond}) and (\ref{eqn2}) hold. By definition of $w_0,w_1,w_2,w_3$, it is not hard to see that a valid $x_0, x_1, y_0$ and $y_1$ can be found from $w_i$ only if $W = (\begin{array}{cccc} w_0 & w_1 & w_2 & w_3 \end{array})^t$ in (\ref{W_null}) has the property,
\be \label{proportionW}
\frac{w_0}{w_2} = \frac{w_1}{w_3}
\ee
In the following we show that the solution of (\ref{W_null}) in fact satisfies relation (\ref{proportionW}). Let $p = \left(\begin{array}{cc} w_0 & w_1 \end{array} \right)^t$ and $q = \left(\begin{array}{cc} w_2 & w_3 \end{array}\right)^t$. Then from (\ref{W_null}) we obtain:
\bea
\alpha J q &+& {B_0}^t Z = 0\\
-\alpha J p &+& {B_1}^t Z = 0\\
\label{lasteqn} B_0 p &+& B_1 q = 0
\eea
Multiplying the first equation by $p^t$ and the second one by $q^t$ and adding them together we obtain,
\be
\alpha (p^t J q - q^t J p) + (p^t {B_0}^t + q^t {B_1}^t)Z = 0
\ee
which by the use of (\ref{lasteqn}) simplifies to:
\be
p^t J q = q^t J p
\ee
Noting that $p^t J q = (p^t J q)^t = - q^t J p$ gives,
\be \label{pqCond}
p^t J q = q^t J p = 0
\ee
(\ref{proportionW}) then follows immediately from (\ref{pqCond}) by substituting for $p$ and $q$.
\end{proof}

\subsection{Minors of a Symmetric Matrix Satisfy the Hyperdeterminant}

It has recently been shown in \cite{hyperdet}
that the principal minors of a symmetric matrix satisfy the
hyperdeterminant relations. There, this was found by explicitly computing the
determinant of a $3\times 3$ matrix in terms of the other minors and
noticing that it satisfied the $2\times 2\times 2$
hyperdeterminant. In this section we give an explanation of why this
relation holds for the principal minors of a symmetric matrix. The key
ingredient is identifying a simple determinant
formula for the multilinear form (\ref{multilinear}) when the coefficients $a_{i_1,i_2,\ldots,i_n}$ are the principal minors of an $n\times n$ symmetric matrix.

\begin{lemma}\label{lem:detformat}
Let the elements of the tensor $ [m_{i_1,i_2,\ldots,i_n}],~i_k \in \{0,1\}$ be the principal minors of an $n\times n$ matrix $\tilde{A}$ such that $m_{i_1,i_2,\ldots,i_n},~i_k \in \{0,1\}$ denotes the principal minor obtained by choosing the rows and columns of $\tilde{A}$ indexed by the set $\alpha = \{k | i_k = 1\}$ (by convention when all indices are zero $a_{00\ldots0} = 1)$. Then the following multilinear form of the format $2\times 2 \times \ldots \times 2$ ($n$ times),
\begin{equation}\label{original_form1}
f(X_{1},X_{2},\ldots,X_{n}) = \hspace{-5pt} \sum_{i_1,i_2,\ldots,i_n=0}^{1} \hspace{-10pt} m_{i_1,i_2,\ldots,i_n} x_{1,i_1} x_{2,i_2}\ldots x_{n,i_n}
\end{equation}
can be rewritten as the determinant of the matrix $F$, i.e., $f(X_{1},X_{2},\ldots,X_{n}) = \det( F )$ where $F$ is the following matrix:
\begin{eqnarray} \label{original_form2}
F & \hspace{-8pt} =& \hspace{-8pt}\left(\hspace{-4pt} \begin{array}{cccc} x_{1,0} &\hspace{-5pt} 0  &\hspace{-5pt} \ldots &\hspace{-5pt} 0 \\ 0 &\hspace{-5pt} x_{2,0} &\hspace{-5pt} \ldots &\hspace{-5pt} 0  \\ \vdots &  &\ddots &\\ 0 &\hspace{-5pt}  0  &\hspace{-5pt} \ldots &\hspace{-5pt} x_{n,0} \end{array}\hspace{-4pt} \right)
+ \left(\hspace{-4pt}\begin{array}{cccc} x_{1,1} &\hspace{-5pt} 0  &\hspace{-5pt} \ldots & \hspace{-5pt}0 \\ 0 &\hspace{-5pt} x_{2,1} &\hspace{-5pt} \ldots &\hspace{-5pt} 0  \\ \vdots &  &\ddots &\\ 0 &\hspace{-5pt}  0  &\hspace{-5pt} \ldots &\hspace{-5pt} x_{n,1} \end{array} \hspace{-4pt}\right) \tilde{A} 
\end{eqnarray}
\end{lemma}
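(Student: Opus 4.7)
The plan is to prove the identity $f(X_1,\ldots,X_n)=\det(F)$ by expanding $\det(F)$ row-by-row via the multilinearity of the determinant. Each row of $F$ naturally splits as
\[
F_i = x_{i,0}\,e_i^{t} + x_{i,1}\,\tilde A_i,
\]
where $e_i$ is the $i$-th standard basis vector and $\tilde A_i$ is the $i$-th row of $\tilde A$. Applying multilinearity in all $n$ rows expands $\det(F)$ as a sum indexed by subsets $S\subseteq\{1,\ldots,n\}$, where $S$ records which rows inherit the $\tilde A$-contribution. Concretely, for each $S$ we obtain a term $\det(M_S)$ whose $i$-th row is $x_{i,1}\tilde A_i$ when $i\in S$ and $x_{i,0}e_i^{t}$ when $i\notin S$.

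Next I would evaluate each $\det(M_S)$. For every $i\notin S$, row $i$ of $M_S$ is a scalar multiple of $e_i^{t}$, so it contains nonzero entries only in column $i$. A cofactor expansion along these rows (equivalently, along columns indexed by $\{1,\ldots,n\}\setminus S$) strips off the diagonal factor $\prod_{i\notin S}x_{i,0}$ and leaves behind the submatrix of $\tilde A$ on rows and columns indexed by $S$, multiplied row-wise by $\prod_{i\in S}x_{i,1}$. Thus
\[
\det(M_S)=\Bigl(\prod_{i\notin S}x_{i,0}\Bigr)\Bigl(\prod_{i\in S}x_{i,1}\Bigr)\det\bigl(\tilde A_{S,S}\bigr),
\]
where $\tilde A_{S,S}$ is the principal submatrix of $\tilde A$ indexed by $S$.

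Finally I would identify the bookkeeping between subsets and binary indices. Writing $i_k=1$ iff $k\in S$ gives the bijection $S\leftrightarrow(i_1,\ldots,i_n)\in\{0,1\}^n$, under which $\det(\tilde A_{S,S})=m_{i_1,\ldots,i_n}$ by the hypothesis of the lemma (with $m_{0,\ldots,0}=1$ arising from the empty principal minor), and
\[
\prod_{i\notin S}x_{i,0}\,\prod_{i\in S}x_{i,1}=\prod_{k=1}^{n}x_{k,i_k}.
\]
Summing over all $S$ reproduces exactly the multilinear form in (\ref{original_form1}), establishing $\det(F)=f(X_1,\ldots,X_n)$. There is no real obstacle: the argument is a clean application of multilinearity plus a cofactor expansion, and the only thing to be careful about is aligning the sign conventions in the cofactor expansion (which is trivial here because the rows $x_{i,0}e_i^{t}$ for $i\notin S$ can be cleared by simultaneously deleting row $i$ and column $i$, introducing no sign).
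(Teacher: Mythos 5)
Your proof is correct and rests on the same mechanism as the paper's: the multilinearity of $\det F$ in the rows $F_i = x_{i,0}e_i^t + x_{i,1}\tilde A_i$, together with the observation that a determinant whose rows outside $S$ are multiples of $e_i^t$ collapses (with no sign) to the principal minor $\det(\tilde A_{S,S})$. The only difference is presentational — you expand symbolically over all subsets $S$, whereas the paper extracts the coefficient of each monomial by substituting indicator values $x_{j,p_j}=1$ — so the two arguments are essentially identical.
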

\vspace{7pt}
\begin{proof}
First note that determinant of $F$ has the form, $\det(F) = \sum_{i_1,i_2,\ldots,i_n = 0}^{1} b_{i_1,i_2,\ldots,i_n} x_{1,i_1}x_{2,i_2}\ldots x_{n,i_n}$ for some $\tilde{A}$-dependent coefficients $b_{i_1,i_2,\ldots,i_n}$ (this is because each $x_{j,i_j}$ appears only in the $j$th row of $F$). To prove that $\det(F)$ is in fact equal to (\ref{original_form1}), we need to show that $b_{i_1,i_2,\ldots,i_n} = m_{i_1,i_2,\ldots,i_n},~\forall i_1,\ldots i_n$ or in other words $b_{i_1,i_2,\ldots,i_n}$ are the corresponding minors of $\tilde{A}$.\\
Let $(p_1\ldots p_n)$ be a realization of $\{0,1\}^n$. For $j=1,\ldots,n$, let the variables $x_{j,p_j}= 1$ and the rest of the variables be zero. This choice of values makes $\det(F) = b_{p_1,p_2,\ldots,p_n}$ and $f(X_1,X_2,\ldots,X_n) = m_{p_1,p_2,\ldots,p_n}$. Moreover it can be easily seen that in this case $\det(F)$ in (\ref{original_form2}) will simply be equal to the minor of the matrix $\tilde{A}$ obtained by choosing the set of rows and columns $\alpha \subseteq \{1,\ldots,n\}$ such that $p_j=1$ for all $j \in \alpha$. By assumption this is nothing but the coefficient $m_{p_1,p_2,\ldots,p_n}$ in (\ref{original_form1}) and therefore the lemma is proved.
\end{proof}
{\em Remark:} Note that Lemma \ref{lem:detformat} does not require the matrix $\tilde{A}$ to be symmetric.

\begin{lemma}[Partial derivatives of $\det F$]
Let $\alpha_j= \{1,\ldots,n\}\setminus j$ and $\alpha_k = \{1,\ldots,n\}\setminus k$. Computing the partial derivatives of $\det F$ gives:
\begin{eqnarray}
\label{partial1} &&\hspace{-0pt} \frac{\partial \det F}{\partial x_{j,\hspace{1pt} 0}} = \det F_{\alpha_j,\alpha_j}\\
\label{partial2} &&\hspace{-0pt}\frac{\partial \det F}{\partial x_{j,\hspace{1pt} 1}} = \sum_{k=1}^n \tilde{a}_{jk} (-1)^{j+k} \det F_{\alpha_j,\alpha_k}
\end{eqnarray}
where $\tilde{a}_{jk}$ denotes the $(j,k)$ entry of $\tilde{A}$ and $F_{\alpha_j,\alpha_k}$ denotes the submatrix of $F$ obtained by choosing the rows in $\alpha_j$ and columns in $\alpha_k$.
\end{lemma}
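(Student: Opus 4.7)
The plan is to exploit the multilinearity of the determinant with respect to the rows of $F$, together with the fact that each of the variables $x_{j,0}$ and $x_{j,1}$ appears only in the $j$th row of $F$. From the definition (\ref{original_form2}), the $j$th row of $F$ equals $x_{j,0}\, e_j^t + x_{j,1}\, \tilde{a}_j^t$, where $e_j$ is the $j$th standard basis vector and $\tilde{a}_j^t=(\tilde{a}_{j1},\ldots,\tilde{a}_{jn})$ is the $j$th row of $\tilde{A}$. In particular this row is an affine--linear function of $x_{j,0}$ and $x_{j,1}$ alone, while every other row of $F$ is independent of these two variables.

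By multilinearity of $\det$ in the rows, $\partial \det F / \partial x_{j,0}$ equals the determinant of the matrix obtained by replacing the $j$th row of $F$ by $e_j^t$, and $\partial \det F / \partial x_{j,1}$ equals the determinant of the matrix obtained by replacing that row by $\tilde{a}_j^t$. I would then apply the Laplace cofactor expansion along row $j$ to each of these modified matrices. For $x_{j,0}$, the replaced row has a single nonzero entry equal to $1$ in column $j$, so the expansion collapses to the single term $(-1)^{j+j}\det F_{\alpha_j,\alpha_j}=\det F_{\alpha_j,\alpha_j}$, which is (\ref{partial1}). For $x_{j,1}$, the replaced row is $(\tilde{a}_{j1},\ldots,\tilde{a}_{jn})$, so cofactor expansion along row $j$ yields $\sum_{k=1}^n \tilde{a}_{jk}(-1)^{j+k}\det F_{\alpha_j,\alpha_k}$, which is (\ref{partial2}).

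The argument is essentially a one-line application of multilinearity plus Laplace expansion, so there is no genuine obstacle. The only subtlety worth flagging is that the minor $F_{\alpha_j,\alpha_k}$ is defined by deleting the entire $j$th row (and $k$th column) of $F$, so it does not depend on whether row $j$ is the original row $x_{j,0}e_j^t+x_{j,1}\tilde{a}_j^t$ or one of the substitutes $e_j^t$, $\tilde{a}_j^t$ used in computing the derivatives. This consistency is what permits writing both partial derivatives purely in terms of the minors of the original $F$, as in the statement.
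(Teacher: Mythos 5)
Your proof is correct and is essentially the same argument as the paper's: the paper expands $\det F$ along row $j$ first and then differentiates, while you differentiate first via multilinearity and then expand, but both reduce to the Laplace cofactor expansion along row $j$ together with the observation that the cofactors $F_{\alpha_j,\alpha_k}$ do not involve $x_{j,0}$ or $x_{j,1}$.
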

\begin{proof}
Consider the expansion of $\det F$ along its $j$th row as
$\det F = \sum_{k=1}^n F_{jk} (-1)^{j+k}\det F_{\alpha_j,\alpha_k}$. Noting that for $k\neq j, F_{jk} = x_{j,1}\tilde{a}_{jk}$ and for $j=k, F_{jj} = x_{j,0}+x_{j,1}\tilde{a}_{jj}$, we obtain that
\bea
\det F = x_{j,1} \sum_{k=1}^n \tilde{a}_{j,k} (-1)^{j+k} \det F_{\alpha_j,\alpha_k} + x_{j,0} \det F_{\alpha_j,\alpha_j}
\eea
Taking partial derivatives immediately gives (\ref{partial1}) and (\ref{partial2}).
\end{proof}


Now we can write the condition for the minors of $\tilde{A}$ to satisfy the hyperdeterminant:
\begin{lemma}[rank of $F$]\label{lem:mrank}
The principal minors of matrix $\tilde{A}$ satisfy the hyperdeterminant equation if there exists a set of solutions $x_{j,0}$ and $x_{j,1}$ for which rank of $F$ in (\ref{original_form2}) is at most $n-2$.
\end{lemma}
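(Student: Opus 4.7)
The plan is to combine the determinantal representation $f(X_1,\dots,X_n)=\det F$ given in Lemma \ref{lem:detformat} with the partial-derivative formulas (\ref{partial1})--(\ref{partial2}) in order to convert the analytic degeneracy condition (\ref{degeneracy}) into a rank condition on $F$. Recall that, by definition, the hyperdeterminant of the tensor $[m_{i_1,\ldots,i_n}]$ vanishes exactly when there is a non-trivial $(X_1,\ldots,X_n)$ at which all $2n$ first partial derivatives of $f$ vanish simultaneously.

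First I would observe that, by (\ref{partial1}), each partial $\frac{\partial \det F}{\partial x_{j,0}}$ is exactly the $(n-1)\times(n-1)$ minor $\det F_{\alpha_j,\alpha_j}$ of $F$. By (\ref{partial2}), each $\frac{\partial \det F}{\partial x_{j,1}}$ is a $\tilde{A}$-weighted signed sum of the $(n-1)\times(n-1)$ minors $\det F_{\alpha_j,\alpha_k}$ taken across $k=1,\ldots,n$. Hence every one of the $2n$ derivative equations in (\ref{degeneracy}) is a linear combination, with coefficients depending only on the entries of $\tilde{A}$, of the $(n-1)\times(n-1)$ minors of the single matrix $F$.

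Next I would invoke the standard fact that if $\operatorname{rank}(F)\leq n-2$, then every $(n-1)\times(n-1)$ minor of $F$ is zero. Under the hypothesis of the lemma, this is precisely what happens at the asserted values $(x_{j,0},x_{j,1})$, so all expressions $\det F_{\alpha_j,\alpha_k}$ vanish, and consequently every partial derivative of $\det F$ vanishes at that point. Since $f=\det F$, the values $X_j = (x_{j,0},x_{j,1})^t$ form a critical point of $f$, which by the definition (\ref{degeneracy}) forces the $2\times 2\times\cdots\times 2$ hyperdeterminant of $[m_{i_1,\ldots,i_n}]$ to vanish.

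The only subtle point is the non-triviality clause built into the definition (\ref{degeneracy}): each $X_j=(x_{j,0},x_{j,1})^t$ must be a non-zero vector so that $(X_1,\ldots,X_n)$ is a genuine degenerate point. If some $X_j$ were zero, the $j$-th row of $F$ in (\ref{original_form2}) would itself vanish, collapsing the rank drop to something trivial and providing no information about the $m_{i_1,\ldots,i_n}$. I would therefore read ``set of solutions $x_{j,0}$ and $x_{j,1}$'' in the lemma as implicitly asserting each $X_j\neq 0$; with that reading, no further work beyond the two observations above is required. The key conceptual step is thus the recognition that dropping $\operatorname{rank}(F)$ below $n-1$ simultaneously annihilates \emph{every} $(n-1)\times(n-1)$ minor of $F$, and these minors are exactly the building blocks of the partial derivatives of the multilinear form $f=\det F$.
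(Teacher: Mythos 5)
Your proof is correct and follows essentially the same route as the paper's: rank $\leq n-2$ kills all $(n-1)\times(n-1)$ minors of $F$, which by (\ref{partial1})--(\ref{partial2}) annihilates every partial derivative of $f=\det F$, yielding a degenerate point in the sense of (\ref{degeneracy}). Your explicit attention to the non-triviality of the $X_j$ is a welcome clarification of what the paper compresses into the phrase ``nontrivial set of solutions.''
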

\begin{proof}
If there exists a nontrivial set of solutions $x_{j,0}$ and $x_{j,1}$ such that $F$ has rank $n-2$ then both (\ref{partial1}) and (\ref{partial2}) vanish (because all the $(n-1)\times (n-1)$ minors also vanish). But the vanishing of (\ref{partial1}) and (\ref{partial2}) simply means that the minors of $\tilde{A}$ satisfy the $2\times2 \times \dots \times 2$ ($n$ times) hyperdeterminant.
\end{proof}


\begin{theorem}[hyperdeterminant and the principal minors]
  The principal minors of an $n \times n$ symmetric matrix $\tilde{A}$ satisfy the hyperdeterminants of the format $2\times 2 \ldots \times 2$ ($k$ times) for all $k\leq n$.
\label{thm:principal_satisfy_hyper}
\end{theorem}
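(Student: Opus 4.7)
First, the statement for general $k \leq n$ follows from the case $k = n$ applied to the $k \times k$ symmetric principal submatrix $\tilde A[\alpha]$ for each $k$-subset $\alpha \subseteq \{1,\ldots,n\}$, since the principal minors of $\tilde A$ indexed by subsets of $\alpha$ are exactly the principal minors of $\tilde A[\alpha]$. Thus it suffices to prove the $n$-fold hyperdeterminant relation for every $n \times n$ symmetric $\tilde A$. By Lemma \ref{lem:mrank} together with the determinantal formula of Lemma \ref{lem:detformat}, this further reduces to exhibiting, for each such $\tilde A$, a nontrivial choice of $X_j = (x_{j,0}, x_{j,1}) \in \mathbb{C}^2 \setminus \{0\}$, $j = 1, \ldots, n$, for which the matrix $F = \mathrm{diag}(x_{j,0}) + \mathrm{diag}(x_{j,1})\tilde A$ has rank at most $n-2$.

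The symmetry of $\tilde A$ enters at this stage. I would take $x_{j,1} = 1$ for all $j$, which makes the nontriviality of each $X_j$ automatic and turns $F = \tilde A + D$, with $D = \mathrm{diag}(x_{j,0})$, into a symmetric matrix, and then look for a diagonal $D$ such that $\tilde A + D$ has a two-dimensional kernel. Equivalently, I seek a symmetric matrix $B$ of rank at most $n-2$ whose off-diagonal entries coincide with those of $\tilde A$; setting $D := B - \tilde A$ then completes the construction. For $n = 3$ this is a direct rank-one completion $B = \lambda v v^t$, whose off-diagonal constraints $\lambda v_i v_j = \tilde a_{ij}$ are solvable in closed form whenever the off-diagonals are nonzero. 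For larger $n$ I would use a rank-$r$ completion $B = \sum_{l=1}^{r} v_l v_l^t$ with $r \leq n-2$; a dimension count, namely that the variety of symmetric $n \times n$ matrices of rank at most $n-2$ has codimension $3$, while the diagonal perturbations $\{\tilde A + D\}$ form an $n$-parameter affine family, predicts a nonempty intersection of expected dimension $n - 3 \geq 0$ for every $n \geq 3$.

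The main obstacle is showing this construction succeeds for \emph{every} symmetric $\tilde A$, not only for generic ones. In degenerate cases (several off-diagonals vanishing, say) the rank-one ansatz collapses, and permitting $D_1 \neq I$ to supply additional parameters is likely necessary, as a small $n = 3$ sub-case analysis with a vanishing off-diagonal already illustrates. The most conceptual workaround is to observe that the $n$-fold hyperdeterminant of the principal minors is a polynomial identity in the entries of $\tilde A$; verifying its vanishing on a Zariski-dense subset of symmetric matrices, namely those for which the rank-$r$ completion succeeds, forces the identity to hold for all symmetric matrices, completing the proof.
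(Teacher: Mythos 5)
Your reduction to the $k=n$ case, your use of Lemmas \ref{lem:detformat} and \ref{lem:mrank}, and your $n=3$ rank-one completion all match the paper's proof, and your observation that degenerate cases (vanishing off-diagonals) can be absorbed by a polynomial-identity/Zariski-density argument is a legitimate point the paper itself glosses over. The gap is in the passage from $n=3$ to general $n$. By normalizing $x_{j,1}=1$ for every $j$ you commit yourself to finding a diagonal $D$ with $\mathrm{rank}(\tilde A + D)\le n-2$, and your only evidence that such a $D$ exists is a dimension count: codimension $3$ for the determinantal variety against an $n$-parameter affine family of diagonal perturbations. Expected dimension does not give nonemptiness for intersections in affine space (here the intersection at infinity of the projectivized family with the determinantal variety is itself nonempty, e.g.\ at rank-one diagonal directions, so projective arguments say nothing about affine solutions), and it certainly does not show that the set of $\tilde A$ admitting such a completion is Zariski dense --- which is precisely the input your closing density argument requires. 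To complete your route you would need to prove, for instance, that the map sending an $n\times(n-2)$ matrix $V$ to the off-diagonal part of $VV^t$ is dominant; that is a genuine additional argument, not a count.

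The paper sidesteps all of this by dropping your normalization: degeneracy of the multilinear form only requires each $X_j=(x_{j,0},x_{j,1})$ to be a nonzero vector, not $x_{j,1}\neq 0$. Taking $X_j=(1,0)$ for all $j>3$ turns rows $4,\dots,n$ of $F$ into the standard basis vectors $e_j^t$, so $\mathrm{rank}(F)$ equals $(n-3)$ plus the rank of the $3\times 3$ block built from the leading $3\times 3$ principal submatrix of $\tilde A$; the $n=3$ solution forces that block to have rank $1$, giving $\mathrm{rank}(F)\le n-2$ for every $n$ with no completion problem to solve. I recommend adopting that choice; your Zariski-density remark then cleanly disposes of the remaining degenerate case where the explicit $n=3$ formulas have vanishing denominators.
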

\begin{proof}
It is sufficient to show that the minors satisfy the $2\times 2 \ldots \times 2$ (n times) hyperdeterminant. Recall that for the tensor of coefficients $a_{i_1,i_2,\ldots,i_n}$ in the multilinear form (\ref{multilinear}) to satisfy the hyperdeterminant relation, there must exist a non-trivial solution to make all the partial derivatives of $f$ with respect to its variables zero. Lemma (\ref{lem:mrank}) suggests that a set of nontrivial $x_{j,0}$ and $x_{j,1}$ for which rank of $F$ is at most $n-2$ would be sufficient. 
In the following we will show that one can always find a solution to make $\mbox{rank}(F) \leq n-2$.\\
First we find a non-trivial solution in the case of 3 variables and then extend it to the the case where there are $n$ variables.
For 3 variables, the matrix $F$ which is of the following form,
\bea
F = \left(\begin{array}{ccc} x_{1,0}+x_{1,1}\tilde{a}_{11} & x_{1,1}\tilde{a}_{12} & x_{1,1}\tilde{a}_{13} \\ x_{2,1}\tilde{a}_{12} & x_{2,0}+x_{2,1}\tilde{a}_{22} & x_{2,1}\tilde{a}_{23} \\ x_{3,1}\tilde{a}_{13} & x_{3,1}\tilde{a}_{23} & x_{3,0}+x_{3,1}\tilde{a}_{33}
\end{array}\right)
\eea
should be rank 1 or equivalently all the rows be multiples of one another. Enforcing this condition results in 3 equations for 6 unknowns. Therefore without loss of generality we let $x_{j,1}=1$. Making the rows of $F$ proportional, gives:
\bea
\frac{x_{1,0}+\tilde{a}_{11}}{\tilde{a}_{12}} &\hspace{-3pt}=&\hspace{-3pt} \frac{\tilde{a}_{12}}{x_{2,0}+\tilde{a}_{22}}\hspace{5pt} = \hspace{5pt} \frac{\tilde{a}_{13}}{\tilde{a}_{23}}\\
\frac{\tilde{a}_{13}}{\tilde{a}_{12}} &\hspace{-3pt}=&\hspace{-3pt} \frac{\tilde{a}_{23}}{x_{2,0}+\tilde{a}_{22}} \hspace{5pt}= \hspace{5pt}\frac{x_{3,0}+\tilde{a}_{33}}{\tilde{a}_{23}} 
\eea
If $\overline{x}_j = (x_{j,0}, x_{j,1})$, then the solution to the above equations is clearly as follows:
\begin{eqnarray}\label{solnfor3}
\nonumber \overline{x}_1=(\frac{\tilde{a}_{12}\tilde{a}_{13}-\tilde{a}_{11}\tilde{a}_{23}}{\tilde{a}_{23}},1)\\
\nonumber \overline{x}_2=(\frac{\tilde{a}_{23}\tilde{a}_{12}-\tilde{a}_{13}\tilde{a}_{22}}{\tilde{a}_{13}},1)\\
\overline{x}_3=(\frac{\tilde{a}_{13}\tilde{a}_{23}-\tilde{a}_{12}\tilde{a}_{33}}{\tilde{a}_{12}},1)
\end{eqnarray}
Now for the general case of $n$ variables, let $\overline{x}_1,\overline{x}_2,\overline{x}_3$ be as (\ref{solnfor3}) and for $j>3,~~ \overline{x}_j = (1,0)$. It can be easily checked that this solution makes the matrix $F$ of rank $n-2$ and therefore the principal minors satisfy the $2\times 2\times \ldots \times 2$ ($n$ times) hyperdeterminant. \footnote{Note that the solutions (\ref{solnfor3}) also appear in \cite{hyperdet} in an alternative proof of principal minors satisfying the hyperdeterminant relation.}
\end{proof}

\begin{notation} \label{note:equiAa}
Each element $m_{i_1i_2,\dots,i_n}$, where $i_k \in \{0,1\}$, can alternatively be represented as $m_{\alpha}, \alpha\subseteq \{1,\dots,n\}$ where $\alpha = \{k| i_k = 1\}$. For example, $m_{100} = m_1$ and $m_{011} = m_{23}$.
\end{notation}

Since based on Theroem \ref{thm:principal_satisfy_hyper}, the principal minors of a symmetric matrix denoted by $m_{i_1i_2,\dots,i_n}$ satisfy the hyperdeterminant, we may write the $2\times2\times2$ hyperdeterminant relation of (\ref{hyperdet_minor}) for the principal minors of a $3\times 3$ matrix.
Adopting notation \ref{note:equiAa} we obtain:
\bea \label{hyper_principal_expand}
\nonumber m_{\emptyset}^2m_{123}^2 + m_1^2m_{23}^2 + m_2^2m_{13}^2 + m_3^2m_{12}^2 + 4m_{\emptyset}m_{12}m_{13}m_{23} + 4 m_1 m_2 m_3 m_{123} - 2m_{\emptyset}m_1 m_{23}m_{123}\\
- 2m_{\emptyset}m_2 m_{13}m_{123} - 2 m_{\emptyset}m_3 m_{12}m_{123} - 2m_1m_2m_{13}m_{23} - 2 m_1m_3m_{12}m_{23} - 2m_2m_3m_{12}m_{13} = 0
\eea
Letting $m_{\emptyset} = 1$, the $2\times 2 \times 2$ hyperdeterminant can also be written as,
\bea \label{hyper_principal_compact}
(m_{123} - m_3 m_{12} - m_2 m_{13} - m_1 m_{23} + 2m_1m_2m_3)^2 = 4(m_1m_2-m_{12})(m_1m_3-m_{13})(m_2m_3-m_{23}) 
\eea

\section{Minimal number of conditions for the elements of a $(2^n-1)-$dimensional vector to be the principal minors of a symmetric $n\times n$ matrix for $n \geq 4$} \label{Gaussian_sec:minimal}

In order to determine whether a $2^n-1$ dimensional vector $g$ corresponds to the entropy vector of $n$ scalar jointly Gaussian random variables, one needs to check whether the vector $e^g$ corresponds to all the principal minors of a symmetric positive semi-definite matrix.
Define $A \triangleq e^g$ and let the elements of the vector $A \in \mathds{R}^{2^n-1}$ be denoted by $A_\alpha,~\alpha \subseteq \{1, \ldots, n \}, \alpha\neq \emptyset$. An interesting problem is to find the minimal set of conditions under which the vector $A$ can be considered as the vector of all principal minors of a symmetric $n \times n$ matrix. This problem is known as the ``principal minor assignment'' problem and has been addressed before in \cite{hyperdet,minor-assign}. In fact in a recent remarkable work, \cite{hyperdet} gives the set of necessary and sufficient conditions for this problem. Nonetheless it does not point out the minimal set of such necessary and sufficient equations. Instead \cite{hyperdet} is mainly interested in the generators of the prime ideal of all homogenous polynomial relations among the principal minors of an $n\times n$ symmetric matrix. Here we propose the minimal set of such conditions for $n\geq 4$.

Roughly speaking there are $2^n-1$ variables in the vector $A$ and only $\frac{n(n+1)}{2}$ parameters in a symmetric $n \times n$ matrix. Therefore if the elements of $A$ can be considered as the minors of a $n\times n$ symmetric matrix, one suspects that there should be $2^n-1-\frac{n(n+1)}{2}$ constraints on the elements of $A$. These constraints which can be translated to relations between the elements of the entropy vector arising from $n$ scalar Gaussian random variables, can be used as the starting point to determining the entropy region of $n \geq 4$ jointly
Gaussian scalar random variables.

We start this section by studying the entropy region of 4 jointly Gaussian random variables using the results of the hyperdeterminant already mentioned in the previous section and we shall explicitly state the sufficiency of 5 constraints among all the constraints given in \cite{hyperdet} by using a similar proof to \cite{hyperdet}; that for a given vector $A$, and under such constraints, one can construct the symmetric matrix $\tilde{A} = [\tilde{a}_{ij}]$ with the desired principle minors.
Later in this section we state such minimal number of conditions for a $2^n-1$ dimensional vector for $n\geq 4$.
Now define
\begin{equation}
\label{eqn:gA}
c_{ijk} \triangleq A_{ijk} - A_i A_{jk} - A_{j}A_{ik} - A_k A_{ij} + 2A_iA_jA_k
\end{equation}
\begin{theorem} \label{thm:minimal-nece-suff} 
Let $A$ be a 15 dimensional vector whose elements are indexed by non-empty subsets of $\{1,2,3,4\}$ and has the property that 
\be
\label{eqn:nondegen_4}
A_{ij}<A_iA_j,~~~\forall i,j \subseteq \{1,2,3,4\}.
\ee 
Then the minimal set of necessary and sufficient conditions for the elements of the vector $A$ to be the principal minors of a symmetric $4\times 4$ matrix consists of three hyperdeterminant equations (\ref{hyper1}--\ref{hyper3}), one consistency of the signs of $c_{ijk}$ (\ref{gconsistency}), and the determinant identity of the $4\times 4$ matrix (\ref{det4}):
\bea
&&\hspace{-20pt}\label{hyper1}c_{123}^2 = 4 (A_1 A_2 - A_{12})(A_2 A_3 - A_{23})(A_1 A_3 - A_{13})\\
&&\hspace{-20pt}\label{hyper2}c_{124}^2 = 4 (A_1 A_2 - A_{12})(A_2 A_4 - A_{24})(A_1 A_4 - A_{14})\\
&&\hspace{-20pt}\label{hyper3}c_{134}^2 = 4 (A_1 A_3 - A_{13})(A_3 A_4 - A_{34})(A_1 A_4 - A_{14})\\
&&\hspace{-20pt} c_{123}c_{124}c_{134}=  4 (A_1 A_2 -A_{12})(A_1A_3-A_{13})(A_1A_4-A_{14}) c_{234}  \label{gconsistency}\\
&&\hspace{-20pt}\nonumber A_{1234} = -\frac{1}{2} \hspace{-15pt}\sum_{ \substack { i',j'\in \{1,2,3\} \\ k',l' \in \{1,2,3,4\} \setminus \{i',j'\}}}\hspace{-15pt} \frac{c_{i'j'k'}c_{i'j'l'}}{A_{i'}A_{j'} - A_{i'j'}} + A_1 c_{234}+ A_2 c_{134}\\
&& \hspace{20pt} + A_3 c_{124} + A_4 c_{123}- 2 A_1 A_2 A_3 A_4+ A_{12}A_{34} + A_{13} A_{24} + A_{14}A_{23} \label{det4} 
\eea
\end{theorem}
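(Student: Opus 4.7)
The plan is to prove the theorem in the two directions (necessity and sufficiency) while simultaneously witnessing that no smaller set of equations can do the job.

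\smallskip
\textbf{Necessity.} Assume $A_\alpha$ are the principal minors of a symmetric $\tilde A = [\tilde a_{ij}]$ with $\tilde a_{ii}=A_i$. For any triple $\{i,j,k\}$, directly expanding the $3\times 3$ principal minor gives
\[
A_{ijk} = A_iA_jA_k + 2\tilde a_{ij}\tilde a_{ik}\tilde a_{jk} - A_i\tilde a_{jk}^2 - A_j\tilde a_{ik}^2 - A_k\tilde a_{ij}^2,
\]
and using $\tilde a_{ij}^2 = A_iA_j-A_{ij}$ yields the identity $c_{ijk}=2\tilde a_{ij}\tilde a_{ik}\tilde a_{jk}$. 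Squaring gives the three hyperdeterminant relations (\ref{hyper1})--(\ref{hyper3}) (also a special case of Theorem \ref{thm:principal_satisfy_hyper}). Multiplying the three expressions $c_{12k}=2\tilde a_{12}\tilde a_{1k}\tilde a_{2k}$ and similarly for $c_{124},c_{134}$, and substituting $\tilde a_{1k}^2=A_1A_k-A_{1k}$, yields (\ref{gconsistency}). Finally, (\ref{det4}) is a Schur-complement / cofactor identity for $\det\tilde A$: expanding $\det\tilde A$ along row $4$, substituting $\tilde a_{ij}^2=A_iA_j-A_{ij}$ for the squared off-diagonals and $c_{ijk}/(2\tilde a_{ij}\tilde a_{ik})$ for the remaining $\tilde a_{jk}$, collects into exactly the right-hand side of (\ref{det4}).

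\smallskip
\textbf{Sufficiency.} Given $A$ satisfying (\ref{eqn:nondegen_4}) and (\ref{hyper1})--(\ref{det4}), construct $\tilde A$ explicitly. Put $\tilde a_{ii}=A_i$, and choose the sign gauge
\[
\tilde a_{12}=\sqrt{A_1A_2-A_{12}},\quad \tilde a_{13}=\sqrt{A_1A_3-A_{13}},\quad \tilde a_{14}=\sqrt{A_1A_4-A_{14}},
\]
each positive (any other choice is obtained by conjugation with a diagonal $\pm1$ matrix, which preserves all principal minors). Define
\[
\tilde a_{23}=\frac{c_{123}}{2\tilde a_{12}\tilde a_{13}},\qquad \tilde a_{24}=\frac{c_{124}}{2\tilde a_{12}\tilde a_{14}},\qquad \tilde a_{34}=\frac{c_{134}}{2\tilde a_{13}\tilde a_{14}}.
\]
Relations (\ref{hyper1})--(\ref{hyper3}) immediately give $\tilde a_{jk}^2=A_jA_k-A_{jk}$ for $\{j,k\}\in\{\{2,3\},\{2,4\},\{3,4\}\}$, so all $2\times2$ minors match. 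By construction $2\tilde a_{ij}\tilde a_{ik}\tilde a_{jk}=c_{ijk}$ for $\{i,j,k\}\in\{\{1,2,3\},\{1,2,4\},\{1,3,4\}\}$, so $\det\tilde A_{[ijk]}=A_{ijk}$ for those three triples. For the last triple, compute
\[
2\tilde a_{23}\tilde a_{24}\tilde a_{34}=\frac{c_{123}c_{124}c_{134}}{4(A_1A_2-A_{12})(A_1A_3-A_{13})(A_1A_4-A_{14})},
\]
which equals $c_{234}$ by (\ref{gconsistency}); hence $\det\tilde A_{[234]}=A_{234}$. Finally, by the same cofactor calculation used in the necessity direction, $\det\tilde A$ equals the right-hand side of (\ref{det4}), which by assumption equals $A_{1234}$.

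\smallskip
\textbf{Minimality and main obstacle.} There are $2^4-1=15$ entries in $A$ and $\binom{4+1}{2}=10$ free parameters in a symmetric $4\times 4$ matrix, so any complete set of constraints on the image of the principal-minors map must contain at least $15-10=5$ independent equations; our list (\ref{hyper1})--(\ref{det4}) has exactly five. The anticipated obstacle is the bookkeeping in two places: (i) verifying algebraically that (\ref{det4}) is indeed what one obtains from $\det\tilde A$ after substituting the constructed off-diagonals — this is the longest calculation and is where the sums $\sum c_{i'j'k'}c_{i'j'l'}/(A_{i'}A_{j'}-A_{i'j'})$ arise naturally from $\tilde a_{jk}\tilde a_{jl} = c_{1jk}c_{1jl}/(4\tilde a_{1j}^2\tilde a_{1k}\tilde a_{1l})$; and (ii) confirming that (\ref{gconsistency}) rather than the full hyperdeterminant (\ref{hyper1}) for the triple $\{2,3,4\}$ suffices. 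The latter is handled by the identity: multiplying (\ref{hyper1})(\ref{hyper2})(\ref{hyper3}) and using the square of (\ref{gconsistency}) gives $c_{234}^2 = 4(A_2A_3-A_{23})(A_2A_4-A_{24})(A_3A_4-A_{34})$, and (\ref{gconsistency}) additionally fixes the sign, so the fourth hyperdeterminant is automatic under the nondegeneracy hypothesis (\ref{eqn:nondegen_4}).
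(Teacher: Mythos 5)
Your proposal is correct and follows essentially the same route as the paper's proof: necessity via the identity $c_{ijk}=2\tilde a_{ij}\tilde a_{ik}\tilde a_{jk}$ and the $4\times4$ determinant expansion, and sufficiency by fixing $\tilde a_{ii}=A_i$, gauging the first row positive by diagonal $\pm1$ conjugation, and determining the remaining off-diagonals so that the $2\times2$ and $3\times3$ minors match, with (\ref{gconsistency}) supplying the sign consistency for $c_{234}$ and (\ref{det4}) forcing $\det\tilde A=A_{1234}$. Writing $\tilde a_{jk}=c_{1jk}/(2\tilde a_{1j}\tilde a_{1k})$ explicitly rather than as a sign choice, and deriving the fourth hyperdeterminant relation from the product of the other three together with the square of (\ref{gconsistency}), are clean but equivalent reformulations of the paper's argument.
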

\vspace{10pt}
\begin{proof}
a) It is easy to show the necessity of equations (\ref{hyper1})--(\ref{det4}) and it was done in \cite{hyperdet}. Here we illustrate the method to make the paper self-contained. Note that if elements of the vector $A$ are the principal minors of a symmetric matrix then by Theorem \ref{thm:principal_satisfy_hyper} they satisfy the hyperdeterminant relations which from (\ref{hyper_principal_compact}) can be written as,
\bea \label{hyperA}
(A_{123}-A_3A_{12}-A_2A_{13}-A_1A_{23}+2A_1A_2A_3)^2 = 4(A_1A_2-A_{12})(A_1A_3-A_{13})(A_2A_3-A_{23})
\eea
Using the definition of $c_{ijk}$ in (\ref{eqn:gA}), equation (\ref{hyperA}) can be further written as:
\bea
c_{123}^2 = 4(A_1A_2-A_{12})(A_1A_3-A_{13})(A_2A_3-A_{23})
\eea 
Therefore equations (\ref{hyper1})--(\ref{hyper3}) simply represent the hyperdeterminant relations and hold by Theorem \ref{thm:principal_satisfy_hyper}. Moreover if $A_{ijk}$ is the principal minor of matrix $\tilde{A}$ obtained by choosing rows and columns $i$,$j$ and $k$, then writing $A_{ijk}$ in terms of the entries of $\tilde{A}$ gives,
\bea
\nonumber  A_{ijk} &  =& \tilde{a}_{ii}\tilde{a}_{jj}a_{kk} - \tilde{a}_{ii}\tilde{a}_{jk}^2 - \tilde{a}_{jj}\tilde{a}_{ik}^2 - \tilde{a}_{kk}\tilde{a}_{ij}^2 + 2 \tilde{a}_{ij}\tilde{a}_{jk}\tilde{a}_{ik}\\
 &=& -2A_iA_jA_k + A_i A_{jk} + A_{j}A_{ik} + A_k A_{ij} + 2\tilde{a}_{ij}\tilde{a}_{jk}\tilde{a}_{ik} \label{expand-det}
\eea
where since $A$ corresponds to principal minors of $\tilde{A}$, we have substituted for $\tilde{a}_{ii} = A_i$ and $\tilde{a}_{ij}^2 = A_iA_j - A_{ij}$. Rewriting (\ref{expand-det}) we obtain,
\bea
A_{ijk} - A_i A_{jk} - A_jA_{ik} - A_kA_{ij} + 2A_iA_jA_k = 2\tilde{a}_{ij}\tilde{a}_{jk}\tilde{a}_{ik}
\eea
which by comparison to (\ref{eqn:gA}) means that the following holds,
\bea
\label{gijk-atilde}
c_{ijk} = 2\tilde{a}_{ij}\tilde{a}_{jk}\tilde{a}_{ik}
\eea 
Therefore replacing for $c_{123}, c_{124}, c_{134}$ and $c_{234}$ from (\ref{gijk-atilde}) in (\ref{gconsistency}) and simplifying we obtain that (\ref{gconsistency}) holds trivially. The last condition (\ref{det4}) is also nothing but the expansion of the $4\times 4$ determinant of $\tilde{A}$ in terms of the entries of $\tilde{A}$ and replacing for them in terms of $c_{ijk}$ and lower order minors and therfore is a necessary condition.

b) Now we need to show the sufficiency of equations (\ref{hyper1})--(\ref{det4}). To do so, we assume that the given vector $A$ satisfies (\ref{hyper1})--(\ref{det4}) and we want to show that it is the principal minor vector of some symmetric matrix $\tilde{A}$. Hence we try to construct a matrix $\tilde{A}$ whose principal minors are given by the entries of vector $A$. First note that such a matrix $\tilde{A}$ should have $\tilde{a}_{ii} = A_i$ and $\tilde{a}_{ij}^2 = A_iA_j - A_{ij}$ (or equivalently $\tilde{a}_{ij} = \pm \sqrt{A_iA_j-A_{ij}}$). The only ambiguity in fully determining the entries of $\tilde{A}$ will therefore be the signs of the off-diagonal entries. 
To have the $3\times 3$ minors of $\tilde{A}$ also equal to the corresponding entries of vector $A$, we should have:
\bea
A_{ijk} &= &\det \left(\begin{array}{ccc} \tilde{a}_{ii} & \tilde{a}_{ij} & \tilde{a}_{ik} \\ \tilde{a}_{ij} & \tilde{a}_{jj} & \tilde{a}_{jk}\\ 
\tilde{a}_{ik} & \tilde{a}_{jk} & \tilde{a}_{kk}  \end{array}\right)  \\
&=&\tilde{a}_{ii}\tilde{a}_{jj}\tilde{a}_{kk} - \tilde{a}_{ii}\tilde{a}_{jk}^2 - \tilde{a}_{jj}\tilde{a}_{ik}^2 - \tilde{a}_{kk}\tilde{a}_{ij}^2 + 2 \tilde{a}_{ij}\tilde{a}_{jk}\tilde{a}_{ik}
\eea  
Replacing for values of $\tilde{a}_{ii}$ and $\tilde{a}_{ij}$ in terms of $A_i$ and $A_{ij}$ we obtain that,
\bea
A_{ijk} - A_iA_{jk} - A_j A_{ik} - A_kA_{ij} + 2A_iA_jA_k = 2\tilde{a}_{ij}\tilde{a}_{jk}\tilde{a}_{ik}
\eea
Therefore writing the condition for all $\{i,j,k\} \subseteq \{1,2,3,4\}$ we need to have
\bea
c_{123} = 2\tilde{a}_{12}\tilde{a}_{23}\tilde{a}_{13} \label{eqn:g123}\\
c_{124} = 2\tilde{a}_{12}\tilde{a}_{14}\tilde{a}_{24} \label{eqn:g124}\\
c_{134} = 2\tilde{a}_{13}\tilde{a}_{14}\tilde{a}_{34} \label{eqn:g134}\\
c_{234} = 2\tilde{a}_{23}\tilde{a}_{24}\tilde{a}_{34} \label{eqn:g234}
\eea
Note that the constraints (\ref{hyper1})--(\ref{gconsistency}) guarantee that $|c_{ijk}| = 2|\tilde{a}_{ij}\tilde{a}_{ik}\tilde{a}_{jk}|$. It remains to show that there is a consistent choice of signs for $\tilde{a}_{ij}$ such that the stronger condition $c_{ijk} = 2\tilde{a}_{ij}\tilde{a}_{jk}\tilde{a}_{jk}$ also holds. Note that without loss of generality we can assume that the off-diagonal entries on the first row, i.e., $\tilde{a}_{1j}$ are positive. This is due to the fact that we can use the transformation $D\tilde{A}D^{-1}$ where $D$ is a diagonal matrix with $\pm 1$ elements to make $\tilde{a}_{1j}$ positive without affecting any of the principal minors. Hence, assuming $\tilde{a}_{1j}$ are positive, the signs of $\tilde{a}_{23}, \tilde{a}_{24}$ and $\tilde{a}_{34}$ are determined from the signs of $c_{123}, c_{124}$ and $c_{134}$ in (\ref{eqn:g123}--\ref{eqn:g134}). However once the signs of $\tilde{a}_{23}, \tilde{a}_{24}$ and $\tilde{a}_{34}$ are determined, the sign of their product, i.e., $\tilde{a}_{23}\tilde{a}_{24}\tilde{a}_{34}$ should be the same as the sign of $c_{234}$ to satisfy (\ref{eqn:g234}). This is enforced by equation (\ref{gconsistency}). Therefore conditions (\ref{hyper1})--(\ref{gconsistency}) yield (\ref{eqn:g123})--(\ref{eqn:g234}). 
Note that due to property (\ref{eqn:nondegen_4}) none of the $\tilde{a}_{ij}$ are zero and hence all the above steps for sign choice are valid.
Finally a direct calculation of the $4\times 4$ determinant of $\tilde{A}$ shows that it can be expressed as the right-hand side of equation (\ref{det4}) in terms of lower order minors of $\tilde{A}$ which are equal to corresponding terms $A_\alpha, |\alpha|\leq 3$. Consequently equation (\ref{det4}) guarantees that the $4\times 4$ principal minor of $\tilde{A}$ is equal to $A_{1234}$. Again note that since property (\ref{eqn:nondegen_4}) holds, the denominator in (\ref{det4}) is nonzero and will not cause any problems. 
Therefore $\tilde{A}$ will be the matrix with principal minors given by vector $A$. 
%
\end{proof}


Using a similar approach which follows the proof methods of \cite{hyperdet} closely, we can write the set of minimal necessary and sufficient conditions for a $2^n-1$ dimensional vector to be the principal minors of a symmetric matrix.
\begin{theorem} \label{thm:minimal-necesuff-n}  
Let $A$ be a $2^n-1$ dimensional vector whose elements are indexed by non-empty subsets of $\{1,\dots,n\}$ (assume $A_{\emptyset} = 1$) and that it satisfies,
\be
\label{eqn:nondegen_n}
A_{\alpha}A_{i\cup j\cup\alpha} < A_{i\cup \alpha}A_{j\cup \alpha}~~~\forall i,j\in \{1,\dots,n\}~\alpha \subseteq \{1,\dots,n\}\setminus \{i,j\}
\ee
Then the necessary and sufficient conditions for a $2^n-1$ dimensional vector to be the principal minors of a symmetric $n\times n$ matrix consists of $2^n-1-\frac{n(n+1)}{2}$ equations and are as follows,
\bea
&& \forall j,k \subseteq\{2,\ldots,n\}~~c_{1jk}^2 = 4(A_1A_j-A_{1j})(A_1A_k-A_{1k})(A_jA_k-A_{jk}) \label{nece-suff1}\\
&& \forall i,j,k \subseteq\{2,\ldots,n\}~~c_{1ij}c_{1ik}c_{1jk} = 4(A_1A_i-A_{1i})(A_1A_j-A_{1j})(A_1A_k-A_{1k})c_{ijk} \label{nece-suff2}
\eea
Also $\forall \beta \subseteq\{1,\ldots,n\},|\beta|\geq4$ choose {\it one} set of $i,j,k,l \subseteq \beta$ such that $i<j<k<l$ and let $\alpha = \beta\backslash \{i,j,k,l\}$,
\bea
D_{ijkl}^\alpha = 0 \label{nece-suff3}
\eea
where $D_{ijkl}^\alpha$ is obtained from the following by replacing every $A_S, S\subseteq \{i,j,k,l\}$ by $\frac{A_{S \cup\alpha}}{A_\alpha}$.
\bea
\nonumber && \hspace{-20pt} D_{ijkl} = A_{ijkl} + \frac{1}{2} \hspace{-15pt}\sum_{ \substack { i',j'\in \{i,j,k\} \\ k',l' \in \{i,j,k,l\} \setminus \{i',j'\}}}\hspace{-15pt} \frac{c_{i'j'k'}c_{i'j'l'}}{A_{i'}A_{j'} - A_{i'j'}} - A_i c_{jkl}- A_j c_{ikl}\\
\nonumber && \hspace{20pt} - A_k c_{ijl} - A_l c_{ijk}+ 2 A_i A_j A_k A_l- A_{ij}A_{kl} \\
&& \label{det4-ijkl}  \hspace{20pt} - A_{ik} A_{jl} - A_{il}A_{jk} = 0
\eea
\end{theorem}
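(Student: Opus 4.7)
The plan is to mirror the $n=4$ proof of Theorem \ref{thm:minimal-nece-suff} and then extend it by induction on $|\beta|$ for the higher-order minors. For necessity, each equation (\ref{nece-suff1}) is simply the $2\times 2 \times 2$ hyperdeterminant of Theorem \ref{thm:principal_satisfy_hyper} applied to the $3\times 3$ submatrix indexed by $\{1,j,k\}$. For (\ref{nece-suff2}), I would use the identity $c_{ijk} = 2\tilde{a}_{ij}\tilde{a}_{ik}\tilde{a}_{jk}$ (derived as in the proof of Theorem \ref{thm:minimal-nece-suff}) and simply multiply the three instances with a common index $1$, obtaining $c_{1ij}c_{1ik}c_{1jk} = 8\tilde{a}_{1i}^2\tilde{a}_{1j}^2\tilde{a}_{1k}^2 \tilde{a}_{ij}\tilde{a}_{ik}\tilde{a}_{jk} = 4(A_1A_i-A_{1i})(A_1A_j-A_{1j})(A_1A_k-A_{1k}) \cdot c_{ijk}$. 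For (\ref{nece-suff3}), I would invoke the Schur complement identity $\det \tilde{A}_\beta/\det \tilde{A}_\alpha = \det (\tilde{A}/\tilde{A}_\alpha)|_{\{i,j,k,l\}}$ and observe that the principal minors of this $4\times 4$ symmetric Schur complement on any $T \subseteq \{i,j,k,l\}$ equal $A_{T\cup \alpha}/A_\alpha$, so the $n=4$ formula (\ref{det4-ijkl}) applied to the Schur complement gives exactly $D_{ijkl}^\alpha = 0$.

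For sufficiency, I would first reconstruct the matrix $\tilde{A}$ in the same way as in Theorem \ref{thm:minimal-nece-suff}: set $\tilde{a}_{ii} = A_i$ and $|\tilde{a}_{ij}|=\sqrt{A_iA_j-A_{ij}}$ (well-defined by (\ref{eqn:nondegen_n}) with $\alpha = \emptyset$), and normalize by a $\pm 1$ diagonal similarity so that $\tilde{a}_{1j}>0$ for $j\geq 2$. The hyperdeterminant equations (\ref{nece-suff1}) then uniquely pin down $\mathrm{sgn}(\tilde{a}_{jk})$ via $\mathrm{sgn}(c_{1jk})$ for $j,k\in\{2,\ldots,n\}$. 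The consistency equations (\ref{nece-suff2}) then guarantee that for every triple not containing $1$ the sign of $c_{ijk}$ matches the sign of $2\tilde{a}_{ij}\tilde{a}_{ik}\tilde{a}_{jk}$, so that $c_{ijk} = 2\tilde{a}_{ij}\tilde{a}_{ik}\tilde{a}_{jk}$ holds for \emph{all} triples. This already forces every $3\times 3$ principal minor of $\tilde{A}$ to equal the corresponding $A_{ijk}$; by construction, the $1\times 1$ and $2\times 2$ minors also match.

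For $|\beta|\geq 4$ I would induct on $|\beta|$. The base case $|\beta|=4$ is handled by (\ref{nece-suff3}) with $\alpha=\emptyset$, exactly as in the proof of Theorem \ref{thm:minimal-nece-suff}. For the inductive step assume $\det \tilde{A}_S = A_S$ for all $S\subsetneq \beta$. Let $\{i,j,k,l\}\subseteq \beta$ be the chosen quadruple and $\alpha=\beta\setminus\{i,j,k,l\}$. The Schur complement $\tilde{A}/\tilde{A}_\alpha$ restricted to the rows/columns in $\{i,j,k,l\}$ is a $4\times 4$ symmetric matrix whose principal minors on $T\subseteq\{i,j,k,l\}$ equal $\det\tilde{A}_{T\cup\alpha}/\det\tilde{A}_\alpha = A_{T\cup\alpha}/A_\alpha$ by the inductive hypothesis. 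Applying the $4\times 4$ determinant expansion (\ref{det4-ijkl}) to this Schur complement therefore yields a formula for its determinant purely in terms of the ratios $A_{S\cup\alpha}/A_\alpha$ with $S\subsetneq\{i,j,k,l\}$. The Schur complement identity then gives $\det\tilde{A}_\beta = \det\tilde{A}_\alpha\cdot[\text{that expansion}]$, and equation (\ref{nece-suff3}) says exactly that this quantity equals $A_\beta$. The non-degeneracy assumption (\ref{eqn:nondegen_n}) is used to keep all Schur complements and denominators well-defined throughout the induction.

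The main obstacle is the induction for $|\beta|\geq 5$: one must check that the choice of a \emph{single} quadruple $\{i,j,k,l\}\subseteq\beta$ per $\beta$ is enough, i.e.\ that no additional constraints arise from other quadruples. This is where the Schur complement reformulation is essential — once the lower-order minors are fixed (by induction) and $\tilde{A}$ has been built to be symmetric, the $4\times 4$ identity on the Schur complement is an actual algebraic identity that cannot fail for a different quadruple, so different choices of the quadruple give the same value for $A_\beta$. A routine verification of the equation count, $\binom{n-1}{2} + \binom{n-1}{3} + \sum_{k=4}^{n}\binom{n}{k} = 2^n - 1 - \frac{n(n+1)}{2}$ (which uses Pascal's identity $\binom{n}{3}=\binom{n-1}{2}+\binom{n-1}{3}$), confirms minimality in the dimensional sense.
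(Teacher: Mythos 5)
Your proposal is correct and follows essentially the same route as the paper's own proof: the same reconstruction of $\tilde{A}$ via $\tilde{a}_{ii}=A_i$, $\tilde{a}_{ij}=\pm\sqrt{A_iA_j-A_{ij}}$ with signs pinned down by $c_{1jk}$ and propagated to all triples via (\ref{nece-suff2}), and the same Schur-complement identity $\det(\tilde{A}_{[\alpha/\beta]})_{[S]}=A_{S\cup\alpha}/A_\alpha$ to handle the minors of order four and above, together with the identical binomial count for minimality. Your only addition is to make the induction on $|\beta|$ and the independence from the choice of quadruple explicit, which the paper leaves implicit but which is the same underlying argument.
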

\begin{proof}
The proof is essentially the same as the proof technique of \cite{hyperdet} and is a generalization of Theorem \ref{thm:minimal-nece-suff} to a $2^n-1$ dimensional vector. However we would like to highlight why (\ref{nece-suff1})--(\ref{nece-suff3}) is the {\it minimal} set of necessary and sufficient conditions among all conditions given in \cite{hyperdet}. First consider the necessitiy of the equations:

a) Showing the necessity of equations (\ref{nece-suff1})--(\ref{nece-suff3}) is strightforward. In fact if the elements of vector $A$ are the principal minors of a symmetric matrix $\tilde{A}$, then $\tilde{a}_{ii} = A_i$ and $\tilde{a}_{ij}^2 = A_iA_j-A_{ij}$.
Furthermore it can be shown (similar to the proof of Theorem \ref{thm:minimal-nece-suff}) that 
$c_{ijk} = 2\tilde{a}_{ij}\tilde{a}_{ik}\tilde{a}_{jk}$ from which it follows that (\ref{nece-suff1})--(\ref{nece-suff2}) hold. Note that $D_{ijkl} = 0$ in equation (\ref{det4-ijkl}) gives $A_{ijkl}$ in terms of lower order minors (compare to equation (\ref{det4})). Now consider a submatrix of $\tilde{A}$ whose rows and column are indexed by $\beta\subseteq \{1,\dots,n\},~|\beta|\geq 4$ and denote it by $\tilde{A}_{[\beta]}$. For $\{i,j,k,l\}\subseteq \beta$ let $\alpha = \beta  \setminus \{i,j,k,l\}$ and likewise let the submatrix with rows and columns indexed by $\alpha$ be shown by $\tilde{A}_{[\alpha]}$. Further denote the Schur complement of $\tilde{A}_{[\alpha]}$ in $\tilde{A}_{[\beta]}$ by $\tilde{A}_{[\alpha/\beta]}$. Note that $\tilde{A}_{[\alpha/\beta]}$ is a $4\times 4$ matrix whose determinant can be obtained via the rule of equation (\ref{det4-ijkl}). The property of Schur complement yields, \bea
\label{eqn:schurprop}
\det \left(\tilde{A}_{[\alpha/\beta]}\right)_{[S]}  = \frac{\det \tilde{A}_{[S\cup \alpha]}}{\det \tilde{A}_{[\alpha]} } = \frac{A_{S\cup \alpha}}{A_\alpha}  ,~\forall S\subseteq \{i,j,k,l\}
\eea
Therefore wiritng the determinant for the $4\times 4$ matrix $\tilde{A}_{[\alpha/\beta]}$ and using (\ref{eqn:schurprop}), gives equation (\ref{nece-suff3}). \footnote{Note that we can assume $g_\alpha \neq -\infty$ and simply avoid $A_{\alpha}= 0$.}

b) To show the sufficiency we show that if a given vector $A$ satisfies equations (\ref{nece-suff1})--(\ref{nece-suff3}) then we can construct a symmetric matrix whose principal minors are given by $A$. As we did in Theorem \ref{thm:minimal-nece-suff} such a matrix $\tilde{A}$ should have entries $\tilde{a}_{ii} = A_{i}$ and $\tilde{a}_{ij}^2 = A_iA_j - A_{ij}$ (or equivalently $\tilde{a}_{ij} = \pm \sqrt{A_iA_j-A_{ij}}$). Therefore it remains to choose the signs of the off-diagonal entries in a consistent fashion so that all the minors of $\tilde{A}$ will correspond to $A$.

For $3\times 3$ minors of $\tilde{A}$ to comply with $A_{ijk}$, note that as was obtained in Theorem \ref{thm:minimal-nece-suff}, we need to have $\forall \{i,j,k\} \subseteq \{1,\dots,n\},~ 2\tilde{a}_{ij}\tilde{a}_{jk}\tilde{a}_{ik} = c_{ijk}$. Note that equations (\ref{nece-suff1})--(\ref{nece-suff3}) give that $|c_{ijk}| = 2|\tilde{a}_{ij}\tilde{a}_{jk}\tilde{a}_{ik}|$ for all $\{i,j,k\}\subseteq \{1,\dots,n\}$. Again similar to Thereom \ref{thm:minimal-nece-suff}, we may assume that all the off-diagonal terms in the first row are positive since we can always use the transformation $D\tilde{A}D^{-1}$ where $D$ is a diagonal matrix with $\pm 1$ elements to make $\tilde{a}_{1j}$ positive. Therefore assuming $\tilde{a}_{1j}$ are positive, we can choose the sign of all off-diagonal terms $\tilde{a}_{jk}, ~j,k\in \{2,\dots,n\}$ such that they have the same sign as $c_{1jk}$. This way we guarantee $2\tilde{a}_{1j} \tilde{a}_{1k} \tilde{a}_{jk} = c_{1jk}$. However note that all the signs of all entries of $\tilde{A}$ are now fixed and therefore we need to make sure the remaining conditions $c_{ijk} = 2\tilde{a}_{ij} \tilde{a}_{ik} \tilde{a}_{jk}$ for all $\{i,j,k\}\subseteq \{2,\dots ,n\}$ are also satisfied. However since $c_{1jk} = 2\tilde{a}_{1j} \tilde{a}_{1k} \tilde{a}_{jk}$ and $A$ satisfies the constraint (\ref{nece-suff2}) as well, $c_{ijk} = 2\tilde{a}_{ij} \tilde{a}_{ik} \tilde{a}_{jk}$ for $\{i,j,k\} \subseteq \{2,\dots,n\}$ follows immediately. Therefore equations (\ref{nece-suff1}) and (\ref{nece-suff2}) guarantee the equality of $3\times 3$ minors of $\tilde{A}$ with the corresponding entries of $A$. Note that property (\ref{eqn:nondegen_n}) assures that none of the $\tilde{a}_{ij}$ are zero and hence all the above steps for determining the sign will be valid.

Now we need to prove the equality of all minors of size $4\times 4$ and bigger of $\tilde{A}$ with the relative entries of $A$. This is enforced by condition (\ref{nece-suff3}). 
To see the reason, replace each term $A_\gamma,~\gamma\subseteq \{1,\dots,n\}$ in (\ref{nece-suff3}) by $\det \tilde{A}_{[\gamma]}$. Then as we saw in part (a), the resulting equation describes $\det \tilde{A}_{[\{ijkl\}\cup \alpha]} = \det \tilde{A}_{[\beta]}$ in terms of lower order minors which are already guaranteed to be equal to the corresponding entries of $A$. 
Therefore condition (\ref{nece-suff3}) is enforcing $\det \tilde{A}_{[\beta]} = A_\beta$. 
Note that for each $\beta$ only 1 equation of type (\ref{nece-suff3}) is required. Moreover due to property (\ref{eqn:nondegen_n}) the denominators in (\ref{nece-suff3}) obtained from (\ref{det4-ijkl}) will be nonzero and will not cause any problems. 

Finally note that there are ${n-1 \choose 2}$ number of constraints of type (\ref{nece-suff1}), ${n-1 \choose 3}$ of type (\ref{nece-suff2}) and $\sum_{m = 4}^n {n \choose m}$ of type (\ref{nece-suff3}) which sums up to $2^n - 1 - \frac{n(n+1)}{2}$. This is the number that we expect noting that there are only $\frac{n(n+1)}{2}$ free parameters in a symmetric matrix while the vector of principal minors is of size $2^n-1$.
\end{proof}

\begin{corollary}
In Theorem \ref{thm:minimal-necesuff-n} if we insist that for all $\alpha \subseteq \{1,\ldots,n\},~A_\alpha\geq 0$ and substitute each $A_\alpha$ by $e^{g_\alpha}$ in (\ref{nece-suff1})-(\ref{nece-suff3}) then (\ref{nece-suff1})-(\ref{nece-suff3}) give the necessary and sufficient conditions for a $2^n-1$ dimensional vector to correspond to the entropies of $n$ scalar jointly Gaussian random variables. 
\end{corollary}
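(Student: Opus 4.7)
The plan is to reduce this corollary to a direct application of Theorem \ref{thm:minimal-necesuff-n} together with Sylvester's criterion for positive definiteness. The key bridge is the relation $e^{g_\alpha} = \det R_\alpha$, which for scalar Gaussians with covariance $R \in \mathbb{R}^{n\times n}$ follows from equation (\ref{defg}) with $T=1$. Thus, characterizing entropy vectors $g$ of $n$ scalar jointly Gaussian random variables is equivalent to characterizing which $2^n-1$ dimensional vectors $A$ of positive reals can arise as the collection of principal minors of an $n\times n$ \emph{positive definite} symmetric matrix. The substitution $A_\alpha = e^{g_\alpha}$ automatically enforces $A_\alpha > 0$ for every $\alpha$.

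For necessity, I would argue that if $g$ arises as the entropy vector of $n$ scalar jointly Gaussian random variables with covariance matrix $R$, then $R$ is symmetric and (for the differential entropies to be finite) positive definite, so in particular every principal minor satisfies $A_\alpha = \det R_\alpha > 0$. Since $A$ is then precisely the principal minor vector of the symmetric matrix $R$, Theorem \ref{thm:minimal-necesuff-n} directly yields that $A$ satisfies (\ref{nece-suff1})--(\ref{nece-suff3}), which in terms of $g = \log A$ is exactly what the corollary claims. The non-degeneracy hypothesis (\ref{eqn:nondegen_n}) translates to $g_\alpha + g_{i\cup j\cup\alpha} < g_{i\cup\alpha} + g_{j\cup\alpha}$ (strict submodularity), which is a generic condition inherited from the strict Koteljanskii inequality for positive definite matrices whose relevant principal submatrices have no degenerate conditional independence structure.

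For sufficiency, given $g$ with $A_\alpha := e^{g_\alpha} > 0$ satisfying (\ref{nece-suff1})--(\ref{nece-suff3}), Theorem \ref{thm:minimal-necesuff-n} produces a symmetric matrix $\tilde A$ whose principal minors are exactly the entries of $A$. Since every principal minor is strictly positive, Sylvester's criterion guarantees that $\tilde A$ is positive definite, hence a bona fide covariance matrix. The Gaussian random variables $X_1,\ldots,X_n \sim \mathcal{N}(0,\tilde A)$ then have $\det \tilde A_\alpha = A_\alpha = e^{g_\alpha}$ for every nonempty $\alpha\subseteq\{1,\ldots,n\}$, so $g$ is their entropy vector in the $g$-normalization of (\ref{defg}). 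The count of $2^n-1-\tfrac{n(n+1)}{2}$ conditions matches the codimension of the Gaussian entropy locus, reflecting that a symmetric matrix has only $\tfrac{n(n+1)}{2}$ free parameters.

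The main obstacle, which is also present already in Theorem \ref{thm:minimal-necesuff-n}, is the non-degeneracy condition (\ref{eqn:nondegen_n}). In the Gaussian setting this condition excludes vectors on the boundary where some conditional mutual information vanishes (i.e., some $\tilde a_{ij\mid \alpha}=0$), because then the sign-choice argument in the proof of Theorem \ref{thm:minimal-necesuff-n} breaks down. Consequently the corollary, as stated, characterizes the \emph{interior} of the scalar Gaussian entropy region; the full (closed) region must be obtained by taking closures and handling the degenerate strata separately (e.g.\ by embedding them as limits of non-degenerate configurations). A careful proof should therefore state the correspondence inside the regime defined by (\ref{eqn:nondegen_n}) and then invoke a continuity/limit argument to cover the boundary.
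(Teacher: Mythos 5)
Your proof is correct and follows exactly the route the paper intends: the paper states this corollary without proof as an immediate consequence of Theorem \ref{thm:minimal-necesuff-n}, and you supply the two standard bridging facts, namely $e^{g_\alpha}=\det R_\alpha$ from (\ref{defg}) with $T=1$ and Sylvester's criterion to upgrade the symmetric matrix produced by the theorem to a genuine (positive definite) covariance matrix. Your closing caveat --- that the hypothesis (\ref{eqn:nondegen_n}) restricts the characterization to strictly submodular entropy vectors, so the degenerate boundary strata require a separate closure/limit argument --- is accurate and is a point the paper's statement glosses over.
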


{\em Remark:} Note that in order to characterize the entropy region of scalar Gaussian random variables what one really needs is the {\it convex hull} of all such entropy vectors. After all if we only wanted to determine whether 7 numbers correspond to the entropy vector of 3 scalar-valued jointly Gaussian random variables we could simply check whether they satisfy the hyperdeterminant relation (\ref{hyper_principal_compact}). However it is the convex hull which is more interesting, and more cumbersome to calculate, and this is what we addressed for 3 random variables in Section \ref{Gaussian_sec:main3}. 

\section{Discussions and Conclusions} \label{Gaussian_sec:conclusion} 

In this paper, we studied the entropy region of jointly Gaussian random variables as an interesting subclass of continuous random variables. In particular we determined that the whole entropy region of 3 arbitrary continuous random variables can be obtained from the convex cone of the entropy region of 3 scalar-valued jointly Gaussian random variables. We also gave the representation of the entropy region of 3 vector-valued jointly Gaussian random variables through a conjecture. 

We should remark that, in general, to characterize the entropy region of Gaussian random variables one should consider vector-valued random variables which is probably more complex than the case of scalars. In Section \ref{Gaussian_sec:main3} we showed that, for $n=3$, the vector-valued random variables do not result in a bigger region than the convex hull of scalar ones.
However in general it is not known whether the entropy region of $n$ vector-valued jointly Gaussian random variables is greater than the convex hull of the entropy region of scalar valued Gaussians.

For $n\geq 4$ we explicitly stated the set of $2^n-1 - \frac{n(n+1)}{2}$ constraints that a $2^n-1$ dimensional vector should satisfy in order to correspond to the entropy vector (equivalently the vector of all principal minors) of $n$ {\it scalar} jointly Gaussian random variables. Although these conditions allow one to check whether a real vector of $2^n-1$ numbers corresponds to an entropy vector of $n$ scalar jointly Gaussian random variables, they do not reveal if such given vector of $2^n-1$ real numbers corresponds to the entropy vector of vector-valued Gaussian random variables or if it lies in the {\it convex hull} of scalar Gaussian entropy vectors. Answering these question requires one to study the region of {\it vector valued} jointly Gaussian random variables and this is what we addressed in Section III for 3 random variables. Obtaining the entropy region of vector-valued Gaussians seems to be rather complicated for $n\geq 4$ and as a satrting point one may instead focus on the convex hull of scalar Gaussians which is essentially the convex hull of vectors satisfying constraints (\ref{nece-suff1})--(\ref{nece-suff3}).   
Studying such a convex hull has an interesting connection to the concept of an ``amoeba'' in algebraic geometry. 
The ``amoeba'' of a polynomial $f(x_1,\ldots,x_k) = \sum_i q_i x_1^{p_{1i}}\ldots x_k^{p_{ki}}$ is defined as the image of $f = 0$ in $\mathds{R}^k$ under the mapping $(x_1,\ldots,x_k)\mapsto (\log |x_1|,\ldots,\log |x_k|)$ \cite{gelfand}. It turns out that many properties of amoebas can be deduced from the Newton polytope of $f$ which is defined as the convex hull of the exponent vectors $(p_{1i},\ldots,p_{ki})$ in $\mathds{R}^k$ (see, e.g., \cite{PR-amoeba-newton}). In terms of our problem of interest, the scalar Gaussian entropy points are the intersection of the amoebas associated to polynomials (\ref{nece-suff1})-(\ref{nece-suff3}) and one should look for the convex hull of the locus of these intersection points. If we allow the notion of amoeba to be defined as the $\log$ mapping for any function (not just polynomials), then one could also formulate our problem of interest as the convex hull of the amoeba of the algebraic variety obtained from the intersection of (\ref{nece-suff1})-(\ref{nece-suff3}).


Finally in characterizing the entropy region of Gaussian random variables for $n\geq 4$, we noted the important role of the  hyperdeterminant and 
with this viewpoint we also examined the hyperdeterminant relations. In particular by giving a determinant formula for a multilinear form, we gave a transparent proof that the hyperdeterminant relation is satisfied by the principal minors of an $n\times n$ symmetric matrix. Moreover we also obtained a determinant form for the $2\times 2\times 2$ hyperdeterminant which might be extendible to higher order formats and is an interesting problem even in its own right.


\newpage

\section*{Appendix\\ Proof of Lemma \ref{lem:f}}
\label{sec:App}

{\em Proof:} 
We will first show that $\forall i,j,~f(\delta) \leq x_i x_j$. Let,
\bea
e(\delta) = -2+\sum_{l=1}^3x_l^{{\delta}}+ 2\sqrt{\prod_{l=1}^3(1-x_l^{{\delta}})}
\eea
For distinct $i,j,k \subseteq \{1,2,3\}$, this can also be written as,
\bea
e(\delta) &=& (x_i x_j)^{{\delta}} - \left( (1-x_i^{{\delta}})(1-x_j^{\delta}) + (1-x_k^{{\delta}}) - 2 \sqrt{(1-x_i^{{\delta}})(1-x_j^{{\delta}})(1-x_k^{{\delta}})} \right)\\
&=& (x_i x_j)^{{\delta}} - \left(\sqrt{(1-x_i^{{\delta}})(1-x_j^{{\delta}})}-\sqrt{1-x_k^{{\delta}}} \right)^2
\eea
which shows $e(\delta) \leq (x_i x_j)^{{\delta}}$ and therefore for all $\delta \geq 0$, $f(\delta) \leq x_ix_j$  with equality if and only if $(1-x_i^{{\delta}})(1-x_j^{{\delta}}) = 1- x_k ^{\delta}$ or equivalently,
\be
(x_i x_j)^{{\delta}} + x_k^{{\delta}} - x_i^{{\delta}}- x_j^{{\delta}} = 0
\ee

Note that this is only possible when $x_ix_j = \frac{x_1x_2 x_3}{\max_l x_l}$. Without loss of generality assume, $x_1\leq x_2\leq x_3$, and define,
\be
y(\delta) = (x_1 x_2)^{\delta} + x_3^{{\delta}} - x_1^{{\delta}} - x_2^{{\delta}}
\ee
Clearly zeros of $y({\delta})$ determine the global maximums of $f(\delta)$ (i.e., $f(\delta) = x_1x_2$). Therefore we analyze the behavior of $y(\delta)$ in the following scenarios (based on the assumption $x_1\leq x_2 \leq x_3$):
\begin{enumerate}
\item $x_1<x_2<x_3<1$:\\
Note that for any number $0<x<1$, when $\delta\rightarrow 0$, we have the approximation, $x^{\delta} \approx 1+\delta \log x $. Therefore as $\delta \rightarrow 0^+$, we obtain,
\be
y(\delta) \approx \delta \log x_3 < 0 
\ee 
On the other hand when $\delta \rightarrow \infty$, we have $y(\delta) \approx x_3 ^{\delta}  > 0$. Therefore $y(\delta)$ has at least one zero (which is not at origin). In fact we now show that it has exactly one zero. Let $a(\delta) = \frac{y(\delta)}{x_3^{\delta}} = 1 + (\frac{x_1x_2}{x_3})^{\delta} -(\frac{x_1}{x_3})^{\delta} - (\frac{x_2}{x_3})^{\delta}$. Then zeros of $y(\delta)$ and $a(\delta)$ match except possibly at infinity. Therefore we can equivalently determine the zeros of $a(\delta)$. To do so, we further define $b(\delta) = (\frac{x_3}{x_2})^{\delta}\cdot \frac{d a(\delta)}{d\delta} = x_1^{\delta}\log(\frac{x_1x_2}{x_3}) - (\frac{x_1}{x_2})^{\delta} \log(\frac{x_1}{x_3}) - \log(\frac{x_2}{x_3})$. Note that as $\delta\rightarrow 0^{+}$, $b(0^{+}) \approx \log x_3 <0$. Moreover as $\delta \rightarrow \infty$ we obtain $b(\infty) \approx \log(\frac{x_3}{x_2}) >0$. On the other hand obtaining the derivative of $b({\delta})$ gives,
\be
\frac{d b(\delta)}{d\delta} =  {x_1}^{\delta}\log (x_1) \log \left(\frac{x_1x_2}{x_3}\right) - \left(\frac{x_1}{x_2}\right)^{\delta}\log\left(\frac{x_1}{x_2}\right) \log\left(\frac{x_1}{x_3}\right) 
\ee
which has a unique zero $\delta^*$ at,
\be
x_2^{\delta^*} = \frac{\log\left(\frac{x_1}{x_2}\right) \log\left(\frac{x_1}{x_3}\right) }{ \log (x_1) \log \left(\frac{x_1x_2}{x_3}\right) }
\ee
Calculating $\frac{d^2 b(\delta^*)}{d\delta^2}$ shows that $b(\delta)$ has a maximum at $\delta^*$. 
Noting that derivative of $b(\delta)$ is defined for all $\delta$ yet it is only zero at $\delta^*$ and that $b(0^+) < 0 $ and $b(\infty)>0$, we conclude that $b(\delta)$ has exactly one zero at some $0<\hat{\delta}<\delta^*$.  Since we defined $b(\delta) =  (\frac{x_3}{x_2})^{\delta}\cdot \frac{d a(\delta)}{d\delta}$, we obtain that $\frac{d a(\delta)}{d\delta}$ has also exactly one zero at $\hat{\delta}$ (and is also possibly zero at infinity). Moreover we have $\delta\rightarrow 0^+$, $a(0^+) \approx \delta\log x_3 <0$, $\frac{da}{d\delta}(0^+) \approx \log x_3 < 0$ and $a(\infty) \approx 1$. Since $a(\delta)$ is also everywhere differentiable we deduce that $a(\delta)$ has exactly one zero (that is not at origin) at some ${\delta_0}$ where ${\delta_0}>\hat{\delta}>0$. Finally going back to $y(\delta)$, we obtain that $y(\delta)$ starts at origin, i.e., $y(0) = 0$ however with a negative slope $\frac{dy}{d\delta}(0^+) = \log x_3 <0 $. It has a unique zero at $\delta_0 >0$ and it approaches the $\delta$-axis again at infinity with a positive sign, i.e., $y(\delta) \approx x_3^\delta \rightarrow 0^+$ as $\delta\rightarrow \infty$. Therefore we have shown that $y(\delta)$'s behavior is as the one depicted in Fig. \ref{Gaussian_fig:fydelta}(a). 

Note that since the zeros of $y(\delta)$ determine the global maximums of $f(\delta)$, to determinne where the global maximums of $f(\delta)$ occur, we need to calculate $f(\delta)$ at $0, \infty$ and $\delta_0$. First for $\delta\rightarrow 0^+$ we have,
\be
f(\delta) \approx (1+\delta \log (x_1x_2x_3)) ^{\frac{1}{\delta}}\rightarrow x_1x_2x_3
\ee
Moreover for $\delta \rightarrow \infty$ we have $(1-x^{\delta})^{\frac{1}{2}} \approx 1-\frac{1}{2}x^\delta-\frac{1}{8}x^{2\delta}$ for $0<x<1$. Therefore we can write,
\bea
e(\delta) &\approx& -2+\sum_l x_l^{\delta} +2\left(1-\frac{1}{2}x_1^\delta-\frac{1}{8}x_1^{2\delta}\right)\left(1-\frac{1}{2}x_2^\delta-\frac{1}{8}x_2^{2\delta}\right)\left(1-\frac{1}{2}x_3^\delta-\frac{1}{8}x_3^{2\delta}\right)\\
&\approx& \frac{1}{2}(x_1x_2)^\delta +\frac{1}{2}(x_1x_3)^\delta + \frac{1}{2}(x_2x_3)^\delta - \frac{1}{4} x_1^{2\delta} -\frac{1}{4} x_2^{2\delta} -\frac{1}{4} x_3^{2\delta}   
\label{eqn:edeltataylor}
\eea
Since $\delta\rightarrow \infty$ and $x_1<x_2<x_3$, the term $x_3^{2\delta}$ will be the dominant term and we will have,
\be
e(\delta) \approx -\frac{1}{4} x_3^{2\delta} <0
\ee
Therefore as $\delta\rightarrow \infty$ we obtain $f(\delta) = (~max (0,e(\delta))~)^{\frac{1}{\delta}} = 0$. As a result $f(\delta)$ has a unique global maximizer at $\delta_0$ where $f(\delta_0) = x_1x_2$. Note in Fig. \ref{Gaussian_fig:fydelta}(a) that the zero of $y(\delta)$ coincides with the maximum of $f(\delta)$. 

\begin{figure}[t]
\centering
\scalebox{0.8}{\includegraphics{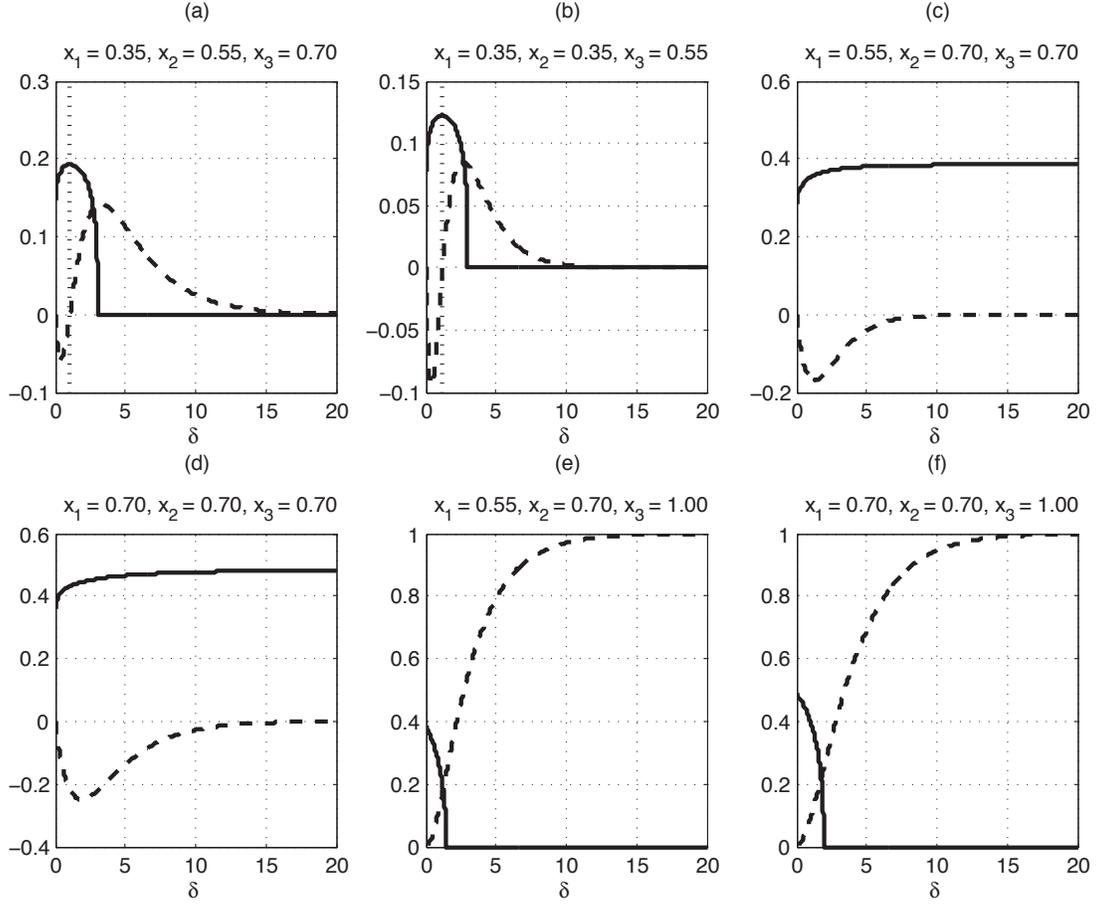}}
\caption{Functions $f(\delta)$ and $y(\delta)$ versus $\delta$. The solid line shows function $f(\delta)$ and the dashed line shows function $y(\delta)$.}
\label{Gaussian_fig:fydelta}
\end{figure}

\item $x_1=x_2<x_3<1$:

In this case we have $y(\delta) = x_1^{2\delta} + x_3^{\delta} - 2x_1^{\delta} $. Similar to the previous case we define $a(\delta) = \frac{y(\delta)}{x_3^{\delta}} = \left(\frac{x_1^2}{x_3}\right)^{\delta} + 1 - 2\left(\frac{x_1}{x_3}\right)^{\delta}$ and analyze the zeros of $a(\delta)$. Note that $\frac{da(\delta)}{d\delta} =  \left(\frac{{x_1}^2}{x_3}\right)^{\delta} \log \left(\frac{{x_1}^2}{x_3}\right) - 2 \left(\frac{x_1}{x_3}\right)^{\delta} \log \left(\frac{x_1}{x_3}\right)$ which has a unique zero at the point $\hat{\delta}$ given by,
\be
x_1^{\hat{\delta}} = \frac{ 2\log \frac{x_1}{x_3} }{  \log \frac{x_1^2}{x_3}}
\ee
Since $a(\delta) \approx \delta \log x_3 < 0 $ as $\delta \rightarrow 0^+$ and $a(\infty) \approx 1$ and again $a(\delta)$ is everywhere differnetiable, we obtain that $a(\delta)$ has exactly one zero (that is not at origin) at some $\delta_0$ where $\delta_0>\hat{\delta} >0$. Therefore $y(\delta)$ has also exactly one zero (that is not at origin) at $\delta_0$. Again we have $y(\delta) \approx \delta \log x_3 < 0 $ as $\delta \rightarrow 0^+$ and $y(\infty) \approx x_3^{\delta}>0$ and hence its behavior is again as the one depicted in Fig. \ref{Gaussian_fig:fydelta}(b). 

The analysis of $e(\delta)$ and $f(\delta)$ is similar to the last case. In particular we have, $f(0^+) \approx x_1^2x_3 $ and $e(\infty) \approx -\frac{1}{4}x_3^{2\delta}<0$ giving $f(\delta) = 0$ for $\delta \rightarrow \infty $. Hence $f(\delta)$ has a unique maximizer at $\delta_0$ such that $f(\delta_0) = x_1^2$. Again it is evident from Fig. \ref{Gaussian_fig:fydelta}(b) that the zero of $y(\delta)$ coincides with the maximum of $f(\delta)$.

\item $x_1<x_2=x_3<1$:

In this case $y(\delta) = (x_1x_3)^{\delta} - x_1^{\delta} \leq 0$. Calculating $\frac{dy(\delta)}{d\delta} = (x_1x_3)^{\delta} \log (x_1x_3) - x_1^{\delta} \log x_1 $ which has a unique zero at $\hat{\delta}$ given by,
\be
x_3^{\hat{\delta}} = \frac{ \log x_1}{ \log (x_1x_3)  }
\ee
Moreover as $\delta\rightarrow 0^+$ we have $y(\delta) \approx \delta \log x_3 \rightarrow 0^-$ and as $\delta \rightarrow \infty$, $y(\delta) \approx -x_1^\delta \rightarrow 0^-$. Therefore the behavior of $y(\delta)$ in this case is as shown in Fig. \ref{Gaussian_fig:fydelta}(c). 

The analysis of $e(\delta)$ and $f(\delta)$ is again similar to the previous 2 cases. However note that in this case the only zero of $y(\delta)$ is at origin and as a result we only need to consider the value of $f(\delta)$ at $0$ and infinity. Following a similar procedure as in the previous cases, we obtain $f(0^+) \approx x_1x_3^2$ and for $\delta \rightarrow \infty$ by replacing $x_2=x_3$ in (\ref{eqn:edeltataylor}) we get,
\be
e(\delta) \approx (x_1x_3)^\delta - \frac{1}{4}x_1^{2\delta} \approx (x_1x_3)^\delta
\ee
Therefore $f(\infty) \approx x_1x_3$, i.e., $f(\delta)$ approaches its global maximum at infinity.

\item $x_1=x_2=x_3<1$:

In this case, $y(\delta) = x_3^{2\delta} - x_3^{\delta}  \leq 0$ and we have $\frac{dy(\delta)}{d\delta} = 2x_3^{2\delta}\log x_3 - x_3^{\delta}\log x_3$ which again has a unique zero at $\hat{\delta}$ given by, $x_3^{\hat{\delta}} = \frac{1}{2} $. As in the previous case, $y(0^+) \approx \delta \log x_3 \rightarrow 0^-$ and $y(\infty) \approx -x_3^{\delta} \rightarrow 0^-$ and $y(\delta)$ behaves as in the previous case (Fig. \ref{Gaussian_fig:fydelta}(d)).

Since zeros of $y(\delta)$ happen at $0$ and infinity, we only need to calculate the value of $f(\delta)$ at 0 and infinity. For $\delta\rightarrow 0^+$ we have, $f(0^+)\approx x_3^3$. On the other hand by replacing $x_1=x_2=x_3$ in (\ref{eqn:edeltataylor}) we obtain as $\delta \rightarrow \infty$,
\be
e(\delta)\approx \frac{3}{4}x_3^{2\delta}
\ee
This yields $f(\delta) \approx \left(\frac{3}{4}\right)^\frac{1}{\delta} x_3^2 \approx x_3^{2}$ as $\delta \rightarrow \infty$. As a result $f(\delta)$ again approaches its global maximum at infinity.

\item $x_1<x_2 < x_3 = 1$:

In this case $y(\delta)$ simplifies to $y(\delta) = (x_1x_2)^{\delta} + 1 - x_1^{\delta} - x_2^{\delta} = (1-x_1^{\delta})(1-x_2^{\delta}) \geq 0$. Note that $y(0) = 0, y(\infty) \rightarrow 1$ and $y(\delta)$ is an increasing function. The behavior of $y(\delta)$ is shown in Fig. \ref{Gaussian_fig:fydelta}(e). 

Since the zero of $y(\delta)$ occurs at zero, we only need to evaluate $f(\delta)$ at zero for which we obtain $e(0^+)\approx 1+\delta \log (x_1x_2)$ and therefore $f(0^+) \approx x_1x_2$, i.e., the global maximum of $f(\delta)$ is at zero. 

\item $x_1 = x_2 <x_3 =1$:

In this case we have $y(\delta) = x_1^{2\delta} + 1 - 2x_1^{\delta} = (1-x_1^{\delta})^2 \geq 0$. Behavior of $y(\delta)$ is shown in Fig. \ref{Gaussian_fig:fydelta}(f).

For this scenario again we only need to care for $f(0)$ which can be easily obtained to be $f(0^+)\approx x_1^2$ which is the global maximum.

\item $x_1 \leq x_2 = x_3 = 1$:

Here we obtain $y(\delta) = 0$ a constant function. To evaluate $f(\delta)$ we do not need to use $y(\delta)$ in this case. In fact we have $f(\delta) = x_1$ which is a constant function as well and equal to its global maximum everywhere.

\end{enumerate}

Thus far we showed that $f(\delta)$ (except for the case when $x_1\leq x_2=x_3=1$ and $f(\delta)$ is a constant) it has a unique global maximizer at which $f(\delta) = \min_{i\neq j} x_ix_j$. Moreover in all these cases, if for some $\delta>0$ we have $y(\delta)\leq0$ it can be seen that maximum of $f(\delta)$ occurs for some $\delta_0>\delta$. Noting that $\tilde{y}$ as defined in the statement of the theorem is in fact $y(1)$, it immediately follows that if $\tilde{y}\leq0$ then $f(\delta)$ attains its global maximum for some $\delta\geq 1$.   
\hfill \qed



%



\bibliographystyle{IEEEbib}
\bibliography{ref}

\end{document}